\documentclass[a4paper]{article}
\usepackage[dvipsnames]{xcolor}
\usepackage[utf8]{inputenc}
\usepackage[english]{babel}
\usepackage{amsmath}
\usepackage{amssymb}
\usepackage{amsthm}
\usepackage{comment}
\usepackage{hyperref}
\usepackage{multicol}
\usepackage{tikz}
\usetikzlibrary{shapes,arrows}
\usepackage[nottoc]{tocbibind}
\usepackage{graphicx}
\usepackage{bbold}
\usepackage{csquotes}

\usepackage{enumitem}
\usepackage[T1]{fontenc}
\usepackage{lmodern}

\usepackage[
  a4paper,
  textwidth=380pt,
  textheight=632pt,
  heightrounded,
  hcentering,
  vcentering
]{geometry}
\usepackage{fancyhdr}
\usepackage{mathrsfs}
\usepackage[normalem]{ulem}

\usepackage{import}

\usepackage[capitalize]{cleveref}

\usepackage[doi=false,isbn=false, url=false, date=year, backend=biber,style=numeric,
  ]{biblatex} 
\renewbibmacro*{journal+issuetitle}{%
  \usebibmacro{journal}%
  \setunit*{\addspace}%
  \iffieldundef{series}
    {}
    {\newunit
     \printfield{series}%
     \setunit{\addspace}}%
  \usebibmacro{volume+number+eid}%
  \setunit{\bibpagespunct}%
  \printfield{pages}%
  \setunit{\addspace}%
  \usebibmacro{issue+date}%
  \setunit{\addcolon\space}%
  \usebibmacro{issue}%
  \newunit}

\renewbibmacro*{note+pages}{%
  \printfield{note}%
  \newunit}

\DeclareMathOperator{\aut}{Aut}

\newcommand{\T}{\mathbb{T}}
\newcommand{\E}{\mathbb{E}}
\newcommand{\tr}{\mathrm{tr}}
\newcommand{\e}{\mathrm{e}}
\newcommand{\D}{\mathrm{d}}

\renewcommand{\i}{\mathrm{i}}
\renewcommand{\Phi}{\varPhi}
\renewcommand{\Psi}{\varPsi}
\renewcommand{\Sigma}{\varSigma}
\newcommand{\epsi}{\varepsilon}
\newcommand{\fd}{r}

\newcommand{\g}{\gamma}
\newcommand{\La}{\Lambda}

\newcommand{\w}{\omega}
\newcommand{\N}{\mathbb{N}}
\newcommand{\Z}{\mathbb{Z}}
\newcommand{\C}{\mathbb{C}}
\newcommand{\Q}{\mathbb{Q}}

\newcommand{\mA}{\mathcal{A}}
\newcommand{\R}{\mathbb{R}}
\newcommand{\eps}{\varepsilon}

\newcommand{\OD}[1][]{   ^{\mathrm{OD}#1}   }
\newcommand{\ODT}[1][]{   ^{\widetilde{\mathrm{OD}}#1}   }
\newcommand{\mL}{\mathcal{L}}
\newcommand{\mP}{\mathcal{P}}
\newcommand{\dd}{\mathrm{d}}

\usepackage{mathtools}

\DeclarePairedDelimiter\norm{\lVert}{\rVert}
\DeclarePairedDelimiter\abs{\lvert}{\rvert}
\DeclarePairedDelimiter\br{\lparen}{\rparen}

\usepackage{thmtools} 
\usepackage{thm-restate}

\theoremstyle{plain}
\newtheorem{theorem}{Theorem}[section]
\newtheorem{lemma}[theorem]{Lemma}
\newtheorem{proposition}[theorem]{Proposition}

\theoremstyle{definition}
\newtheorem{definition}[theorem]{Definition}

\theoremstyle{remark}
\newtheorem{remark}[theorem]{Remark}

\newcommand\numberthis{\addtocounter{equation}{1}\tag{\theequation}}

\usepackage{colonequals}

\title{The Colored Hofstadter Butterfly as\\ a Many-Body Quantum Hall Phase Diagram}
\author{
    Giovanna Marcelli%
    \texorpdfstring{%
        \footnote{
            \parbox[t]{.8\textwidth}{
                \foreignlanguage{italian}{Dipartimento di Matematica e Fisica, Università di Roma Tre,\\ L.go S. L. Murialdo 1, 00146 Roma,} Italy
            }
        }
    }{}%
    \and Tadahiro Miyao%
    \texorpdfstring{%
        \footnote{
            \parbox[t]{.8\textwidth}{
                \foreignlanguage{italian}{Department of Mathematics,  Hokkaido University,
                 Sapporo 060-0810,} Japan
            }
        }
    }{}%
    \and Domenico Monaco%
    \texorpdfstring{%
        \footnote{
            \parbox[t]{.8\textwidth}{
                \foreignlanguage{italian}{Dipartimento di Matematica, 
                Sapienza Università di Roma,\\ Piazzale Aldo Moro 5, 00185 Roma, Italy}
            }
        }
    }{}%
    \and Stefan Teufel%
    \texorpdfstring{%
        \footnotemark[4]
    }{}%
    \and
     Marius Wesle%
    \texorpdfstring{\footnote{\parbox[t]{.8\textwidth}{
                \foreignlanguage{ngerman}{Fachbereich Mathematik, 
                Universität Tübingen,
                Auf~der~Morgenstelle~10,\\ 72076~Tübingen,} Germany
            }
        }
    }{}%
}

\date{\today}

\usepackage{microtype}
\begin{document}

\maketitle

\begin{abstract}

We prove that the colored Hofstadter butterfly has a many-body interpretation for a broad class of weakly interacting lattice fermion systems. Starting from a spectral gap of a Hofstadter-like one-particle Hamiltonian at arbitrary magnetic flux $b$, we construct an open region in the three-dimensional parameter space $(b,\mu,\lambda)$ of magnetic field, chemical potential, and interaction strength on which the infinite-volume interacting system has locally unique gapped ground states. The construction combines quasi-adiabatic continuation in the interaction strength with denominator-independent magnetic perturbation estimates, and therefore covers both commensurate and incommensurate fluxes, where no finite magnetic unit cell exists. On connected uniformly gapped regions meeting the non-interacting plane $\lambda=0$, we prove a many-body gap-labeling theorem: the Hall conductivity appearing in the macroscopic Ohm's law is constant and quantized, satisfying $2\pi\sigma^{\mathrm H}\in\mathbb Z$. Thus the integer colors of the non-interacting Hofstadter butterfly persist as Hall-conductivity labels of interacting quantum Hall phases.
\end{abstract}

\section{Introduction}

The colored Hofstadter butterfly is the paradigmatic phase diagram of the integer quantum Hall effect for non-interacting lattice electrons with one-body Hamiltonian $\mathfrak{h}^b$ in a magnetic field $b$ at chemical potential $\mu$. The phases are the connected components of
\[
\{(b,\mu)\in\R^2\;|\; \mu\notin\sigma(\mathfrak h^b)\},
\]
which are open sets as the map $b\mapsto \sigma(\mathfrak{h}^b)$ is  continuous in the Hausdorff metric.
The color of such a component encodes an integer label $2\pi\sigma^{\mathrm H}$, which for commensurate magnetic fields $b\in2\pi\Q$ can be computed as the Chern number of a complex vector bundle. The physical meaning of 
$\sigma^\mathrm{H}$ is the Hall conductivity within that phase. The study of these and more general models for non-interacting electrons has led to deep and beautiful mathematics, drawing from spectral analysis, topology, non-commutative geometry, number theory, and dynamical systems theory.
A longstanding and in parts still open problem is whether this  labeled phase diagram retains a many-body meaning.
This is far from obvious   as  interactions destroy large parts of the structures basic to the analysis in the non-interacting case.
The question is whether the gapped regions nevertheless persist for interacting systems, and whether their integer labels still give the Hall conductivity appearing in a macroscopic Ohm's law.

We prove that the answer is  positive for a broad class of Hofstadter-like lattice fermion systems on the lattice $\Z^2$ with short-range interactions. More precisely, for every $(b_0,\mu_0)$ with $\mu_0\notin\sigma(\mathfrak h^{b_0})$, we construct an open neighborhood in the three-dimensional parameter space $(b,\mu,\lambda)$ of magnetic field, chemical potential, and interaction strength on which the interacting system has locally unique gapped ground states. On connected uniformly gapped regions meeting the non-interacting plane $\lambda=0$, the macroscopic Hall conductivity is constant and satisfies $2\pi\sigma^{\mathrm H}\in\Z$.

A central difficulty, and one of the technical novelties of the paper, is that the magnetic field is treated as a real parameter. The fractal structure of the Hofstadter butterfly stems from the interplay between two spatial scales and the relevant dimensionless parameter is the magnetic flux per unit cell in units of the flux quantum, namely $b/(2\pi)$.  If $b/(2\pi)=p/q\in\Q$, the system is periodic with respect to the sublattice $\Z\times q\Z\subset \Z^2$, while if $b/(2\pi)\notin\Q$, no finite magnetic unit cell exists. As a consequence, for  $b/(2\pi)\notin\Q$ no exact periodizations of the model on   finite tori exist. To overcome this difficulty, we work directly in the infinite-volume CAR algebra $\mA$ and obtain estimates that are uniform in the denominators of rational approximants, allowing us to pass to arbitrary fluxes.

\begin{figure}[ht]
\centering
\includegraphics[width=0.95\textwidth]{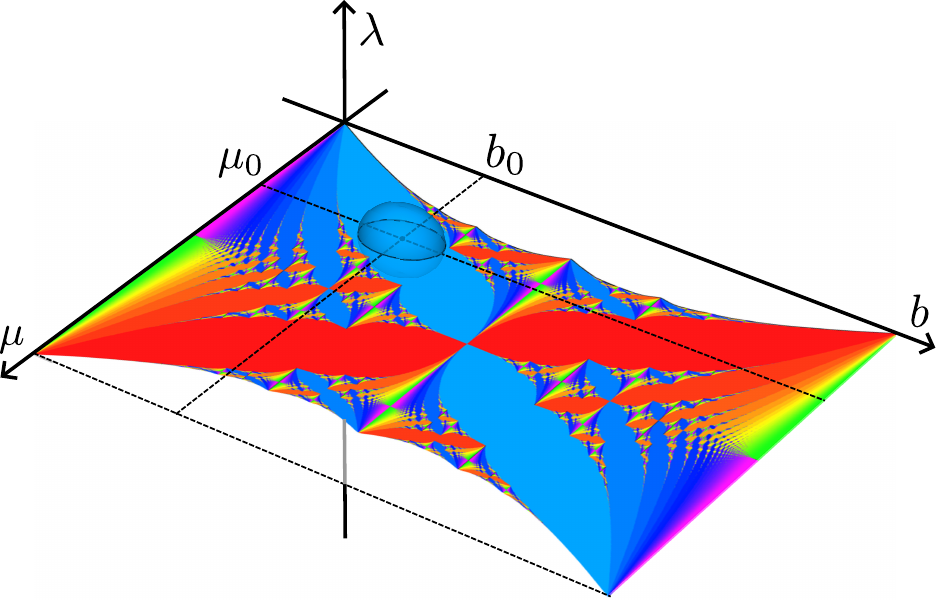}
\caption{The colored Hofstadter butterfly as a phase diagram for interacting systems. 
The colors in the $b$-$\mu$ plane indicate the integer gap labels of the spectral gaps of the non-interacting model, $\lambda=0$, cf.~\cite[Fig.~1]{osadchy2001hofstadter}. 
We prove that these gapped non-interacting regions persist for sufficiently weak interactions, $\lambda\neq 0$, and that the macroscopic Hall conductivity is constant throughout the corresponding interacting gapped phases. This is indicated by the local extension of the phase near $(b_0,\mu_0,0)$ into the interacting region~$\lambda\not=0$.
}\label{3Dbuttefly}
\end{figure}

More concretely, we study  interacting fermionic tight-binding models in a magnetic field $b\in\R$, implemented through Peierls phases; we call these systems interacting Hofstadter-like models. 
We first prove stability of the gapped phases of the non-interacting models under sufficiently weak interactions and then establish a gap-labeling theorem for integer quantum Hall phases in terms of Hall-conductivity gap labels. 
Our results give the colored Hofstadter butterfly a many-body interpretation: its gapped regions persist as interacting gapped phases, and its integer colors become macroscopic Hall-conductivity labels, see Figure~\ref{3Dbuttefly}.
In each gapped   phase connected to a Bloch--Landau level gap\footnote{We shall refer to the spectral islands obtained by the magnetic splitting of Bloch bands as
\emph{Bloch--Landau levels}. This terminology reflects the semiclassical picture: near a
non-degenerate extremum of a Bloch band and at small magnetic field, these islands are the
lattice analogues of continuum Landau levels, with effective mass determined by the curvature
of the underlying Bloch dispersion. Unlike continuum Landau levels, however, they need not be
flat; in the Hofstadter problem they are spectral islands separated by magnetic gaps.} of the non-interacting model, a \emph{macroscopic Ohm's law} holds with a \emph{quantized Hall conductivity} ($2\pi\sigma^{\mathrm H}\in\Z$) that is constant throughout the phase.  This extends known universality results for the Hall conductivity in several directions, most notably by treating $b$ explicitly as a continuous parameter and by identifying the gap label with the proportionality constant in Ohm's law for macroscopic currents, rather than merely with the associated linear-response coefficient.

Technically, our work builds on \cite{WMMMT24}, where we established a macroscopic Ohm's law for gapped lattice fermion systems in infinite volume and proved invariance of the Hall conductivity under locally generated automorphic equivalence. The present paper combines this with spectral stability of weakly interacting fermions \cite{de2019persistence,hastings2019stability,GMP2017} and with denominator-independent approximations of magnetic Hamiltonians. For fixed $b$, the interacting ground state is obtained from the non-interacting one by quasi-adiabatic continuation in $\lambda$. The dependence on $b$ is handled instead through magnetic perturbation theory and  rational approximants,   rather than by constructing an automorphic evolution in the direction of varying magnetic field.

Before turning to more technical details, let us recall that the non-interacting theory underlying the Hofstadter butterfly began 50 years ago and has led to deep developments in spectral theory, topology, and noncommutative geometry.
  The spectral study of discrete magnetic one-body Schr\"odinger operators began with Hofstadter's observation \cite{hofstadter1976energy} that the spectrum, plotted as a function of the magnetic field, displays a fractal structure. Shortly after the discovery of the integer quantum Hall effect \cite{klitzing1980new}, the TKNN formula identified the linear-response Hall conductivity at rational flux with the Chern number of a vector bundle over the magnetic Brillouin torus \cite{TKNN,avron1983homotopy,niu1985quantized}. The colors of the Hofstadter butterfly encode these integer Hall labels \cite{dana1985quantised,osadchy2001hofstadter}; this colored version and its interpretation as a phase diagram were emphasized in \cite{osadchy2001hofstadter}. For irrational flux and disordered systems, noncommutative geometry and gap-labelling methods were developed as a  replacement for Bloch theory \cite{bellissard1992gap,bellissard1994noncommutative}. 
  The spectral theory at incommensurate flux has also been studied by dynamical-systems methods, culminating in the resolution of the famous Ten Martini Problem in~\cite{avila2009ten}.

In the last decade there has been also substantial progress on interacting quantum Hall systems.
Integer quantization of the many-body Hall conductance has been proved by flux-threading and quasi-adiabatic methods under suitable finite-volume gap and uniqueness assumptions
\cite{hastings2015quantization,bachmann2018quantization}, and  infinite-volume results for weakly interacting periodic systems using renormalization group methods were obtained in \cite{GMP2017}. 
These works do not, however, establish the persistence of Hofstadter gapped regions under interactions with estimates uniform in the denominator $q$ of the rational flux $b/(2\pi)=p/q$, nor do they construct arbitrary-flux interacting Hofstadter phases carrying the Hall labels of the colored butterfly. This is the point addressed here.

This first brief overview has, of necessity, left out a number of important results; we will comment on the literature in a bit more detail at the end of the introduction. Before doing so, let us provide some further details on our result and our method.
In \cite{WMMMT24} we proved a macroscopic Ohm's law for general translation-invariant fermion systems on $\Z^d$ with short-range interactions and initially in a locally unique gapped ground state $\w_0$ using the NEASS approach:
The results of  \cite{BTW26adiabatic} (see also \cite{Teufel2020,HenheikTeufel2022}) imply that 
when adiabatically switching on a constant external electric field of strength $\epsi$, the system is driven into 
 a non-equilibrium almost stationary state (NEASS) denoted by $\w_\epsi$.
Macroscopic Ohm's law\footnote{In high-precision quantum Hall experiments the measured quantity is not the linear-response coefficient, that is the conductance in the limit of vanishing current. Instead, large Hall currents and voltages are required to keep relative errors in their quotient small. 
It is precisely Ohm's law that connects the measured conductance with the theoretical linear-response coefficient.
Mathematically,   the difference between Ohm's law and linear response is $\mathcal{O}(\epsi^\infty)$ versus  $o(\epsi)$ in \eqref{OLintro}.}
is the statement that the current density in this NEASS   is nearly linear in the applied field strength,
\begin{equation}\label{OLintro}
\overline{\w_\epsi}(J^\mathrm{H}) = \epsi\, \sigma^\mathrm{H} + \mathcal{O}(\epsi^\infty)\,.
\end{equation}
Here the notation $\overline{\w_\epsi}(J^\mathrm{H})$ indicates that an average of an extensive observable is taken,
\[
\overline{\w_\epsi}(J^\mathrm{H}) := \lim_{\Lambda\nearrow\Z^2} \,\frac{\w_\epsi(J^\mathrm{H}|_{\Lambda})}{|\Lambda|}\,,
\]
where $J^\mathrm{H}$ is the current operator perpendicular to the applied electric field. This is why we speak of `macroscopic' currents and `macroscopic'  conductivity.
This macroscopic Hall conductivity $\sigma^\mathrm{H}$ can be computed from $\w_0$ as follows, 
\begin{equation}\label{CHintro}
\sigma^\mathrm{H} =\overline{\w_0} \left(\i \left[X_2\OD ,X_1\OD \right]\right) \,. 
\end{equation}
Details on the definitions of the off-diagonal position operators $X_i\OD$ are given below. In the present paper we show that these results apply to   interacting Hofstadter-like models with Hamiltonians of the form
\[
H_{(b,\mu,\lambda)} := \mathrm{d}\Gamma(\mathfrak{h}^b - \mu \, \mathbf{1}) + \lambda V^b\,,
\]
where $\mathfrak{h}^b$ is a short-range periodic one-body operator with magnetic field $b$ implemented through Peierls phases, $\mu\in\R$ is the chemical potential, and $\lambda\in\R$ controls the strength of the many-body interactions. Our main result   states that whenever $\mu_0 \notin\sigma(\mathfrak{h}^{b_0})$, then there exist $\delta, \lambda_0>0$ such that for all $(b,\mu,\lambda)$ in  the cylinder   $B_\delta(b_0,\mu_0)\times (-\lambda_0,\lambda_0)$
around $(b_0,\mu_0)$ in three-dimensional parameter space
\begin{itemize}
    \item[(i)] $H_{(b,\mu,\lambda)}$ has a locally unique gapped ground state $\w_{(b,\mu,\lambda)}$ and thus, according to \cite{WMMMT24},   Ohm's law~\eqref{OLintro} holds and the Hall conductivity $\sigma^\mathrm{H}_{(b,\mu,\lambda)}$ is given by \eqref{CHintro};
    \item[(ii)]
    the Hall conductivity is quantized and constant, i.e.\ $\sigma^\mathrm{H}_{(b,\mu,\lambda)}\equiv \sigma^\mathrm{H} \in \frac{1}{2\pi}\Z$.
\end{itemize}
The situation is depicted in Figure~\ref{3Dbuttefly} for the weakly-interacting Hofstadter model. Indeed, (ii) holds not only on $B_\delta(b_0,\mu_0)\times (-\lambda_0,\lambda_0)$, but  on a large class of connected sets in parameter space on which the spectral gap does not close and that intersect the $\lambda=0$ plane.

We briefly place our results in the broader context of quantum Hall theory. 
More than forty years after its discovery, the quantum Hall effect, both integer \cite{vonKlitzing1986quantized} and fractional \cite{stormer1999fractional}, continues to pose stimulating theoretical challenges and remains an active area of research. The integer quantum Hall effect, with which the present work is concerned, occupies a central place in modern condensed matter theory: it was the first experimentally observed manifestation of topologically quantized transport, and 
the quantum Hall resistance provides a primary realization of the ohm ($\Omega$)  with exceptional accuracy \cite{klitzing1980new, he2022accurate}.
The quantum Hall effect is also one of the most striking phenomena in which genuine quantum behavior becomes visible  at macroscopic scales. 
Several complementary theoretical frameworks have been developed to explain this phenomenon from first principles: the Chern-number formula of Thouless--Kohmoto--Nightingale--den Nijs and its extensions, Laughlin's flux-insertion argument, edge-state and Landauer--B\"uttiker approaches, Fr{\"o}hlich's gauge-principle approach based on current algebra and  effective field theory, and Bellissard's noncommutative-geometric/Fredholm-index formulation
\cite{TKNN,avron1983homotopy,niu1985quantized,halperin1982quantized,laughlin1981quantized,avron1985quantization,buttiker1988absence,frohlich1993gauge,bellissard1994noncommutative}.
Within mathematical physics, much of the rigorous literature focuses on the linear-response coefficient, corresponding to the conductance in the limit of vanishing electric field. In high-precision quantum Hall experiments, however, one measures macroscopic currents and voltages. The bridge between the two regimes is Ohm's law.

We close with two further comments.
\begin{itemize}
    \item
    The results on microscopic Ohm's laws proved in \cite{klein1990power,bachmann2021exactness,TW2025}  extend to the class of models considered here by arguments similar to those developed in the present paper. 
Indeed, for \cite{TW2025} this is evident and the methods developed here should also allow to  verify   the kind of uniform gap assumption along a pumping cycle required in \cite{klein1990power,bachmann2021exactness}.
\item 
Tracking the ground-state evolution along changes of the magnetic field strength does not, in general, fit the standard setup of automorphic equivalence \cite{BMNS12}, not even in the non-interacting regime (cf.~\cite{cornean2021beyond}). 
The obstruction is that the charge density of a gapped ground state can vary with $b$, whereas locally generated automorphisms generated by bounded gauge-invariant interactions preserve the charge density. 
In Remark~\ref{autoRemark} we briefly discuss a possible enlargement of the notion of automorphic equivalence using spatially unbounded generators. Existence of dynamics for such generators has recently been established in~\cite{TWW26}.
\end{itemize}

\paragraph{Structure of the paper.}

In Section~\ref{sec:MR}, we present the class of models analyzed in this paper, namely interacting Hofstadter-like models, and we formulate precisely our main results: Theorem~\ref{prop:Hofstadter}, concerning the existence of interacting gapped ground states, whose proof is presented in Section~\ref{sec:Hofstadter}; and Theorem~\ref{cor:universality}, showing the constancy (and \text{a posteriori} quantization) of the Hall conductivity within these gapped phases, whose proof is presented in Section~\ref{sec:universality}. The appendices are devoted to recalling the mathematical framework to describe quantum systems of interacting fermions on a lattice by means of an algebra of (quasi-)local observables (Appendix~\ref{sec:basic}), and to collect some technical tools which are used in the proofs of the main results (Appendix~\ref{appendix_technical}).

\paragraph{Acknowledgments.}
This work was funded by the \foreignlanguage{ngerman}{Deutsche Forschungsgemeinschaft} (DFG, German Research Foundation) – 470903074; 465199066. 
G.~M.\ gratefully acknowledges financial support from the European Research Council through the ERC CoG UniCoSM, grant agreement n.724939. 
T.~M.~was supported by JSPS KAKENHI Grant Numbers 20KK0304 and 23H01086. 
D.~M.~gratefully acknowledges financial support from Sapienza Universit\`{a} di Roma within Progetto di Ricerca di Ateneo 2023, 2024 and~2025. 
S.~T.~thanks the Mathematics Department at Sapienza Universit\`{a} di Roma for their hospitality and financial support during a two-month research visit.

\section{Model and results} \label{sec:MR}

In order to keep the present section concise, we have moved some basic notions, together with several more specialized technical definitions, to Appendix~\ref{sec:basic}. These will be used freely throughout the section.

\subsection{Interacting Hofstadter-like Hamiltonians} \label{sec:models}

We consider periodic, infinitely extended systems of interacting fermions in two spatial dimensions, subject to a transverse magnetic field of strength $b$.
A key feature of our approach is that the notion of periodicity we impose is compatible with arbitrary magnetic field strengths $b\in\mathbb{R}$.
The Hamiltonians are defined as the sum of a non-interacting part, to which the magnetic field is coupled through Peierls phases, and an interacting part,   which is compatible with the magnetic field; for example  standard density-density interactions.

To define the non-interacting   Hamiltonians $\mathfrak{h}^b$ acting on the one-body space $ \ell^2(\Z^2,\C^\fd)$, let first 
  $h:\Z^2\to \mL(\C^\fd)$ satisfy  $h(-z) =   h(z)^*$ for all $z\in\Z^2$ and have exponential decay, i.e., there exist  $c,C>0$ such that
\[
\forall  z\in\Z^2\,:\; \|h(z)\|\leq C\,\e^{-c\|z\|}\,.
\]
Throughout, for $z\in\Z^2$ the norm $\|z\|$ denotes the $\ell^\infty$-norm and for linear maps the operator norm.
For any $b\in\R$, the   operator $\mathfrak{h}^b$ is obtained by multiplying the translation-invariant hopping terms by a Peierls phase:
 \[
\mathfrak{h}^b(x,y)_{ij} := \mathrm{e}^{ \mathrm{i}\frac{x_2+y_2}{2} b (x_1-y_1)} h(x-y)_{ij}\,.
 \]
Here we define the operator $\mathfrak{h}^b$ in terms of its summation kernel given by 
\begin{equation}\label{kernelDef}
\mathfrak{h}^b(x,y)_{ij}  = \langle e_x\otimes v_i, \mathfrak{h}^b\,e_y\otimes v_j\rangle
\end{equation}
with respect to the canonical orthonormal bases $\{e_x \,| \,x\in\Z^2\}$ of $\ell^2(\Z^2)$ and $\{v_i\, |\, i\in \{1,\ldots,\fd\}\}$ of $\C^\fd$. Notice how, if one sets $\fd=1$ and 
\[ h(z) \equiv h_{\mathrm{Hofstadter}}(z) = \begin{cases} 
1 & \text{if } |z| = 1, \\
0 & \text{otherwise}, 
\end{cases} \]
one recovers as $\mathfrak{h}^b \equiv \mathfrak{h}_{\mathrm{Hofstadter}}^b$ the usual Hofstadter Hamiltonian, namely the discrete magnetic Laplacian on the lattice $\Z^2$ \cite{hofstadter1976energy}.
 
The (dual) magnetic translations are defined for any $\gamma\in\Z^2$ as
\[
t^b_\gamma:\ell^2(\Z^2,\C^\fd)\to \ell^2(\Z^2,\C^\fd)\,,\quad (t^b_\gamma f)(x) := \e^{-\i b x_1\gamma_2}f(x+\gamma)\,,
\]
with inverse
\[
\big(t^b_\gamma\big)^{-1}:\ell^2(\Z^2,\C^\fd)\to \ell^2(\Z^2,\C^\fd)\,,\quad  \big[\big(t^b_\gamma\big)^{-1} f\big](x) := \e^{\i b (x_1-\gamma_1)\gamma_2}f(x-\gamma)\,.
\]
Then, $\mathfrak{h}^b$ is $t^b$-invariant, i.e.
\begin{equation} \label{eqn:hb_tb_inv}
\big(t_\gamma^b\big)^{-1}\mathfrak{h}^b  t_\gamma^b = \mathfrak{h}^b \quad \text{for all } \gamma\in\Z^2\,.
\end{equation}
Magnetic translations in general yield merely a projective representation of the abelian group $\Z^2$, as
\[ t_\gamma^b \, t_{\gamma'}^b = \e^{-\i b \gamma_1 \gamma_2'} t_{\gamma+\gamma'}^b = \e^{-\i b(\gamma_1 \gamma_2'-\gamma_1'\gamma_2)}  t_{\gamma'}^b \, t_{\gamma}^b\,, \quad \gamma, \gamma' \in \Z^2.  \]
Only in the case where $b = 2\pi \tfrac{p}{q} \in 2 \pi \mathbb{Q}$ is commensurate to $2\pi$, magnetic translations provide a true unitary representation of the sublattice $\Z^2_q := \Z \times (q\Z) \subset \Z^2$, and operators which commute with them are amenable to being studied via the Bloch--Floquet representation (compare Appendix~\ref{app:BF}). In view of these remarks, later at times we will have to distinguish between ``commensurate'' and ``incommensurate'' magnetic fields $b$ when studying spectral properties of magnetic Hamiltonians. At any rate, the results we present   hold for any $b \in \R$ regardless of this commensurability distinction.

It will be convenient to write Hamiltonians and other operators (one-body  and many-body) as sums of local terms.
For $\mathfrak{h}^b$, and similarly for  other one-body operators, define $R_x :=\{(y,z)\in \Z^2\times\Z^2\,|\, x_j= \lfloor \frac{y_j+z_j+1}{2} \rfloor \,,\;j\in\{1,2\}\}$ and $\mathfrak{h}^b_x$ to be the operator with kernel
\[
\mathfrak{h}^b_x(y,z) := \delta_{(y,z)\in R_x} \mathfrak{h}^b(y,z)\,.
\]
In short, every ``hopping from $z$ to $y$'' is associated with the location $x$ at the middle between $z$ and $y$.
Then $\mathfrak{h}^b =\sum_{x\in\Z^2}\mathfrak{h}^b_x$ and  $t^b$-invariance of $\mathfrak{h}^b$ implies that 
\[
\mathfrak{h}^b_x = \big(t^b_x\big)^{-1} \mathfrak{h}^b_0\, t^b_x  \,.
\]

The second quantization of $\mathfrak{h}^{b}$ at chemical potential $\mu\in\R$ is defined by
\[
H_{(b,\mu,0)} := \mathrm{d}\Gamma(\mathfrak{h}^b - \mu \, \mathbf{1}) := \sum_{\substack{x,y\in\Z^2}} \sum_{i,j=1}^\fd \, a_{x,i}^* \,\mathfrak{h}^b(x,y)_{ij}\,a_{y,j} - \mu\sum_{x\in\Z^2 } \sum_{i=1}^\fd n_{x,i}\,.
\]
The operator $H_{(b,\mu,0)}$   is expressed in terms of fermionic creation and annihilation operators $a_{x,i}^*$ and $a_{x,i}$, respectively, labeled by lattice sites $x \in \Z^2$ and internal indices $i \in \{1,\ldots,\fd\}$, and acting on the fermionic Fock space $\mathcal{F}(\Z^2;\C^\fd)$ over $\ell^2(\Z^2,\C^\fd)$; the second quantization map of a 1-body  self-adjoint  operator is denoted, as customary, by $\mathrm{d}\Gamma(\cdot)$, and $n_{x,i} := a_{x,i}^* \, a_{x,i}$ is the local number operator. We will also denote by $n_x := \sum_{i=1}^{\fd} n_{x,i}$ the fermion number operator over site $x$. 
We do not consider $H_{(b,\mu,0)}$ as an unbounded operator on fermionic Fock space, but as an interaction   on the quasi-local CAR-algebra $\mA$  generated by $\{a_{x,j}^*, a_{x,j} \,|\, x \in \Z^2, \, j \in \{1, \dots , \fd\}\}$, see Appendix~\ref{sec:basic} for details. Explicitly, as an interaction  it is the map 
$H_{(b,\mu,0)}:P_0(\Z^2) \to \mA, \, Z\mapsto     H_{(b,\mu,0)}(Z) $, with
\begin{align*}
H_{(b,\mu,0)}(Z) \coloneq 
    \begin{cases}
    \sum_{i,j=1}^\fd \, a_{x,i}^* \,\mathfrak{h}^b(x,x)_{ij}\,a_{x,j}  - \mu \sum_{i=1}^\fd n_{x,i} & \text{if}\,\, Z=\{x\},  \\
        \sum_{i,j=1}^\fd \, a_{x,i}^* \,\mathfrak{h}^b(x,y)_{ij}\,a_{y,j} + a_{y,i}^* \,\mathfrak{h}^b(y,x)_{ij}\,a_{x,j} & \text{if}\,\, Z=\{x,y\}, \, x\not=y \\
       0 & \text{ otherwise.}
    \end{cases}
\end{align*}
The Hamiltonian  $H_{(b,\mu,0)}$ is then an element of $B_{\exp}$, the set of short-range interactions on the algebra  $\mA$.
General properties of local and quasi-local operators on the CAR-algebra are reviewed in Appendix~\ref{sec:basic}, cf.\ also \cite[Section 2.1]{WMMMT24}. 
 
The magnetic translations act on the CAR-algebra as automorphisms $T^b_\gamma \in \aut(\mathcal{A})$, $\gamma \in \Z^2$, and are  fully defined by their action on annihilation operators as
\begin{equation}\label{magtrans}
a_{y,j}\mapsto T^b_\gamma(a_{y,j}) :=  \mathrm{e}^{-\mathrm{i}b y_1 \gamma_2} a_{y+\gamma,j} \,,\quad y\in\Z^2, \, j\in \{1,\dots\fd\}.
\end{equation}
By construction, $H_{(b,\mu,0)}$ can be also written as a 0-chain  on $\mathcal{A}$, i.e.\  as a sum of local terms which are translates of a single term at the origin:
\[
H_{(b,\mu,0)} = \sum_{x\in\Z^2} T^b_x ( \mathrm{d}\Gamma(\mathfrak{h}^b_0) - \mu\, n_0) =: \sum_{x\in\Z^2} H_{(b,\mu,0),x} \,.
\]
Finally we also add a  short-range interaction  with coupling parameter $\lambda\in\R$: For each $b\in \R$ let 
$v^b \in\mA^N_{\exp}$ be self-adjoint ,   $T^b$-compatible, i.e.\ $T^b_\gamma T^b_{\gamma'} v^b = T^b_{\gamma+ \gamma'} v^b $ for all $\gamma,\gamma'\in\Z^2$, and such that the map   $\R\to \mA_{\exp}\subset\mA_\infty$, $b\mapsto v^b$, is continuous  with respect to all decay norms $\|\cdot\|_\nu$.  
For the detailed definitions of  $\mathcal{A}_{\rm exp}^{N}$, $\mathcal{A}_{\rm exp}$, $\mathcal{A}_{\infty}$, and $\|\cdot\|_{\nu}$, we refer the reader to Appendix~\ref{sec:basic}.
 We define $V^b$ to be the $T^b$-periodic interaction induced by translating $v^b$ across the lattice as defined in Proposition \ref{prop: interaction associated to observable},
and
\[
H_{(b,\mu,\lambda)} := H_{(b,\mu,0)} + \lambda V^b\,.
\] 
The standard example of a $T^b$-compatible interaction term is an interaction that only depends on the local particle densities, e.g.\ a two-body interaction of the form
\[
v^b = v = \tfrac{1}{2}\sum_{x\in\Z^2} w(x) n_0 n_x\,,
\]
where $w:\R^2\to\R$ is an exponentially decaying even function.
Note that from now on we freely use the identification of  $0$-chains and interactions given in Proposition~\ref{prop: interaction associated to observable}.

\subsection{Main results}

We view the above $H_{(b,\mu,\lambda)}$ as defining a three-parameter family of models with $(b,\mu,\lambda)\in \R^3$. This operator gives rise to a densely defined derivation $\mL_{H_{(b,\mu,\lambda)}}$ on the CAR-algebra $\mA$, and one says that a state $\w$ on $\mA$ is a \textbf{locally unique gapped ground state} with gap $g>0$ for $H_{(b,\mu,\lambda)}$ iff it holds that
\[
\forall A \in \mA_{0}\;:\qquad
\w(A^* \mL_{H_{(b,\mu,\lambda)}}A)
\geq g\left(\w(A^*A)- |\w(A)|^2
\right) \,, 
\]
 where  $\mA_0$  is  the local algebra obtained as the union of all compactly-supported expressions in the creation and annihilation operators.

We are interested in the gapped phases of $H_{(b,\mu,\lambda)}$, which we define for the purpose of this paper as follows.

\medskip

\begin{definition}
A {\bf uniformly gapped  region} for the family $(H_{(b,\mu,\lambda)})_{(b,\mu,\lambda)\in\R^3}$ is an open connected non-empty set $\mP \subset\R^3$ together with a family of states $(\rho_{p})_{{p}\in \mP}$ such that:
\begin{enumerate}
    \item[(i)]
There exists $g>0$ such that for all $p \in \mP $ the state 
$\rho_{p}$ is a locally unique gapped ground state for $H_{p}$ with gap $g$.
\item[(ii)]
For any $A\in \mA$    the map 
\[
\mP \to \C\,,\quad  
(b,\mu,\lambda)\mapsto \rho_{(b,\mu,\lambda)} (A)
\]
is continuous. 
\end{enumerate}
\end{definition}

\medskip

Still at an informal level, our results are as follows.
\begin{enumerate}
 \item 
 We first show in Theorem~\ref{prop:Hofstadter} that every point of a non-interacting gapped region has a neighborhood that extends to a uniformly gapped region of the interacting model in the following sense: For each $\mu_0\notin \sigma(\mathfrak{h}^{b_0})$, we find $\delta,\lambda_0>0$ such that $\mathcal{P}_{\delta,\lambda_0} := B_\delta(b_0,\mu_0 )\times (-\lambda_0,\lambda_0)$, together
 with an explicitly constructed family of states $(\w_{(b,\mu,\lambda)})_{(b,\mu,\lambda)\in \mathcal{P}_{\delta,\lambda_0}}$,
 is a uniformly gapped region for $H_{(b,\mu,\lambda)}$.   The proof is based on results on the stability of the spectral gap of free fermion systems \cite{de2019persistence, hastings2019stability, GMP2017}, but the extension to infinite systems with possibly incommensurate $b$ requires additional arguments in view of the above-mentioned lack of periodicity.
 \item Having established the existence of gapped phases for our weakly interacting generalized Hofstadter Hamiltonians, our main new result is a ``gap labeling'' theorem: we show that the Hall conductivity, defined by the many-body version of the double commutator formula derived in \cite{WMMMT24}, is constant and quantized, with $2\pi\sigma^{\mathrm H}\in\Z$,  in the uniformly gapped regions of $H_{(b,\mu,\lambda)}$ around $\lambda=0$. This is the content of Theorem~\ref{cor:universality}, which gives rise to an extended many-body phase diagram ``colored'' like in Figure~\ref{3Dbuttefly}.
\end{enumerate}

 A key technical ingredient in our proof is the construction of a locally generated automorphism $\alpha_{(b,(0,\lambda))}$ of $\mA$ that  maps    the unique gapped ground state of the non-interacting system $H_{(b,\mu,0)}$   to the gapped ground state constructed below for the weakly interacting system  $H_{(b,\mu,\lambda)}$. This automorphism is in principle well known: in the case of the standard adiabatic theorem of quantum mechanics formulated for self-adjoint operators on Hilbert spaces it is given by the unitary parallel transport map in the vector bundle defined by the ground state projections $\lambda\mapsto P_\lambda$. In the many-body setting one speaks of automorphic equivalence of gapped ground states, see  \cite{BMNS12} for finite systems and \cite{MO20,BTW26automorphic} for infinitely extended systems. In \cite{MO20,BTW26automorphic}, however, one starts with a sufficiently regular family of gapped ground states and shows that they are connected by locally generated automorphisms. Here we will turn the logic around: we define the states through parallel transport and show that they are indeed gapped ground states.

To this end let us introduce the generator of this so-called quasi-adiabatic evolution   to be the $B_\infty$-interaction $K^{(b,\lambda),g}$ with $K^{(b,\lambda),g}_0$ given by 
\begin{equation}\label{KDef}
K^{(b,\lambda),g}_0  :=  -\int_\R \D s\, W_g(s)
\,\e^{\i s\mL_{H_{(b,\mu,\lambda)}}} v^b
\,,
\end{equation}
where $W_g$ is a rapidly decaying function depending on a gap parameter $g>0$, defined as in Appendix~\ref{sec:offdiag}; note that $g$ will be fixed in the proofs. The interaction
$K^{(b,\lambda),g}$ generates the strongly continuous cocycle of automorphisms
$(\lambda_0,\lambda)\mapsto \alpha_{(b,(\lambda_0, \lambda))}$, the quasi-adiabatic evolution,  satisfying
\[
\partial_\lambda \,\alpha_{(b,(\lambda_0, \lambda))} = \alpha_{(b,(\lambda_0, \lambda))} \circ \i \mL_{K^{(b,\lambda),g}}\,.
\]
Note that $K^{(b,\lambda),g}$ and thus also $\alpha_{(b,(\lambda_0, \lambda))}$ do not depend on $\mu$, since $v^b$ is by assumption gauge-invariant and thus   commutes with the number operator. For better readability we also not make explicit the $g$-dependence of $\alpha_{(b,(\lambda_0, \lambda))}$.

Assume that $\mu\notin\sigma(\mathfrak{h}^{b})$ and let $P_{(-\infty,\mu)}^{\mathfrak{h}^b} := \chi_{(-\infty,\mu)}(\mathfrak{h}^b)$ denote the corresponding Fermi projection below the spectral gap around $\mu$. 
Then the quasi-free state $\w_{(b,\mu,0)}$, characterized uniquely by its two-point function 
$\w_{(b,\mu,0)}(a^*_{x,i}a_{y,j}) = P_{(-\infty,\mu)}^{\mathfrak{h}^b}(x,y)_{ij}$ according to Wick's theorem, is the  unique infinite-volume ground state of  $H_{(b,\mu,0)}$. Here  $P_{(-\infty,\mu)}^{\mathfrak{h}^b}(x,y)_{ij}$ are the matrix elements of $P_{(-\infty,\mu)}^{\mathfrak{h}^b}$ with respect to the canonical basis of $\ell^2(\Z^2,\C^\fd)$ defined as in \eqref{kernelDef}. 
 Given $g>0$, we define for $(b,\mu,\lambda)\in \R^3$  the state 
\begin{equation}\label{omegadef}
\w_{(b,\mu,\lambda)} \coloneq \w_{(b,\mu,0)} \circ \alpha_{(b,(0, \lambda))}\,.
\end{equation}
Notice that $\w_{(b,\mu,0)}$ and $\w_{(b,\mu,\lambda)}$ do not depend on $\mu$ as long as $\mu$ lies within the same gap of the one-particle Hamiltonian $\mathfrak{h}^{b}$. Still we keep $\mu$ as a label of the state to indicate that the gapped phase in which $\w_{(b,\mu,\lambda)} $ lies does depend on $\mu$.

We can now formulate our first main result. 

\begin{theorem}\label{prop:Hofstadter}
    Let $(b_0,\mu_0)\in\R^2$ be such that $\mu_0\notin\sigma(\mathfrak{h}^{b_0})$. Then there exist $\delta,\lambda_0,g_*>0$   such that 
    \[\mathcal{P}_{\delta,\lambda_0} := B_{\delta}(b_0,\mu_0)\times (-\lambda_0,\lambda_0)\] 
    together with the family of states 
    \[(\w_{(b,\mu,\lambda)})_{(b,\mu,\lambda) \in \mathcal{P}_{\delta,\lambda_0}}\,,\] 
    defined as in \eqref{omegadef} with respect to $g=g_*$,
    form a uniformly gapped region with gap~$g_*$ for $(H_{(b,\mu,\lambda)})_{(b,\mu,\lambda)\in \R^3}$. Furthermore, for all $(b,\mu,\lambda)\in \mathcal{P}_{\delta,\lambda_0}$ the state $\w_{(b,\mu,\lambda)}$ is $T^{b}$-invariant and   $\w_{(b,\mu,\lambda)} = \w_{(b,\mu_0,\lambda)}$.
\end{theorem}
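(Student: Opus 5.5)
The proof has three ingredients. First, a uniform one-body gap near $(b_0,\mu_0)$. Second, the persistence of the many-body gap under weak interactions, with constants uniform in $b$. Third, the verification that the quasi-adiabatically transported states are ground states with the claimed properties.

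\textbf{Step 1: uniform one-body gap.} Since $b\mapsto\sigma(\mathfrak h^b)$ is continuous in the Hausdorff metric, there are $\delta>0$ and $\gamma>0$ with $\operatorname{dist}(\mu,\sigma(\mathfrak h^b))\ge 2\gamma$ for all $(b,\mu)\in B_\delta(b_0,\mu_0)$. For such parameters the Fermi projection $P^{\mathfrak h^b}_{(-\infty,\mu)}=P^{\mathfrak h^b}_{(-\infty,\mu_0)}$ is independent of $\mu$, which gives $\w_{(b,\mu,\lambda)}=\w_{(b,\mu_0,\lambda)}$ directly from \eqref{omegadef}. The generator $K^{(b,\lambda),g}$ does not depend on $\mu$, because $v^b$ is gauge invariant. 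Writing the Fermi projection as a Riesz integral gives a kernel with exponential decay, uniformly in $b$, and continuous in $b$ entrywise. Hence $b\mapsto\w_{(b,\mu,0)}(A)$ is continuous. The quasi-free state is then a locally unique gapped ground state of $H_{(b,\mu,0)}$ with gap $2\gamma$: for $A\in\mA_0$ one computes $\w(A^*\mL_H A)$ in the Fock representation of the Fermi sea, where $\mathrm d\Gamma(|\mathfrak h^b-\mu|)$ is bounded below by $2\gamma$ on the orthogonal complement of the vacuum.

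\textbf{Step 2: uniform gap persistence (main obstacle).} I need the local gap inequality for $H_{(b,\mu,\lambda)}$ in the state $\w_{(b,\mu,\lambda)}$, with gap $g_*$ uniform in $b\in B_\delta(b_0)$ and $|\lambda|<\lambda_0$. For rational $b=2\pi p/q$ I would use stability of free fermion gaps \cite{de2019persistence,hastings2019stability}. Their results give stability once the quasi-free ground state has local topological quantum order and the interaction is short range, and both properties are controlled only by the decay constants of the Fermi projection and of $v^b$. The point is that these are controlled uniformly in $q$, since the one-body gap is uniform. I would therefore not periodize on tori of size $q$. Instead I would run the argument directly in infinite volume: transform $H_{(b,\mu,\lambda)}$ by $\alpha_{(b,(0,\lambda))}$, so that $\alpha_{(b,(0,\lambda))}(H_{(b,\mu,\lambda)})=H_{(b,\mu,0)}+\lambda\,\Phi^{b,\lambda}$ modulo terms that annihilate the ground state. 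Here $\Phi^{b,\lambda}$ is an interaction that is relatively small with respect to the free gap in the sense of \cite{de2019persistence}, with norms bounded by Lieb--Robinson estimates that depend only on the decay constants. This gives the gap $g_*$ for every $b$, rational or not. If the infinite-volume route is blocked, my fallback is to prove the bound at rational $b$ uniformly in $q$ and pass to irrational $b$ by density. That fallback uses continuity in $b$ of $\w_{(b,\mu,\lambda)}(A)$ and of $\mL_{H_{(b,\mu,\lambda)}}A$ for local $A$, and the gap inequality is closed under such limits.

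\textbf{Step 3: properties of the constructed states.} $T^b$-invariance follows because $\w_{(b,\mu,0)}$ is $T^b$-invariant, since $\mathfrak h^b$ commutes with $t^b$. Moreover $K^{(b,\lambda),g}$ is a $T^b$-periodic interaction, since $v^b$ is $T^b$-compatible and $\mL_H$ commutes with $T^b$, so $\alpha$ commutes with $T^b$. That the states are ground states, rather than merely gapped, comes from the standard quasi-adiabatic identity: $\partial_\lambda$ of the ground-state condition is preserved by the generator $K$, provided the gap remains open along the path. This is handled by a continuity argument in $\lambda$ using the uniform gap $g_*<2\gamma$ chosen in $W_g$. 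Finally, joint continuity in $(b,\mu,\lambda)$ of $\w_{(b,\mu,\lambda)}(A)$ follows from continuity of $b\mapsto v^b$ and of $b\mapsto H_{(b,\mu,0)}$ in the decay norms. These give continuity of $K^{(b,\lambda),g}$ and, via Lieb--Robinson bounds, strong continuity of $\alpha_{(b,(0,\lambda))}$, combined with the continuity from Step 1.
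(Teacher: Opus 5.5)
\textbf{Verdict: genuine gap.} Your Step~1 (uniform one-body gap, $\mu$-independence, gap of the quasi-free state) matches the paper, as do the $T^b$-invariance argument and your fallback density argument: pass from rational to arbitrary $b$ using that the gap inequality survives convergence of $\mL_H A$ and weak$^*$ convergence of states (Lemma~\ref{lem:gaplimit}). The gap is at the core. You give no working mechanism that produces the many-body gap for any $b$, rational or not.

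The stability theorems of \cite{de2019persistence,hastings2019stability} are finite-volume statements with volume-uniform constants, so they cannot be ``run directly in infinite volume'' as a black box. Your conjugation $\alpha_{(b,(0,\lambda))}(H_{(b,\mu,\lambda)})=H_{(b,\mu,0)}+\lambda\Phi^{b,\lambda}$ ``modulo terms that annihilate the ground state'' is also circular. The generator $K^{(b,u),g}$ only strips the off-diagonal part of $v^b$ if you already know that $H_{(b,\mu,u)}$ stays gapped along $[0,\lambda]$, which is what you are trying to prove. Step~3 inherits the same problem. In infinite volume there is no ground-state projection to transport, and nothing in your ``continuity argument in $\lambda$'' propagates the gap inequality for the specific state $\w_{(b,\mu,0)}\circ\alpha_{(b,(0,\lambda))}$. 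Note also that the gap inequality already implies the ground-state property, so ``ground state versus merely gapped'' is not a separate issue.

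\textbf{The missing idea is commensurate periodization.} For $b=2\pi p/q$, magnetic translations by $\Z\times q\Z$ are ordinary translations. So you can put the system on tori $\Lambda_{k_n}$ whose side $L_n$ is a multiple of $q$. Bloch--Floquet theory then shows that the supercell restriction acts fiberwise like the infinite-volume operator, so $\sigma(\mathfrak h^b_n)\subset\sigma(\mathfrak h^b)$ and the one-body gap survives exactly in every finite volume. Only at this point does \cite{de2019persistence} apply. It gives unique gapped ground states for $|\lambda|<\lambda_0$, uniformly in $n$ and in $b\in Q\cap B_\delta(b_0)$. Your worry about uniformity in $q$ is unfounded: $\lambda_0$ depends only on decay norms and the one-body gap, and the thermodynamic limit is taken at fixed $b$.

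Finite-volume automorphic equivalence \cite{BMNS12} then gives exactly
\[
\w^{(n)}_{(b,\mu,\lambda)}=\w^{(n)}_{(b,\mu,0)}\circ\alpha^{(n)}_{(b,(0,\lambda))}.
\]
To pass to infinite volume you then need three convergences: the periodic cocycles to the infinite-volume one (Lemma~\ref{lem:finite_vol_approx}), the quasi-free states via the aliasing formula, and the derivations. Only after that does Lemma~\ref{lem:gaplimit} show that $\w_{(b,\mu,\lambda)}$ is a locally unique gapped ground state. Because you set periodization aside, your rational case is unproved, and so is the fallback that rests on it.

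\textbf{Two smaller points on the limit $b_n\to b$.} First, $K^{(b_n,\lambda),g}$ and $K^{(b,\lambda),g}$ are periodic for different translations, with $(T^{b_n}_\gamma)^{-1}T^{b}_\gamma=\e^{\i\gamma_2(b-b_n)\mL_{X_1}}$ growing in $|\gamma_2|$. Decay-norm continuity of the local terms therefore needs the $\gamma$-weighted comparison of Lemma~\ref{lem: convergence of cocycles}. Second, entrywise continuity of the Fermi kernel in $b$ needs gauge-covariant magnetic perturbation theory, not just the Riesz formula, because $\mathfrak h^{b_n}\not\to\mathfrak h^{b}$ in norm.
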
   
\noindent The proof of Theorem~\ref{prop:Hofstadter} is presented in Section~\ref{sec:Hofstadter}.

\smallskip

The second theorem concerns the Hall current response of the states~\eqref{omegadef} to external electric fields. It has been shown in~\cite{WMMMT24} that, when the system modeled by $H_{(b,\mu,\lambda)}$ is initially in the gapped ground state $\w_{(b,\mu,\lambda)}$ and  is then perturbed by an electric field of intensity $\eps$ in the direction of $x_1$, then the transverse current in the direction of $x_2$ flowing in the corresponding non-equilibrium almost-stationary state (NEASS) $\w_{(b,\mu,\lambda)}^{\eps} := \w_{(b,\mu,\lambda)} \circ \beta^\epsi= \w_{(b,\mu,\lambda)} \circ \e^{\i\mL_{S^\epsi}}$ is  
\[ j_2^{\epsi} = \epsi\, \sigma^\mathrm{H}_{(b,\mu,\lambda)} + \mathcal{O}(\epsi^\infty)\,. \]
Here 
\begin{equation} \label{sigmaHdef}
\sigma^\mathrm{H}_{(b,\mu,\lambda)} := \overline{\w}_{(b,\mu,\lambda)}\left(\i \left[X_2\OD[(b,\mu,\lambda)],X_1\OD[(b,\mu,\lambda)]\right]\right) 
\end{equation}
is the Hall conductivity and $\overline{\w}_{(b,\mu,\lambda)}(\cdot)$ denotes the expectation-per-unit-volume in the state $\w_{(b,\mu,\lambda)}$ (compare \eqref{eq:omegaPUV}). The off-diagonal part of the position operator $X_j := \sum_{x \in \Z^2} x_j \, n_{x}$, $j \in \{1, 2\}$, with respect to the state $\w_{(b,\mu,\lambda)}$ is
\begin{equation} \label{ODinformal}
X_j\OD[(b,\mu,\lambda)] := \i \, \int_{\R}\mathrm{d}s \,W_g(s) \, \e^{\i s\mL_{H_{(b,\mu,\lambda)}}}\,  \mL_{X_j} H_{(b,\mu,\lambda)}\,.
\end{equation}
See Appendix~\ref{sec:offdiag} for details.
For details on how to construct the NEASS $\w_{(b,\mu,\lambda)}^{\eps}$ and under what conditions it provides the full response of a gapped system to a perturbation, see \cite{Teufel2020,HT20b,BTW26adiabatic}.

\begin{theorem}\label{cor:universality}
    Let $M_0 \subset \R^2$ be an open connected set such that 
 $\mu\notin \sigma(\mathfrak{h}^b)$ for all $(b,\mu)\in M_0$, i.e.\ part of a gapped phase of the non-interacting model. Assume that  $(H_{(b,\mu,\lambda)})$ admits a uniformly gapped region with gap $g>0$ of the form $\left( \mathcal{P}_{M_0}, \big(\w_{(b,\mu,\lambda)}\big)_{(b,\mu,\lambda)\in\mathcal{P}_{M_0}} \right)$, where  
$\mathcal{P}_{M_0}\subset M_0\times \R$ is open, contains $M_0\times\{0\}$, and  for each $(b,\mu,\lambda) \in \mathcal{P}_{M_0}$ contains also the line segment $[(b,\mu,0),(b,\mu,\lambda)]$, and where $\big(\w_{(b,\mu,\lambda)}\big)_{(b,\mu,\lambda)\in \mP_{M_0}}$ is the family of states defined in \eqref{omegadef} with respect to $g$. 
    Then the map
    \[
        \sigma^{\mathrm{H}} \colon \mathcal{P}_{M_0} \to \R, \quad (b,\mu,\lambda) \mapsto \sigma^\mathrm{H}_{(b,\mu,\lambda)} 
    \]
    is constant and equal to the integer-quantized value of the label $\sigma^\mathrm{H}_{M_0}$ of the non-interacting phase $M_0$, i.e.
    \[
    \sigma^\mathrm{H}_{(b,\mu,\lambda)}\equiv \sigma^\mathrm{H}_{M_0}\in \tfrac{1}{2\pi}\,\Z \,.
    \]
\end{theorem}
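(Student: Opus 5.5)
The plan has two parts: constancy along the interaction direction, and constancy plus quantization on the non-interacting plane. For the first, fix $(b,\mu,\lambda)\in\mathcal{P}_{M_0}$. By hypothesis the segment $[(b,\mu,0),(b,\mu,\lambda)]$ lies in $\mathcal{P}_{M_0}$, and by \eqref{omegadef} we have $\w_{(b,\mu,\lambda)}=\w_{(b,\mu,0)}\circ\alpha_{(b,(0,\lambda))}$. The automorphism $\alpha_{(b,(0,\lambda))}$ is locally generated by the $T^b$-periodic, gauge-invariant $B_\infty$-interaction $K^{(b,\lambda'),g}$, $\lambda'\in[0,\lambda]$. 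Uniform gappedness along the segment allows us to apply the invariance of the Hall conductivity under locally generated automorphic equivalence proved in \cite{WMMMT24}. That result is formulated for translation-invariant systems, so here we would apply it with $T^b$ replacing ordinary translations. This requires checking that the expectation per unit volume, the off-diagonal position operators \eqref{ODinformal}, and the double-commutator formula are all covariant under magnetic translations. This should hold because $\mL_{X_j}$ commutes with $T^b_\gamma$ up to the inner derivation by $\gamma_j N$, and $N$ commutes with all gauge-invariant generators. The conclusion is $\sigma^{\mathrm H}_{(b,\mu,\lambda)}=\sigma^{\mathrm H}_{(b,\mu,0)}$.

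For the second part, at $\lambda=0$ the state $\w_{(b,\mu,0)}$ is quasi-free with Fermi projection $P^b:=P^{\mathfrak h^b}_{(-\infty,\mu)}$. A Wick computation of \eqref{sigmaHdef} reduces the many-body double commutator to the one-body Kubo/Chern formula
\[
\sigma^{\mathrm H}_{(b,\mu,0)}=\i\,\tau\big(P^b\,[[X_1,P^b],[X_2,P^b]]\big),
\]
up to sign conventions, where $\tau$ is the trace per unit volume. In this computation the off-diagonal parts of $X_j$ become $[[X_j,P^b],P^b]$. Quantization $2\pi\sigma^{\mathrm H}_{(b,\mu,0)}\in\Z$ then follows from the Bellissard index theorem for magnetic $t^b$-covariant operators, valid for all $b$ \cite{bellissard1994noncommutative}. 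For rational $b$, one can alternatively use the TKNN Chern number through the Bloch--Floquet representation of Appendix~\ref{app:BF}.

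Constancy on the connected set $M_0$ then follows from continuity and integrality. The map $(b,\mu)\mapsto\sigma^{\mathrm H}_{(b,\mu,0)}$ is locally constant in $\mu$, since $P^b$ does not change within a gap. It is continuous in $b$, because $P^b$ is given by a Riesz integral of $\mathfrak h^b$ along a contour in the gap, and this gives locally uniform exponential decay of the kernel of $P^b$ as a function of $b$. A continuous $\frac1{2\pi}\Z$-valued function on a connected set is constant. Finally, every point of $\mathcal{P}_{M_0}$ is connected to $M_0\times\{0\}$ by its vertical segment, which gives $\sigma^{\mathrm H}\equiv\sigma^{\mathrm H}_{M_0}$.

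The main obstacle is the first part, namely justifying the automorphic-invariance theorem of \cite{WMMMT24} for $T^b$-periodic rather than translation-invariant systems, with $b$ possibly irrational. The first thing I would try is to show that every ingredient of that proof uses only periodicity of the averaged expectation and $\ell^1$-type decay bounds, both of which hold verbatim for $T^b$ since $|\e^{-\i b y_1\gamma_2}|=1$. I would check explicitly that $\overline{\w}(\cdot)$ is well defined for $T^b$-invariant states, using the $T^b$-invariance of $\w_{(b,\mu,\lambda)}$ from Theorem~\ref{prop:Hofstadter}. A secondary technical point is the continuity in $b$ at $\lambda=0$, which needs exponential decay of the Peierls-phased kernel of $P^b$ uniformly on compact sets of $b$; a Combes--Thomas estimate should suffice.
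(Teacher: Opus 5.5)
Your first step is the paper's argument: the cocycle identity $\w_{(b,\mu,\lambda)}=\w_{(b,\mu,\tilde\lambda)}\circ\alpha_{(b,(\tilde\lambda,\lambda))}$, the automorphic invariance of $\sigma^{\mathrm H}$ from \cite[Lemma~5.2 and Corollary~5.3]{WMMMT24}, and then the vertical segments. What you call the main obstacle is not one. The framework of \cite{WMMMT24} recalled in Appendix~\ref{sec:basic} is built on translations in the sense of Definition~\ref{def:translation}. These are deliberately not required to be homomorphisms, so that $T^b$ is covered for every $b\in\R$, and $\overline{\w}$ is set up for periodic states. Nothing has to be re-verified.

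On the non-interacting plane your route genuinely differs from the paper's. You evaluate the double commutator in $\overline{\w}_{(b,\mu,0)}$ directly by Wick's rule as a trace per unit volume $\tau$ of $t^b$-covariant one-body operators. This gives the Kubo--Chern formula, and the Connes--Bellissard index theorem then gives $2\pi\sigma^{\mathrm H}\in\Z$.

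The paper instead passes to half-space switch functions:
\begin{itemize}
\item Proposition~\ref{prop:dcformulas}(iii), taken from \cite{TW2025}, identifies $\overline{\w}([X_2\OD,X_1\OD])$ with $\w([\Lambda_2\OD,\Lambda_1\OD])$.
\item Second quantization turns this into an honest trace $\tr(P[\lambda_2\OD,\lambda_1\OD])$.
\item A spectral computation together with Lemmas~\ref{lem:diagonal-block-strip-local}--\ref{lem:diagonal-block-products-trace-class} reduces this to $\i\tr(P[\lambda_2\ODT,\lambda_1\ODT])$, whose quantization is the Avron--Seiler--Simon type index statement.
\end{itemize}
Your route stays inside the periodic framework and works directly with \eqref{sigmaHdef}. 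The paper's route yields a one-body identity that needs no covariance, so it holds for non-periodic $h$, at the cost of importing \cite{TW2025} and proving trace-class bounds.

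Two steps of yours need repair.

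\textbf{Diagonal blocks.} The off-diagonal part $x_j\OD=\i\int W_g(s)\e^{\i sh}[x_j,h]\e^{-\i sh}\,\D s$ is not $[[x_j,P],P]=Px_jP^\perp+P^\perp x_jP$. As in the paper's spectral computation, only its off-diagonal blocks agree with those of $x_j$. The blocks $Px_j\OD P$ and $P^\perp x_j\OD P^\perp$ need not vanish, and you must show their contribution drops out. In your setting this is easy: $Px_j\OD P$ is $t^b$-covariant with rapidly decaying kernel, and $\tau$ is cyclic on such operators, so $\tau([Px_2\OD P,Px_1\OD P])=0$. This replaces the paper's trace-class argument.

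\textbf{Continuity in $b$.} Locally uniform exponential decay of the kernel of $P^b$ does not by itself give continuity in $b$. The Riesz formula does not help directly either, because $b\mapsto\mathfrak h^b$ is not norm continuous: the Peierls phase differences grow like $|b-b'|\,|x_2+y_2|\,|x_1-y_1|$. You also need pointwise convergence of the kernel entries. One option is gauge-covariant magnetic perturbation theory, which is what the paper cites via \cite{cornean2021beyond}. Alternatively, use a strong-resolvent argument:
\begin{itemize}
\item openness of $M_0$ gives a locally uniform gap, hence uniformly bounded resolvents on a fixed contour;
\item $(\mathfrak h^{b_n}-\mathfrak h^b)f\to0$ in $\ell^2$ for exponentially localized $f$.
\end{itemize}
Dominated convergence, using the uniform decay, then gives continuity of the Kubo--Chern expression.
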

\noindent The proof of Theorem~\ref{cor:universality} is presented in Section~\ref{sec:universality}.

Note that Theorem~\ref{cor:universality} covers not only weak interactions, that is small $|\lambda|$, but all interactions 
for which $\mu $ effectively still lies in a gap between different Bloch--Landau levels. This is exactly the physical range for which integer quantization of Hall conductance is expected.

\begin{remark}\label{autoRemark}
For fixed magnetic field $b$, the states $\w_{(b,\mu,\lambda)}$ obtained by varying $\lambda$ are connected by the quasi-adiabatic cocycle generated by the bounded local interaction $K^{(b,\lambda),g}$ in~\eqref{KDef}. 
The situation is different when one varies the magnetic field. Formally, for a differentiable path $b\mapsto H_{(b,\mu,\lambda)}$, quasi-adiabatic perturbation theory would suggest a generator of the form
\[
\partial_b \,\w_{(b,\mu,\lambda)}
=
\w_{(b,\mu,\lambda)}\circ i\mL_{\widetilde K^{(b,\lambda),g}},
\]
with
\begin{equation}\label{tildeK}
\widetilde K^{(b,\lambda),g}_x
=
-\int_\R \D s\, W_g(s)\,
\e^{i s\mL_{H_{(b,\mu,\lambda)}}}
\partial_b H_{(b,\mu,\lambda),x}.
\end{equation}
In the magnetic models considered here, however, the local derivative
$\partial_b H_{(b,\mu,\lambda),x}$ grows linearly in the spatial coordinate, for instance like $|x_2|$ in the gauge used above. Thus \eqref{tildeK} does not define a bounded interaction in the usual sense.
There is a conceptual obstruction: within   phases  labeled by a non-zero Hall conductivity,  the charge density
$\overline\w_{(b,\mu,\lambda)}(N)$ changes along a path in $b$, and the corresponding states cannot be connected by a locally generated automorphism produced by bounded gauge-invariant interactions, since such automorphisms preserve the charge density per unit volume.

One can envisage different  approaches  to retain a notion of automorphic equivalence for magnetic gapped phases.
One could try  to stack $\w_{(b_1,\mu,\lambda)}$ and $\w_{ (b_2,\mu,\lambda)}$  with short-range entangled states and then deform the enlarged states using bounded gauge-invariant generators on the enlarged system along a path that departs from the family of states $\w_{(b,\mu,\lambda)}$. Alternatively one could look for bounded but not gauge-invariant generators. Both approaches have, to our knowledge, not been explored yet for magnetic phases. 

A third approach is to enlarge the class of   automorphisms  that facilitate automorphic equivalence   within  gapped ground state phases    by including  those generated by interactions with linear growth in space such as \eqref{tildeK}. In~\cite{TWW26} it is shown that such spatially growing interactions  generate automorphisms of $\mA$, providing a possible framework for automorphic equivalence within magnetic gapped phases. We will explore this approach in a future work.
\end{remark}

\section{Proof of Theorem \ref{prop:Hofstadter}} \label{sec:Hofstadter}
 Before we begin with the proof of Theorem \ref{prop:Hofstadter} let us first compare the present section with the proof of Proposition 3.4 in  \cite{WMMMT24}. There  it was shown that for all $(b_0,\mu_0) \in \R^2$ with $\mu_0 \notin \sigma(\mathfrak{h}^{b_0})$ there exists $\delta>0$ such that, for every $(b,\mu,\lambda)\in B_{\delta}(b_0,\mu_0,0)$, the Hofstadter Hamiltonian $H^{\mathrm{Hofstadter}}_{(b,\mu,\lambda)} = \mathrm{d} \Gamma \big(\mathfrak{h}_{\mathrm{Hofstadter}}^{b}-\mu\mathbf{1}\big)+\lambda V^b$ has a locally unique gapped ground state. We extend this result to Hofstadter-like models of the type presented in Section~\ref{sec:models}, and complement the argument by proving two crucial ingredients needed to show that the Hall conductivity remains constant within gapped phases of these models: namely, that  the explicitly constructed state  $\w_{(b,\mu,\lambda)}$ as defined in \eqref{omegadef} provides such a ground state, and that the map $B_{\delta}(b_0,\mu_0,0) \to S(\mA) , \, (b,\mu,\lambda)\mapsto \w_{(b,\mu,\lambda)}$ is continuous in the weak* topology,
where $S(\mathcal{A})$ denotes the set of all states on $\mathcal{A}$.

Concerning this last point, as was observed after \eqref{omegadef}, the states $\w_{(b,\mu,\lambda)}$ are constant along lines of fixed $b$ and $\lambda$ when $\mu \notin \sigma(\mathfrak{h}^b)$ (i.e.\ they depend on $\mu$ in a constant way, and {\it a fortiori} continuously). Moreover, continuity in $\lambda$ of $\w_{(b,\mu,\lambda)}$ follows at once from strong continuity of the family of automorphisms $\lambda \mapsto \alpha_{(b,(0,\lambda))}$, compare Section~\ref{continuity_in_lambda}. Furthermore, the states $\w_{(b,\mu,\lambda)}$ are $T^b$-invariant by construction:  $\w_{(b,\mu,0)}$ is $T^b$-invariant and  $\alpha_{(b,(0,\lambda))}$ is generated by the $T^b$-invariant interaction $K^{(b,\lambda),g}$.

Proving that the states $ \w_{(b,\mu,\lambda)}$ are locally unique gapped ground states for $H_{(b,\mu,\lambda)}$ and that they depend continuously on $b$ are the most delicate issues. As mentioned previously, the spectral properties of magnetic Hamiltonians crucially depend on whether the magnetic field strength $b$ is commensurate or not to $2\pi$. Thus, our proof is structured into three parts:
\begin{enumerate}
\item We first treat a  set of commensurate magnetic fields in Section~\ref{sec:commensurate}. For $b$ in the set 
\[ Q:=\{2\pi\,\tfrac{p}{q}\,|\, p\in \Z\,,\; q\in\{12n+1|n\in\N\}\} \subset 2\pi\Q\,, \]
which is dense in $\mathbb{R}$, we consider periodic finite-volume approximations of the system, obtained as restrictions to an increasing sequence of  boxes $\Lambda_{k_n} \subset \Z^2$, $n \in \N$. We show that $\w_{(b,\mu,\lambda)}$ can be obtained as the limit of finite-volume ground states of the form $\w_{(b,\mu,\lambda)}^{(n)}  = \w_{(b,\mu,0)}^{(n)} \circ \alpha_{(b,(0, \lambda))}^{(n)}$, by showing that both $\w_{(b,\mu,0)}^{(n)}$ and $\alpha_{(b,(0, \lambda))}^{(n)}$ converge to the appropriate infinite-volume objects in \eqref{omegadef}. Thus, by Lemma \ref{lem:gaplimit}, $\w_{(b,\mu,\lambda)}$ is a locally unique gapped ground state of $H_{(b,\mu,\lambda)}$.
\item For $b\in \R\setminus Q$, we consider sequences $(b_n)_{n\in\N}$ in $Q$ such that $\lim_{n\to\infty} b_n = b$, and show in Section~\ref{sec:incommensurate} that $\w_{(b,\mu,\lambda)}$ can be obtained as the limit of $\w_{(b_n,\mu,0)} \circ \alpha_{(b_n,(0, \lambda))}$, by again showing that $\w_{(b_n,\mu,0)}$ and $\alpha_{(b_n,(0, \lambda))}$ converge to $\w_{(b,\mu,\lambda)}$ and $\alpha_{(b,(0, \lambda))}$ respectively. Then the same Lemma \ref{lem:gaplimit} allows us to conclude that also here $\w_{(b,\mu,\lambda)}$ is a locally unique gapped ground state of $H_{(b,\mu,\lambda)}$.
\item The third and last part of the proof concludes the weak*-continuity of the map $B_{\delta}(b_0,\mu_0,0) \to S(\mA) , \, (b,\mu,\lambda)\mapsto \w_{(b,\mu,\lambda)}$ using the convergence properties with respect to the magnetic field established in the second part.
\end{enumerate}

\begin{lemma}[{\cite[Lemma~G.1]{WMMMT24}}] \label{lem:gaplimit} 
Let either $\tilde{\mA}_n=\mA_{\Lambda_{k_n}}$ for some strictly increasing sequence of boxes $\Lambda_{k_n}$, i.e.\ $k:\N\to\N$, $n\mapsto k_n$, is strictly increasing, or $\tilde{\mA}_n=\mA$ for all $n\in\N$.
Let $(\mL_{H_n})_{n\in\N}$ be a sequence of derivations, each $\mL_{H_n}$ acting on $\tilde{\mA}_n$, and $(\w_n)_{n\in\N}$ be a sequence of states on $\mA$ such that for some $g>0$ and for all $n\in\N$ and all $A\in \tilde{\mA}_n\cap\mA_0$, 
  \begin{equation}\label{eq:gap2}
  \w_n( A^* \mL_{H_n} A ) \geq g \left( \w_n(A^* A) - \left| \w_n(A) \right|^2 \right)\,.
    \end{equation}
 If  there exist a derivation $\mL_H$ and a state $\w$  on $\mA$ such that for all $A\in\mA_0$ it holds that
 \begin{enumerate}[label={(\roman*)}, ref={(\roman*)}]
 \item \label{item:convergenceLH} $\lim_{n\to \infty}\|(\mL_{H_n}-\mL_H)A\|=0$ and 
 \item \label{item:convergenceGS} $\lim_{n\to \infty}\w_n(A)= \w(A)$,
 \end{enumerate}
 then $\w$ is a locally unique gapped ground state of $\mL_H$ with gap $g$.
\end{lemma}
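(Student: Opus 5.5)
The plan is to pass to the limit $n\to\infty$ in the gap inequality~\eqref{eq:gap2}, applied to a fixed local observable, and to use a short perturbation argument to extend the resulting inequality from $\mA_0$ to the relevant class of observables.

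Fix $A\in\mA_0$. Because the boxes $\Lambda_{k_n}$ increase strictly and exhaust $\Z^2$, there is an $n_A$ with $A\in\tilde{\mA}_n\cap\mA_0$ for all $n\ge n_A$; in the case $\tilde{\mA}_n=\mA$ this holds trivially. For these $n$, inequality~\eqref{eq:gap2} gives
\[
\w_n(A^*\mL_{H_n}A)\ \ge\ g\bigl(\w_n(A^*A)-|\w_n(A)|^2\bigr).
\]

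We take $n\to\infty$ on both sides. The right-hand side is immediate: $A^*A$ and $A$ lie in $\mA_0$, so assumption~\ref{item:convergenceGS} yields convergence to $g(\w(A^*A)-|\w(A)|^2)$. For the left-hand side we split
\[
\w_n(A^*\mL_{H_n}A)-\w(A^*\mL_HA)=\w_n\bigl(A^*(\mL_{H_n}-\mL_H)A\bigr)+(\w_n-\w)(A^*\mL_HA).
\]
The first term is at most $\|A\|\,\|(\mL_{H_n}-\mL_H)A\|$, which tends to zero by~\ref{item:convergenceLH}. The second term is the main obstacle. In general $\mL_HA$ is only quasi-local, not local, so~\ref{item:convergenceGS} does not apply to it directly.

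To handle the second term, approximate $\mL_HA$ in norm by local elements $B_m\in\mA_0$. This is possible because $\mL_H A\in\mA$, which is the norm closure of $\mA_0$. Since states have norm one, the $\w_n$ form an equicontinuous family, and a $3\varepsilon$ argument gives
\[
|(\w_n-\w)(A^*\mL_HA)|\le 2\|A\|\,\|\mL_HA-B_m\|+|(\w_n-\w)(A^*B_m)|.
\]
Here $A^*B_m\in\mA_0$, so the last term vanishes as $n\to\infty$ by~\ref{item:convergenceGS}. Choosing $m$ large first and then $n$ large makes the whole expression small. Equivalently, convergence on the dense subalgebra $\mA_0$ together with uniform boundedness of the states gives weak* convergence $\w_n\to\w$ on all of $\mA$. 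The only prerequisite is that $\mA_0$ lies in the domain of $\mL_H$, which holds because $\mL_H$ is a derivation defined on $\mA_0$. Passing to the limit therefore gives the gap inequality for $\w$ and $\mL_H$ for every $A\in\mA_0$.

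This limit inequality is exactly the definition of a locally unique gapped ground state with gap $g$, which is tested on $A\in\mA_0$. If a ground-state property is also required, namely $\w(A^*\mL_HA)\ge0$, it follows from the same inequality, since its right-hand side is nonnegative by Cauchy--Schwarz.
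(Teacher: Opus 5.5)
Your proof is correct. The ingredients are exactly the ones needed: eventual membership $A\in\tilde{\mA}_n$, the bound $|\w_n(A^*(\mL_{H_n}-\mL_H)A)|\le\|A\|\,\|(\mL_{H_n}-\mL_H)A\|$, and the upgrade of convergence on $\mA_0$ to weak$^*$ convergence on $\mA$ via $\|\w_n\|=1$, which handles the quasi-local element $A^*\mL_HA$. The paper gives no proof of its own here (it imports the lemma from \cite[Lemma~G.1]{WMMMT24}), so there is nothing to compare beyond noting that yours is the standard limiting argument. Since the paper's definition of a locally unique gapped ground state only tests $A\in\mA_0$, nothing further is required; only if you wanted $\w(A^*\mL_HA)\ge0$ on all of $D(\mL_H)$ would you also need that $\mA_0$ is a core for the closed Liouvillian.
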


\subsection{A dense set of commensurate magnetic fields} \label{sec:commensurate}

In this first part of the proof we will show the following statement: For all $(b_0,\mu_0)\in \R^2 $ with $\mu_0\notin \sigma(\mathfrak{h}^{b_0})$ there exist $\delta,g_*,\lambda_0>0$ such that for all $(b,\mu)\in B_{\delta}(b_0,\mu_0)$ with $b\in Q$ and all $\lambda \in (-\lambda_0,\lambda_0)$ the state $\w_{(b,\mu,\lambda)}$, defined as in \eqref{omegadef} with respect to $g_*$, is a locally unique gapped ground state of $H_{(b,\mu,\lambda)}$ with gap $g_*$.

Let $(b_0,\mu_0)\in \R^2$ with $\mu_0\notin \sigma(\mathfrak{h}^{b_0})$.
Since the map  $b\mapsto\sigma(\mathfrak h^b)$ is continuous with respect to the Hausdorff metric, following for instance from gauge-covariant magnetic perturbation theory and the exponential decay of $h$ (see \cite{cornean2010lipschitz}), the spectral gap around $\mu_0$ persists uniformly in a neighborhood of $(b_0,\mu_0)$. Thus there exist $\delta>0$ and $g_*>0$ such that for all $(b,\mu)\in B_\delta(b_0,\mu_0)$ it holds that $\mathrm{dist}(\mu,\sigma(\mathfrak{h}^b))>2 g_*$. 

Now let $(b,\mu)\in B_\delta(b_0,\mu_0)$ with $b\in Q$. This means that $ b=2\pi\frac{p}{q}$ for some  $p\in\Z$ and some   $q\in \{12n+1|n\in\N\}$. 
Thus, magnetic translations with respect to vectors $\gamma$ in the sublattice $\Z^2_q:= \Z\times (q \Z)\subset \Z^2$ are ordinary translations, since the argument $b y_1 \gamma_2= by_1 m q= 2\pi p y_1 m\in 2\pi\Z$ of the phase in the definition of $t^b$ resp.\ $T^b$ (see \eqref{magtrans}) is an integer multiple of $2\pi$ for all $y\in\Z^2$ and $m\in\Z$. This allows us to apply Bloch--Floquet theory to $\mathfrak{h}^b$ with respect to the sublattice $\Z^2_q$ in order to relate the spectrum of $\mathfrak{h}^b$ to the spectra of its restrictions~$\mathfrak{h}_n^b$ to suitable finite boxes~$\Lambda_{k_n}  = \{x\in \Z^2 \, | \, \norm{x}_\infty \leq k_n \}$.  See also Appendix~\ref{app:BF} for details.

Next we define the periodic finite-volume restrictions of the one- and many-body Hamiltonians.
For $n\in\N$, let $k_n:= \frac{1}{2}((12n+1)q-1) $, so that $k_n$ is divisible by 6 and the box's linear size $L_n:= 2k_n+1= (12n+1)q$ is an integer multiple of $q$. The divisibility by 6 is assumed in order to define sub-boxes of size $k_n/2$
and $k_n/3$ during the proof without having to worry about appropriate rounding.
Let $t^b|_{\Lambda_{k_n}}$ resp.\ $T^b|_{\Lambda_{k_n}}$ denote the magnetic translation acting on $\ell^2(\Lambda_{k_n})$ resp.\  on $\mA_{\La_{k_n}}$ with periodic boundary conditions.
More precisely,  for $\g \in \Z^2$ we define 
$t^b|_{{\Lambda_{k_n},\gamma}}$ by 
\[
t^b|_{{\Lambda_{k_n}},\g}: \ell^2(\Lambda_{k_n},\C^\fd) \to \ell^2(\Lambda_{k_n},\C^\fd)\,,\quad  (t^b|_{{\Lambda_{k_n}},\g} f)(y) := \mathrm{e}^{-\mathrm{i}b y_1 \gamma_2} f([y+\gamma]_{\Lambda_{k_n}})
\]
and 
$T^b|_{{\Lambda_{k_n}},\g}$ as the unique automorphism of $\mA_{\La_{k_n}}$ such that 
\begin{align*}
T^b|_{{\Lambda_{k_n}},\g} (a_{y,j}) = \mathrm{e}^{-\mathrm{i}b y_1 \gamma_2} a_{[y+\gamma]_{\Lambda_{k_n}},j} \quad \forall\, y\in \Lambda_{k_n},\, j \in \{1,\dots \fd\},
\end{align*}
where $[y]_{\Lambda_{k_n}}$ is the unique element of $\Lambda_{k_n}$ such that $y-[y]_{\Lambda_{k_n}} \in (L_n\Z)^2$.

The finite volume Hamiltonian $\mathfrak{h}_{ n}^b \colon \ell^2(\Lambda_{k_n},\C^\fd) \to \ell^2(\Lambda_{k_n},\C^\fd)$ has integral kernel 
    \begin{align*}
\mathfrak{h}_{n}^b(x,y) \coloneq 
    \sum_{z\in(L_n\Z)^2 }
    \mathfrak{h}^b(x,y+z) \,.
    \end{align*}
In order to express this   as a sum of $t^b|_{\La_{k_n}}$-translates 
    define 
    \[
    R_{n,0} :=\left\{(y,z)\in \La_{k_n/2}\times\La_{k_n/2}\,|\,\lfloor \tfrac{y_j+z_j+1}{2} \rfloor =0\,,\;j\in\{1,2\}\right\}
    \]and $\mathfrak{h}^b_{n,0}$ to be the operator with kernel
\[
\mathfrak{h}^b_{n,0}(y,z) := \delta_{(y,z)\in R_{n,0}} \mathfrak{h}^b_n(y,z)\,.
\]
Then $\mathfrak{h}^b_n = \sum_{x\in\La_{k_n}}(t^b|_{\La_{k_n},x})^{-1}\mathfrak{h}^b_{n,0}t^b|_{\La_{k_n},x}$ 
and the restriction of  $H_{(b,\mu,\lambda)} $ to $\mA_{\La_{k_n}}$ is defined by
    \begin{align*}
H_{(b,\mu,\lambda),n}  \coloneq \sum_{\gamma\in\Lambda_{k_n}} T^b|_{\Lambda_{k_n}, \gamma}  \left( \mathrm{d}\Gamma(\mathfrak{h}^b_{n,0}) - \mu n_0 +  \lambda\,\E_{\Lambda_{k_n}}(v^b) \right)=:\mathrm{d}\Gamma(\mathfrak{h}^b_n - \mu \mathbf{1}) + \lambda \,V^b_n \,.
    \end{align*}
These restricted Hamiltonians are $t^b|_{\Lambda_{k_n}}$ resp.\ $T^b|_{\Lambda_{k_n}}$-invariant and by construction it holds for all $n\in\N$ that $\sigma(\mathfrak{h}^b_n) \subset \sigma(\mathfrak{h}^b) $, see Appendix~\ref{app:BF} for details. 

We will use the fact that the dynamics generated by these restricted Hamiltonians approximates the one in infinite-volume in an appropriate sense, which is a consequence of the following general lemma.

\begin{restatable}{lemma}{lemfinitevolapprox}
\label{lem:finite_vol_approx}
    Let $K\subseteq 3\,\N_0$ be some infinite index set, $T$ a translation and for $k\in K$ let $T|_{\Lambda_k}$ be its restriction to the box $\Lambda_k$, defined via
    \begin{align*}
        T|_{\Lambda_k,\gamma}(a_{y,j})  = T_{([y+\gamma]_{\Lambda_k} -y)}(a_{y,j}) \quad \forall\, y\in \Lambda_{k},\, j \in \{1,\dots \fd\}. 
    \end{align*}
    Further, let $I\subseteq \R$ be an interval, $(\Phi^t)_{t\in I}$ a continuous family of $T$-invariant $B_\infty$-interactions and for each $k\in K$ let $(\Phi^{t,k}_0)_{t\in I}$ be a continuous family of self-adjoint $T|_{\Lambda_k}$-compatible elements of $\mA^N_{\Lambda_k} := \mA_{\Lambda_k} \cap \mA^N$. Define the corresponding family of $T|_{\Lambda_k}$-invariant operators as 
    \[
        \Phi^{t,k} = \sum_{\gamma\in\Lambda_k} T|_{\Lambda_k,\gamma}\Phi^{t,k}_0\,.
    \]
 Denote the associated dynamics on $\mA$ resp.\ $\mA_{\Lambda_k}$ by 
$(\alpha_{s,t})_{s,t\in I}$ and $(\alpha^{k}_{s,t})_{s,t\in I}$.

    Assume that for all $\nu\in\N_0$ it holds that 
    \[
  \lim_{k\to\infty}
    \sup_{u\in I} \norm{ 
    \Phi^u_0 - \Phi^{u,k}_0}_{\nu}=0\,.
    \]
   Then  for all $\nu\in\N_0$
   \begin{align*}
        \sup_{t\in I}\sup_{A\in \mA_\infty\setminus\{0\}} \frac{\norm{\mL_{\Phi^t} A - \mL_{\Phi^{t,k}} \E_{\La_{k}}A }_{\nu}}{\norm{A}_{\nu+4}} \xrightarrow{k\to\infty} 0
    \end{align*}
    and there exists an increasing function $f_\nu:[0,\infty)\to (0,\infty)$, growing at most polynomially  at infinity, such that 
    \begin{align*}
       \sup_{t,s\in I} \sup_{A\in \mA_\infty\setminus\{0\}} \frac{\norm{\alpha_{s,t} A - \alpha_{s,t}^{k} \E_{\La_{k}}A }_{\nu}}{f_\nu(|t-s|) \,\norm{A}_{\nu+4}} \xrightarrow{k\to\infty} 0\,.
    \end{align*}
\end{restatable}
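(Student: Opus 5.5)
We split the statement into a generator estimate and a dynamics estimate, and derive the second from the first by a Duhamel argument.

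\emph{Generator estimate.} Fix $A\in\mA_\infty$ and $t\in I$, and write $\E_k:=\E_{\La_k}$. We decompose
\[
\mL_{\Phi^t}A-\mL_{\Phi^{t,k}}\E_k A
=\mL_{\Phi^t}(A-\E_k A)+\big(\mL_{\Phi^t}-\mL_{\Phi^{t,k}}\big)\E_k A .
\]
The first term is handled by the standard bound $\|\mL_{\Phi}B\|_\nu\le C_\nu\|\Phi\|_{\nu+c}\|B\|_{\nu+c}$ for $B_\infty$-interactions (Appendix~\ref{appendix_technical}). Since $\|A-\E_kA\|_{\nu+c}\lesssim k^{-1}\|A\|_{\nu+c+1}$ by the definition of the decay norms, this term is small relative to $\|A\|_{\nu+4}$ once the index loss is at most $4$.

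For the second term we write $\mL_{\Phi^t}\E_kA=\sum_{x\in\Z^2}[T_x\Phi^t_0,\E_kA]$ and $\mL_{\Phi^{t,k}}\E_kA=\sum_{\gamma\in\La_k}[T|_{\La_k,\gamma}\Phi^{t,k}_0,\E_kA]$, and split the sums according to the distance of $x$ from the boundary of $\La_k$.
\begin{itemize}
\item For $|x|\le k/3$, the periodized translate $T|_{\La_k,x}$ acts like $T_x$ on observables supported in $\La_{2k/3}$. Hence we may replace $T|_{\La_k,x}\Phi^{t,k}_0$ by $T_x\E_{\La_{k/3}}\Phi^{t,k}_0$, up to an error controlled by the tail of $\Phi^{t,k}_0$ outside $\La_{k/3}$. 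That tail is small in $\|\cdot\|_\nu$ uniformly in $k$, by the hypothesis $\sup_u\|\Phi^u_0-\Phi^{u,k}_0\|_{\nu}\to0$ together with the decay of $\Phi^u_0$. The difference of the local terms, $T_x(\Phi^t_0-\Phi^{t,k}_0)$, then contributes at most $\sup_u\|\Phi^u_0-\Phi^{u,k}_0\|_{\nu+c}\,\|A\|_{\nu+c}$.
\item For $|x|>k/3$, the commutators with $A$ decay because $A$ is quasi-local and concentrated near the origin. One shows $\sum_{|x|>k/3}\|[T_x\Phi_0,A]\|_\nu\lesssim k^{-1}\|A\|_{\nu+4}$, where two powers of the index are spent on summing over $x\in\Z^2$ and the rest on the gain $k^{-1}$. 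The same bound holds for the periodized terms, uniformly in $k$.
\end{itemize}
Since $\Phi^u$ is continuous and $I$ is handled through the uniform hypothesis, all constants are uniform in $t$ (we may take $I$ compact or rely on the uniform bounds).

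\emph{Dynamics estimate.} Here we use a Duhamel formula with the conditional expectation inserted:
\[
\alpha_{s,t}A-\alpha^k_{s,t}\E_kA
=\alpha^k_{s,t}(A-\E_kA)+\int_s^t \D u\;\alpha^k_{s,u}\Big(\i\mL_{\Phi^u}-\i\mL_{\Phi^{u,k}}\E_k\Big)\alpha_{u,t}A .
\]
The sign and order conventions are to be adjusted to the cocycle definition, and the factor $\E_k$ is inserted because $\alpha^k$ acts only on $\mA_{\La_k}$. Two ingredients control this expression.
\begin{itemize}
\item Lieb--Robinson type bounds in the form $\|\alpha_{u,t}A\|_{\nu+4}\le f(|t-u|)\|A\|_{\nu+4}$, with $f$ polynomial, for both the infinite-volume and the periodized dynamics. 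These are available for $B_\infty$-interactions via the standard estimates in the appendix, and they hold uniformly in $k$ because $\sup_k\|\Phi^{u,k}_0\|_\nu<\infty$.
\item The generator estimate above, applied to $B=\alpha_{u,t}A$.
\end{itemize}
Integrating in $u$ yields the polynomially growing $f_\nu(|t-s|)$ and the convergence to zero.

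The main obstacle is to keep the derivative loss fixed at exactly $4$ while both the Lieb--Robinson estimate and the boundary splitting consume norm indices. To manage this we will carry the norm index as a free parameter throughout and accept losses in the inner estimates. If needed, we will interpolate using the fact that $\|\alpha A\|_{\nu}$ is controlled by $\|A\|_\nu$ without any loss; this lets the generator estimate bear the whole loss of $4$, applied with the index $\nu$ directly.
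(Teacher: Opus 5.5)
Your generator estimate is essentially workable: the bulk/boundary split at scale $k/3$ is the right idea. The boundary terms can be absorbed by the crude bound $\|B\|_\nu\le 3(1+k)^\nu\|B\|$ for $B\in\mA_{\La_k}$, because they carry an explicit rate. For a fixed loss of $4$, though, you need the translation-covariant commutator bound $\|[T_\gamma B,C]\|_\nu\lesssim\|B\|_{\nu+m}\|C\|_{\nu+m}(1+\|\gamma\|)^{-m}$ rather than Lemma~\ref{lem: sum representation of generator}, whose loss grows with $\nu$.

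The dynamics estimate, however, has a genuine gap. A smaller point first: your Duhamel formula applies $\alpha^k_{s,u}$ to $A-\E_kA$ and to $\mL_{\Phi^u}\alpha_{u,t}A$, which do not lie in $\mA_{\La_k}$. A correct version is
\[
\alpha_{s,t}A-\alpha^k_{s,t}\E_kA=(1-\E_k)\,\alpha_{s,t}A+\int_s^t \D u\;\alpha^k_{s,u}\bigl(\E_k\,\i\mL_{\Phi^u}-\i\mL_{\Phi^{u,k}}\E_k\bigr)\alpha_{u,t}A\,.
\]
The serious problem is the bulk error $\sum_{x}[T_x(\Phi^u_0-\Phi^{u,k}_0),\cdot\,]$, which tends to zero with no rate. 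Here, and in any Duhamel arrangement involving only $\alpha$ and $\alpha^k$, this error sits under (or feeds into) $\alpha^k$. You therefore need $\|\alpha^k_{s,u}B\|_\nu\le f(|s-u|)\|B\|_\nu$ uniformly in $k$.

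You claim this follows from the standard estimates because $\sup_k\|\Phi^{u,k}_0\|_\nu<\infty$, but it does not. As an interaction on $\Z^2$, the periodized $\Phi^{u,k}$ contains wrap-around terms. For example, a hopping $a^*_0a_{e_1}$ translated by $\gamma=(k,0)$ becomes, up to a phase, $a^*_{(k,0)}a_{(-k,0)}$, a term of norm $O(1)$ and diameter $2k$. Hence $\|\Phi^{u,k}\|_\nu\gtrsim(1+2k)^\nu$, and Lemma~\ref{lem:cocycles} gives nothing uniform. You also cannot fall back on the crude factor $3(1+k)^\nu$ for $\alpha^k$, precisely because the bulk error has no rate.

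The missing idea is the intermediate non-periodic interaction $\Phi^{t,k,\mathrm{np}}$. It places the coarse-grained pieces $\E_{\La_l}\Phi^{t,k}_0-\E_{\La_{l-1}}\Phi^{t,k}_0$ at the sites $\gamma\in\La_k$ using the \emph{infinite-volume} translations $T_\gamma$. Its norms obey $\|\Phi^{t,k,\mathrm{np}}\|_\nu\le c_\nu\|\Phi^{t,k}_0\|_{\nu+4}$ uniformly in $k$, so its cocycle $\alpha^{k,\mathrm{np}}$ has uniform polynomial decay-norm bounds. The argument then has two comparisons.
\begin{itemize}
\item Compare $\alpha$ with $\alpha^{k,\mathrm{np}}$ entirely in infinite volume. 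There the rate-free bulk error and the $O(1/k)$ tail $\sum_{\gamma\notin\La_k}[T_\gamma\Phi^u_0,\cdot\,]$ are harmless, since both dynamics are uniformly controlled.
\item Compare $\alpha^{k,\mathrm{np}}$ with $\alpha^k$ only in operator norm, where $\alpha^k$ is isometric. Both preserve $\mA_{\La_{2k}}$, so the crude factor $3(1+2k)^\nu$ is now affordable. Indeed, $T_\gamma-T|_{\La_k,\gamma}$ annihilates $\E_{\La_{k/3}}\Phi^{u,k}_0$ for $\gamma\in\La_{2k/3}$. All remaining contributions (boundary $\gamma$, and the tail $(1-\E_{\La_{k/3}})\Phi^{u,k}_0$) give commutators with $\alpha^{k,\mathrm{np}}_{u,t}A$ of size $O(k^{-\nu-3})$, by disjointness of supports and decay. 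This beats the factor $(1+2k)^{\nu+2}$.
\end{itemize}
Alternatively, you could prove a Lieb--Robinson bound on the discrete torus, where the periodized interaction does have uniformly bounded norms in the torus metric. You would then show that it controls the $\Z^2$ decay norms of observables localized near the origin. That is a separate argument which your proposal does not supply.
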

\noindent
See Appendix~\ref{proof:finite_vol_approx} for the proof of Lemma~\ref{lem:finite_vol_approx}.

\vspace{0.5\baselineskip}

To see that $H_{(b,\mu,\lambda)}$ and its finite volume restrictions 
$H_{(b,\mu,\lambda),n}$ satisfy the assumptions of
Lemma~\ref{lem:finite_vol_approx}, note that for all $\nu \in \N_0$
\[
\norm{ v^b - \E_{\La_{k_n}}v^b}_{\nu} \xrightarrow{n\to \infty} 0
\]
and 
\[ 
\norm{  \dd\Gamma(\mathfrak{h}^b_0) -  \dd\Gamma(\mathfrak{h}^b_{n,0}) }_{\nu}
 \leq \norm{\dd\Gamma(\mathfrak{h}^b_0)  - \E_{\La_{k_n/2}} \dd\Gamma(\mathfrak{h}^b_0)  }_{\nu}+ 
\norm{\E_{\La_{k_n/2}} \dd\Gamma(\mathfrak{h}^b_0) -  \dd\Gamma(\mathfrak{h}^b_{n,0}) }_{\nu}\,.
 \]
 Here the first term vanishes for $n\to\infty$ since $\dd\Gamma(\mathfrak{h}^b_0)\in\mA_\infty$, and for the second term we find
\begin{align*}
 \hspace{2em}&\hspace{-2em}   
    \norm{\E_{\La_{k_n/2}} \dd\Gamma(\mathfrak{h}^b_0) -  \dd\Gamma(\mathfrak{h}^b_{n,0})  }_{\nu}
 \\
 &\leq
 \sum_{x,y\in\La_{k_n/2}} \delta_{(x,y)\in R_{n,0}}  \sum_{i,j =1}^\fd \norm*{ a^*_{x,i} \, (\mathfrak{h}^b(x,y)_{i,j}-\mathfrak{h}^b_n(x,y)_{i,j})\,  a_{y,j}}_{\nu} 
 \\
 &\leq \sum_{x,y\in\La_{k_n/2}} \delta_{(x,y)\in R_{n,0}} 
 \sum_{i,j =1}^\fd \sum_{z\in(L_n\Z)^2\setminus\{0\}}\norm*{ a^*_{x,i} \, (\mathfrak{h}^b(x,y+z)_{i,j})\,  a_{y,j}}_{\nu}
 \\
 &\leq
 \sum_{x,y\in\La_{k_n/2}} \delta_{(x,y)\in R_{n,0}} 
 \sum_{i,j =1}^\fd \sum_{z\in(L_n\Z)^2\setminus\{0\}}
  3\, \left(\tfrac{L_n}{2}\right)^\nu\, \abs{\mathfrak{h}^b(x,y+z)_{i,j}}
  \\
 &\leq 
 3\, \left(\tfrac{L_n}{2}\right)^\nu \,\fd^2\,\sum_{x,y\in\La_{k_n/2}} \delta_{(x,y)\in R_{n,0}} 
\sum_{z\in(L_n\Z)^2\setminus\{0\}}
    C\,   \e^{-c\|x-y-z \|}  
\\
&\leq
3\, \left(\tfrac{L_n}{2}\right)^\nu \,\fd^2\,\sum_{x,y\in\La_{k_n/2}} \delta_{(x,y)\in R_{n,0}} 
\sum_{z\in(L_n\Z)^2\setminus\{0\}}
    C\,   \e^{c\|x-y\|} \e^{-c\|z \|}  
\\
&\leq
3\, \left(\tfrac{L_n}{2}\right)^\nu \,\fd^2\,C\, L_n^2\,\e^{cL_n/2}\,
\sum_{z\in(L_n\Z)^2\setminus\{0\}}
     \e^{-c\| z\|}  
\\
&\leq
3\, \left(\tfrac{L_n}{2}\right)^\nu \,\fd^2\,C\, L_n^2\,\e^{cL_n/2}\,
  \frac{4\e^{-cL_n}}{(1- \e^{\tfrac{-cL_n}{2}})^2} \; \xrightarrow{n\to\infty}\;0  \,.     
\end{align*}
Hence, Lemma~\ref{lem:finite_vol_approx} yields that
\begin{equation} \label{eq:dynamics_converges}
        \sup_{s \in \R} \sup_{A\in \mA_\infty\setminus \{0\}} \frac{\lVert \e^{-\i s \mL_{H_{(b,\mu,\lambda),n}}}\E_{\La_{k_n}} A - \e^{-\i s \mL_{H_{(b,\mu,\lambda)}}} A \rVert_\nu}{f_\nu(|s|)\,\lVert A \rVert_{\nu+4}}\,  \xrightarrow{n\to \infty} 0\,.
\end{equation}

As discussed in detail  in \cite{WMMMT24}, the result of \cite{de2019persistence} implies  that there exists $\lambda_0>0$ such that for $|\lambda|<\lambda_0$  and all $n\in\N$  the operator $H_{(b,\mu,\lambda),n} $ has a unique $T^b|_{\Lambda_{k_n}}$-invariant ground state $\w_{(b,\mu,\lambda)}^{(n)}$ with  gap at least equal to $\frac{1}{2}\,\mathrm{dist}(\mu, \sigma(\mathfrak{h}_{n}^{b})  )$, which, by the spectral inclusion $\sigma(\mathfrak{h}^b_n) \subset \sigma(\mathfrak{h}^b) $,  is larger or equal to $\frac{1}{2}\, \mathrm{dist}(\mu, \sigma(\mathfrak{h}^{b})  )>g_*$ uniformly in $n$.  Moreover, this $\lambda_0$ can be chosen uniformly for all $(b,\mu)\in B_\delta(b_0,\mu_0)\cap(Q\times\R)$, as it depends only on interaction norms of $H_{(b,\mu,0)}$ and   $v^b$  that are uniformly bounded for all $(b,\mu)\in B_\delta(b_0,\mu_0)\cap(Q\times\R)$, see \cite{de2019persistence}.

For $\lambda=0$ the unique ground states of $H_{(b,\mu,0),n}$ resp.\ $H_{(b,\mu,0)}$ are the quasi-free states with two-point functions given by the summation kernel of $P_{(-\infty,\mu)}^{\mathfrak{h}^b_{n}}$, the Fermi projection of $\mathfrak{h}_{n}^b$, resp.\ of $P_{(-\infty,\mu)}^{\mathfrak{h}^b}$, the Fermi projection of $\mathfrak{h}^b$.
Automorphic equivalence in finite volume \cite{BMNS12} now implies that
\[
\w_{(b,\mu,\lambda)}^{(n)} = \w_{(b,\mu,0)}^{(n)} \circ \alpha_{(b,(0, \lambda))}^{(n)},
\]
where $\alpha_{(b,(0, \lambda))}^{(n)}$ is the unique solution of 
\[
\partial_\lambda \,\alpha_{(b,( 0, \lambda))}^{(n)} = \alpha_{(b,( 0, \lambda))}^{(n)} \circ \i \mL_{K^{(b,\lambda),n}}\,,
\quad \alpha_{(b,(0, 0))}^{(n)} = \mathbf{1}\,,
\]
with 
\[
K^{(b,\lambda),g_*,n}_0   := 
-  \int_\R \D s\, W_{g_*}(s)
\,\e^{\i s\mL_{H_{(b,\mu,\lambda),n}}} \E_{\La_{k_n}}(v^b)
\]
and
\[ 
K^{(b,\lambda),g_*,n}   := 
 \sum_{\gamma\in\Lambda_{k_n}}  T^b|_{\Lambda_{k_n}, \gamma}  K^{(b,\lambda),g_*,n}_0 
\,.
\]
  To apply Lemma~\ref{lem:finite_vol_approx} to   $K^{(b,\lambda),g_*,n}$, we recall the definition of $K^{(b,\lambda),g}$ in \eqref{KDef} and observe that with \eqref{eq:dynamics_converges} we have
\begin{align*}
    \hspace{2em}&\hspace{-2em}
    \lVert    K^{(b,\lambda),g_*,n}_0 -  K^{(b,\lambda),g_*}_0  \rVert_\nu
    \\
    &\leq
    \int_{\R}\mathrm{d}s \, \abs{W_{g_*}(s)}  \left\|
    \e^{\i s\mL_{H_{(b,\mu,\lambda),n}}}  \, \E_{\La_{k_n}} v^b - \e^{\i s\mL_{H_{(b,\mu,\lambda)}}}   \, v^b 
    \right\|_\nu  
    \\
    &\leq
    \sup_{u\in\R}\frac{\lVert \e^{-\i u \mL_{H_{(b,\mu,\lambda),n}}}\E_{\La_{k_n}} v^b - \e^{-\i u \mL_{H_{(b,\mu,\lambda)}}} v^b \rVert_\nu}{f_\nu(|u|)\,\lVert v^b \rVert_{\nu+4}}
    \int_{\R}\mathrm{d}s \, \abs{W_{g_*}(s)} \,f_\nu(|s|)\, \|v^b\|_{\nu+4} 
    \\
    &\stackrel{n\to\infty}{\rightarrow} 0\,.
\end{align*}
 Thus
\begin{align}\label{eq:Auto_convergence}
        \sup_{\lambda \in [-\lambda_0,\lambda_0]} \sup_{A\in \mA_\infty\setminus \{0\}} \frac{\lVert \alpha_{(b,(0, \lambda))}^{(n)} \E_{\La_{k_n}} A - \alpha_{(b,(0, \lambda))} A \rVert_\nu}{\lVert A \rVert_{\nu+4}}\,  \xrightarrow{n\to \infty} 0\, .
\end{align}

To show that also the sequence of states $\w_{(b,\mu,\lambda)}^{(n)}$ converges, it remains to check this  for $\lambda=0$:
Using Wick's rule it follows that for any fixed finite region $\Lambda\subset\Z^2$ 
\begin{align*}
    \hspace{2em}&\hspace{-2em}
    \lim_{n\to\infty} \sup_{x,y\in\Lambda} \left\|P_{(-\infty,\mu)}^{\mathfrak{h}^b_{n}}(x,y) - P_{(-\infty,\mu)}^{\mathfrak{h}^b}(x,y)\right\| =0 
    \\
    &\quad\Rightarrow\quad 
    \forall A\in\mA_\Lambda\quad \lim_{n\to\infty}
    \left|\left(\w_{(b,\mu,0)}^{(n)}-\w_{(b,\mu,0)}\right)(A)\right|=0\,.    
\end{align*}
The premise in the above implication follows  from the aliasing formula for Fourier series: since we have $(L_n\Z)^2 $-periodicity,
\begin{align*}
    P_{(-\infty,\mu)}^{\mathfrak{h}^b_{n}}(x,y) 
    &= 
    \sum_{m\in \Z^2} P_{(-\infty,\mu)}^{\mathfrak{h}^b}(x,y-mL_n ) \numberthis \label{eq:fermi_restriction} 
    \\ 
    &=
    P_{(-\infty,\mu)}^{\mathfrak{h}^b}(x,y) + \sum_{m\in\Z^2\setminus\{0\}} P_{(-\infty,\mu)}^{\mathfrak{h}^b}(x,y-mL_n)\,,    
\end{align*}
see Appendix~\ref{app:BF} for details.
Bloch--Floquet theory, reviewed in the same Appendix, also shows exponential decay of the kernel away from the diagonal, i.e.\ there are constants $c,C>0$ such that 
\begin{equation} \label{eqn:decay_of_kernel}
\forall x,y\in\Z^2\,:\qquad \left\|P_{(-\infty,\mu)}^{\mathfrak{h}^b}(x,y )\right\|\leq C\,\e^{-c\|x-y\|}\,,
\end{equation}
and thus
\[ \lim_{n\to\infty} \sup_{x,y\in\Lambda} \left\|P_{(-\infty,\mu)}^{\mathfrak{h}^b_{n}}(x,y) - P_{(-\infty,\mu)}^{\mathfrak{h}^b}(x,y)\right\| =0 \]
follows.
Hence, for all $A\in\mA_0$ we have $\lim_{n\to\infty}
|(\w_{(b,\mu,0)}^{(n)}-\w_{(b,\mu,0)})(A)|=0$, and by density of $\mA_0$ in $\mA$ and boundedness of states, actually  for all $A\in\mA$   we have  
\begin{equation}\label{eq:statecobv}
    \lim_{n\to\infty}
|(\w_{(b,\mu,0)}^{(n)}\circ \E_{\Lambda_{k_n}}-\w_{(b,\mu,0)})(A)|=0\,.
\end{equation}

Finally, putting everything together, we obtain for $A\in\mA_0$ and $n\in \N$ such that $A \in \mA_{\Lambda_{k_n}}$
\begin{eqnarray*}
    \left|\w_{(b,\mu,\lambda)}^{(n)}(A)-\w_{(b,\mu,\lambda)}(A)\right|
    &\leq &
    \left|\w_{(b,\mu,0)}^{(n)}\circ \E_{\Lambda_{k_n}}\,  \left( \alpha_{(b,(0,\lambda))}^{(n)}\,A - \alpha_{(b,(0,\lambda))} A\right) \right|  
    \\
    &&+\;
    \left|\left( \w_{(b,\mu,0)}^{(n)}\circ \E_{\Lambda_{k_n}} - \w_{(b,\mu,0)}\right)  \alpha_{(b,(0,\lambda))} (A)  \right|
    \\
    &\leq &
    \left\| \alpha_{(b,(0,\lambda))}^{(n)}(A) -\alpha_{(b,(0,\lambda))}(A)\right\| 
    \\
    &&+\;
    \left|\left(\w_{(b,\mu,0)}^{(n)} \circ \E_{\Lambda_{k_n}} - \w_{(b,\mu,0)} \right) \circ \alpha_{(b,(0,\lambda))} (A)\right|
    \\
    &\stackrel{n\to\infty}{\longrightarrow}&0\,,
\end{eqnarray*}
where we used \eqref{eq:Auto_convergence} for the first term and \eqref{eq:statecobv}
for the second.

 For local  $A\in\mA_0$ we also have $ \|(\mathcal L_{H_{(b,\mu,\lambda),n}}-\mathcal L_{H_{(b,\mu,\lambda)}})A\|\to0$ because of Lemma~\ref{lem:finite_vol_approx}, and the finite-volume ground states satisfy the uniform gap inequality. Hence Lemma~\ref{lem:gaplimit} implies that $\omega_{(b,\mu,\lambda)}$ is a locally unique gapped ground state with gap at least $g_*$.

\subsection{All other  magnetic fields} \label{sec:incommensurate}

This part of the proof extends the statement proven in the first part by removing the restriction that $b$ must lie in $Q$. For this let $(b_0,\mu_0)\in \R^2 $ with $\mu_0\notin \sigma(\mathfrak{h}^{b_0})$. By the statement of Section \ref{sec:commensurate} there exist $\delta,g_*,\lambda_0>0$ such that for all $(b,\mu)\in B_{\delta}(b_0,\mu_0)$ with $b\in Q$ and all $\lambda \in (-\lambda_0,\lambda_0)$ the state $\w_{(b,\mu,\lambda)}$, defined as in \eqref{omegadef} with respect to $g_*$, is a locally unique gapped ground state of $H_{(b,\mu,\lambda)}$ with gap $g_*$.

Now let $(b,\mu)\in B_{\delta}(b_0,\mu_0)$  and $\lambda \in (-\lambda_0,\lambda_0)$. We pick a sequence $(b_n)_{n\in\N}$ in $Q$ with $\lim_{n\to\infty}b_n=b$ and $(b_n,\mu)\in B_{\delta}(b_0,\mu_0)$, for all $n\in \N$. We know that for each $n\in \N$ the state $\w_{(b_n,\mu,\lambda)}$ is a locally unique gapped ground state of $H_{(b_n,\mu,\lambda)}$ with gap $g_*$. To conclude  that also $\w_{(b,\mu,\lambda)}$ is a locally unique gapped ground state of $H_{(b,\mu,\lambda)}$ with gap $g_*$, we employ Lemma~\ref{lem:gaplimit}. For this we need to check that for all $A\in\mA_0$ we have
\begin{enumerate}[label={(\roman*)}, ref={(\roman*)}]
\item \label{eq:b_conv_H} $\lim_{n\to\infty}\|\mL_{H_{(b_n,\mu,\lambda)}} A - \mL_{H_{(b,\mu,\lambda)}} A\|=0$ and 
\item \label{eq:b_conv_omega} $\lim_{n\to\infty} \w_{(b_n,\mu,\lambda)}(A) =  \w_{(b,\mu,\lambda)}(A).$
\end{enumerate}
We prove 
  the above statements for arbitrary sequences $(b_n)_{n\in\N}$ in $\R$ with $b_n\to b$ and thus, in particular, also establish continuity of the maps $b\mapsto \w_{(b,\mu,\lambda)}(A)$. We'll resort to the following two intermediate results, whose proofs are postponed to Appendix~\ref{sec:convergences}.

\begin{restatable}{lemma}{lemconvergenceofcocycles}
\label{lem: convergence of cocycles}
    Let $I \subseteq \R$ be an interval and for $n\in \N\cup \{\infty\}$ let $T^n$ be a translation and $(\alpha_{(s,t),n})_{s,t\in I}$  the cocycle generated by the continuous $T^n$-invariant family of $ B_\infty$-interactions $(\Phi^{t,n})_{t\in I}$. Assume that for all   $ \nu \in \N_0 $ and $ \g \in \Z^2 $ it holds that
    \begin{align*}
        \sup_{s\in I}\lVert \Phi_0^{s,n} - (T_\g^n)^{-1}T_\g^\infty \,  \Phi^{s,\infty}_0 \rVert_{\nu} \xrightarrow{n \to \infty} 0\, .
    \end{align*}
    Then for   all $\nu \in \N_0$
    \begin{align*}
       \sup_{t\in I} \sup_{A\in \mA_\infty\setminus\{0\}} \frac{\norm{\mL_{\Phi^{t,n}} A - \mL_{\Phi^{t,\infty}} A }_{\nu}}{\norm{A}_{\nu+3}} \xrightarrow{n\to\infty} 0\,.
    \end{align*}
    and there exists an increasing function $f_\nu:[0,\infty)\to (0,\infty)$, growing at most polynomially at infinity, such that 
    \begin{align*}
        \sup_{s,t \in I} \sup_{A\in \mA_\infty\setminus \{0\}} \frac{\lVert \alpha_{(s,t),n} A - \alpha_{(s,t),\infty} A \rVert_\nu}{ f_\nu(|t-s|) \,\lVert A \rVert_{\nu+3}}\,  \xrightarrow{n\to \infty} 0\, .
    \end{align*}
\end{restatable}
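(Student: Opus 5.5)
The plan is to run the same two-stage argument as for Lemma~\ref{lem:finite_vol_approx}: first compare the generators, then compare the dynamics through a Duhamel formula combined with the a priori Lieb--Robinson type bounds on $\|\cdot\|_\nu$ norms available for $B_\infty$-interactions in Appendix~\ref{sec:basic}.

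For the derivations, write
\[
\mL_{\Phi^{t,n}}A-\mL_{\Phi^{t,\infty}}A=\sum_{\g\in\Z^2}\big[T^n_\g\Phi^{t,n}_0-T^\infty_\g\Phi^{t,\infty}_0,A\big]
=\sum_{\g\in\Z^2}\big[T^n_\g\big(\Phi^{t,n}_0-(T^n_\g)^{-1}T^\infty_\g\Phi^{t,\infty}_0\big),A\big].
\]
The $\g$-dependence of the hypothesis is essential here. Since $T^n_\g$ is a translation, it shifts supports by $\g$ and preserves the local norms up to relocation. Hence the $\g$-th term is small both because the difference is small in $\|\cdot\|_{\nu'}$ and, through the decay of $A$ and of $\Phi_0$ away from $\g$, because the commutator carries a factor roughly $(1+|\g|)^{-(\nu'-\nu-3)}$.

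The delicate point is that the hypothesis gives, for each fixed $\g$, convergence in $n$, but no uniformity in $\g$. I will handle this by splitting the sum. For $|\g|\le R$, there are finitely many terms, and each tends to $0$ by hypothesis. For $|\g|>R$, I will bound each $\Phi$ separately using $\sup_n\sup_s\|\Phi^{s,n}_0\|_{\nu'}<\infty$. This uniform bound follows from the hypothesis at $\g=0$ together with continuity of $s\mapsto\Phi^{s,\infty}_0$ on compacts; I also expect the interval to be compact, or at least the norms to be bounded uniformly in $s$. The standard commutator estimates then control the tail by $C R^{-1}\|A\|_{\nu+3}$. Letting $n\to\infty$ first and then $R\to\infty$ gives the first claim, uniformly in $t$ and in $A$ after normalization by $\|A\|_{\nu+3}$.

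For the cocycles, I will use the Duhamel identity
\[
\alpha_{(s,t),n}A-\alpha_{(s,t),\infty}A=\int_s^t \D u\;\alpha_{(s,u),n}\big(\mL_{\Phi^{u,n}}-\mL_{\Phi^{u,\infty}}\big)\alpha_{(u,t),\infty}A,
\]
with signs and conventions matching the generator equation. Two inputs then finish the argument:
\begin{itemize}
\item[(a)] $\|\alpha_{(u,t),\infty}A\|_{\nu+3}\le p(|t-u|)\|A\|_{\nu+3}$ with $p$ polynomial;
\item[(b)] $\|\alpha_{(s,u),n}B\|_\nu\le p(|u-s|)\|B\|_\nu$ uniformly in $n$.
\end{itemize}
Both are the known polynomial-growth bounds of dynamics generated by $B_\infty$-interactions, applied with the uniform interaction-norm bounds above. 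Combining them with the first claim gives the estimate with $f_\nu(r)=r\,p(r)^2$, which is increasing and polynomially bounded.

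I expect the main obstacle to be the non-uniformity in $\g$ in the generator step, together with the need to verify that the constants in (a) and (b) depend only on $\sup_n$ of the interaction norms, so that they are uniform in $n$. The index shift from $\nu+4$ in Lemma~\ref{lem:finite_vol_approx} to $\nu+3$ here reflects that no box truncation $\E_{\La_k}$ is needed.
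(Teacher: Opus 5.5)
Your route is the paper's: the same rewriting of $\mL_{\Phi^{t,n}}A-\mL_{\Phi^{t,\infty}}A$ as $\sum_{\g}[T^n_\g(\Phi^{t,n}_0-(T^n_\g)^{-1}T^\infty_\g\Phi^{t,\infty}_0),A]$, a commutator estimate with a factor $(1+\norm{\g})^{-3}$ against $\norm{\cdot}_{\nu+3}$-norms, and dominated convergence in $\g$ (your split at $\norm{\g}\le R$ is exactly that), followed by Duhamel with polynomial a priori bounds.

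The step that does not go through as written is the uniformity in $n$ in (b). You apply the a priori bounds ``with the uniform interaction-norm bounds above''. What you actually established is $\sup_n\sup_s\norm{\Phi^{s,n}_0}_{\nu'}<\infty$, which is a bound on decay norms of the observable $\Phi^{s,n}_0$. Bounds on $\alpha_{(s,u),n}$ (compare Lemma~\ref{lem:cocycles}) are instead expressed through the interaction norms $\norm{\Phi^{s,n}}_{\nu}$ of Definition~\ref{norm interaction}, and the hypothesis does not control these. For example, take two distinct standard representatives $M_1,M_2\ni 0$ (say the horizontal and vertical three-site segments centered at $0$). Add to $\Phi^{t,n}$ the $T^n$-periodic interaction equal to $\kappa_n\, n_\g$ on $M_1+\g$ and to $-\kappa_n\, n_\g$ on $M_2+\g$. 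This changes neither $\Phi^{t,n}_0$ nor the derivation nor the cocycle, yet $\sup_n\norm{\Phi^{t,n}}_\nu=\infty$ if $\kappa_n\to\infty$.

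The missing step is to replace $\Phi^{t,n}$ by the coarse-grained interaction built from $\Phi^{t,n}_0$ as in Proposition~\ref{prop: interaction associated to observable}. This interaction is $T^n_\g(\E_{\La_l}-\E_{\La_{l-1}})\Phi^{t,n}_0$ on $\La_l+\g$ and $T^n_\g\E_{\La_0}\Phi^{t,n}_0$ on $\La_0+\g$. It induces the same derivation, hence the same cocycle by uniqueness. A direct count gives $\norm{\Phi^{t,n,\mathrm{cg}}}_\nu\le c_\nu\norm{\Phi^{t,n}_0}_{\nu+4}$ with $c_\nu$ independent of $n$, and only then does your bound on $\Phi^{t,n}_0$ yield (b).

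The polynomial growth you need also does not come from Lemma~\ref{lem:cocycles}. For $B_\infty$-interactions that lemma only gives growth like $\exp(c_\nu\sup_t\norm{\Phi^t}_{2d+1+\nu}|u-v|)$. The paper instead uses \cite[Lemma 2.10]{BTW26automorphic}, applied to the coarse-grained interactions.

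Finally, deducing $\sup_n\sup_s\norm{\Phi^{s,n}_0}_{\nu'}<\infty$ from the hypothesis at $\g=0$ requires that $(T^n_0)^{-1}T^\infty_0$ is isometric for $\norm{\cdot}_{\nu'}$. This map need not be the identity, since translations need not be homomorphisms. It is nevertheless isometric, because it preserves supports and therefore commutes with every $\E_M$ by the uniqueness in Proposition~\ref{Ex+UniqueExpectation}.
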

 \noindent  See Appendix~\ref{proof: convergence of cocycles} for the proof of Lemma~\ref{lem: convergence of cocycles}.

\begin{restatable}{lemma}{lemtranslatingbackandforth}\label{lem: translating back and forth}
    Let $b_0, b \in \R$ and $\gamma \in \Z^2$. It holds that 
    $$ (T^{b}_\g)^{-1} \, T^{b_0}_\g = \e^{\i \g_2(b_0-b)\mL_{X_1}}$$
    and for each $\nu \in \N_0$ there is a $c_\nu>0$ such that for all $A\in \mA_\infty$
    \begin{align*}
        \lVert(T^{b}_\g)^{-1} \, T^{b_0}_\g \, A - A \rVert_\nu \leq c_\nu \, \lvert b_0 - b \rvert \, \lvert \g_2 \rvert \, \lVert A \rVert_{\nu+4}\, .
    \end{align*}
\end{restatable}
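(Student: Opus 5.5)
The plan is to verify the identity on generators of the CAR algebra and then deduce the norm estimate from a Duhamel-type argument for the one-parameter group generated by $\mL_{X_1}$.

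\emph{The identity.} By \eqref{magtrans}, $T^{b_0}_\g(a_{y,j}) = \e^{-\i b_0 y_1\g_2} a_{y+\g,j}$. The inverse is $(T^{b}_\g)^{-1}(a_{z,j}) = \e^{\i b (z_1-\g_1)\g_2} a_{z-\g,j}$. Composing gives
\[
(T^{b}_\g)^{-1} T^{b_0}_\g (a_{y,j}) = \e^{-\i b_0 y_1\g_2}\,\e^{\i b y_1\g_2}\, a_{y,j} = \e^{-\i (b_0-b)\g_2 y_1}\, a_{y,j}.
\]
On the other side, $X_1=\sum_x x_1 n_x$ is gauge-invariant. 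With the sign convention $\mL_{X_1}A = [X_1,A]$, which is well defined on $\mA_0$ and generates the global phase-type automorphism group, we have $[n_x,a_{y,j}] = -\delta_{xy}a_{y,j}$. Hence $\mL_{X_1} a_{y,j} = -y_1 a_{y,j}$ and
\[
\e^{\i t\mL_{X_1}} a_{y,j} = \e^{-\i t y_1} a_{y,j}.
\]
Taking $t=\g_2(b_0-b)$ reproduces the phase above. Both sides are $*$-automorphisms agreeing on generators, so they coincide on $\mA$. If the paper's convention for $\mL$ differs by a sign, the sign of $t$ flips accordingly; I would check this against Appendix~\ref{sec:basic}.

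\emph{The estimate.} Write $\tau_t := \e^{\i t\mL_{X_1}}$. For $A\in\mA_\infty$ we have
\[
\tau_t A - A = \i\int_0^t \tau_s\, \mL_{X_1} A \,\D s .
\]
This is first justified on $\mA_0$ and then extended by density, since $\tau_t$ acts termwise on the local decomposition. Two facts are needed.
\begin{enumerate}
\item[(a)] $\tau_s$ is isometric in every $\|\cdot\|_\nu$. The reason is that $\tau_s$ maps $\mA_\Lambda$ into itself for every $\Lambda$, being a product of on-site phase rotations, and it commutes with the conditional expectations $\E_\Lambda$, from which the decay norms are built.
\item[(b)] $\|\mL_{X_1}A\|_\nu \le c_\nu \|A\|_{\nu+4}$. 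This is the main obstacle, because $X_1$ is unbounded and grows linearly, so $\mL_{X_1}A$ cannot be bounded in norm by $\|A\|$ alone.
\end{enumerate}

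For (b), I would use the quasi-local decomposition $A = \sum_k (\E_{\Lambda_k} - \E_{\Lambda_{k-1}})A$. On a term supported in $\Lambda_k$ (around the origin), $\mL_{X_1}$ acts as the commutator with $\sum_{x\in\Lambda_k} x_1 n_x$. The point is gauge invariance: restricted to $\mA_{\Lambda_k}$, this commutator equals the commutator with
\[
\sum_{x\in\Lambda_k}(x_1 - c)\, n_x
\]
for any constant $c$. Its norm is bounded by $2\fd\,|\Lambda_k|\,k \lesssim k^3$, and for odd elements one uses the same bound since $\mL_{X_1}$ is a derivation. Summing against the decay $\|(\E_{\Lambda_k}-\E_{\Lambda_{k-1}})A\|\lesssim k^{-(\nu+4)}\|A\|_{\nu+4}$ then yields (b) with a loss of at most $3$ (plus $1$ for summability), which is within the allowed $4$.

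Combining (a) and (b) gives
\[
\|\tau_t A - A\|_\nu \le |t|\, c_\nu\|A\|_{\nu+4},
\]
and with $t=\g_2(b_0-b)$ this is the claimed bound.
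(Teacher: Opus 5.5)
Your overall structure matches the paper's. You verify the identity on the $a_{y,j}$, and your sign $\mL_{X_1}a_{y,j}=-y_1a_{y,j}$ agrees with the paper's convention $\mL_\Phi A=\sum_M[\Phi(M),A]$. You observe that $\e^{\i s\mL_{X_1}}$ commutes with every $\E_M$ and is therefore isometric in all decay norms. A Duhamel argument then reduces everything to bounding $\|\mL_{X_1}A\|_\nu$. The identity and (a) are fine.

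\textbf{The gap is in (b).} You charge each shell $(\E_{\La_k}-\E_{\La_{k-1}})A$ with the full commutator weight $\lesssim k^3$ of $\La_k$ and with its decay $\lesssim k^{-(\nu+4)}\|A\|_{\nu+4}$. This produces terms of order $k^{-(\nu+1)}$. For $\nu\ge1$ that is enough: since $\mL_{X_1}$ commutes with $\E_{\La_m}$, the tails of $\mL_{X_1}A$ are $\sum_{k>m}$ of the same terms, and $(1+m)^\nu\sum_{k>m}k^{-(\nu+1)}$ is bounded. For $\nu=0$, however, the series is harmonic and diverges. So your argument only gives $\|\mL_{X_1}A\|_0\le c\,\|A\|_5$, not the claimed $\|A\|_4$: the ``plus 1 for summability'' is one power short. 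This loss would be harmless for how the lemma is used later in the paper, but it is not what the lemma states.

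The missing step is to localize site by site rather than shell by shell (equivalently, sum by parts in $k$). This is what the paper does via the commutator estimate \cite[Lemma~A.1]{WMMMT24}. Since $n_x$ is even, $[n_x,A]=[n_x,A-\E_{\La_{\|x\|-1}}A]$ for $x\neq0$, hence
\[
\|\mL_{X_1}A\|\le 2\fd\sum_{x\neq0}|x_1|\,\|A-\E_{\La_{\|x\|-1}}A\|\le 2\fd\,\|A\|_4\sum_{x\neq0}|x_1|\,\|x\|^{-4}<\infty
\]
on $\Z^2$. A site $x$ only sees the tail of $A$ beyond $\|x\|$, and there are $8\ell$ sites with $\|x\|=\ell$. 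This gives $\sum_\ell \ell^{-2}$ instead of your $\sum_k k^{-1}$. Splitting into $\|x\|\le m$ and $\|x\|>m$ in the same way controls the weighted tails for every $\nu$ with loss $4$.

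A smaller point: the shift by a constant $c$ via gauge invariance is not needed, since $|x_1|\le k$ on $\La_k$. It is also false for non-gauge-invariant $A$, which the lemma allows. All you actually use is that $n_x$ is even and therefore commutes with everything supported away from $x$. No separate argument for odd elements is required.
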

\noindent See Appendix~\ref{proof:translating back and forth} for the proof of Lemma~\ref{lem: translating back and forth}.

\medskip

With these results at hand, we see that~\ref{eq:b_conv_H} holds due to Lemma \ref{lem: convergence of cocycles} and the fact that, by Lemma~\ref{lem: translating back and forth},
\begin{eqnarray}\label{eq:H_compare}
\|H_{(b_n,\mu,\lambda),0} - (T^{b_n}_\gamma )^{-1}T^{b}_\gamma H_{(b,\mu,\lambda),0}\|_\nu &\leq&
\|H_{(b_n,\mu,\lambda),0} -H_{(b,\mu,\lambda),0}\|_\nu
\nonumber\\&&+\; \|H_{(b,\mu,\lambda),0} - (T^{b_n}_\gamma )^{-1}T^{b}_\gamma H_{(b,\mu,\lambda),0}\|_\nu
\nonumber\\
&\leq&
\|H_{(b_n,\mu,\lambda),0} -H_{(b,\mu,\lambda),0}\|_\nu
\nonumber\\&&+\;c_\nu \,  |b_n-b| \,|\gamma_2|\,\|H_{(b,\mu,\lambda),0}\|_{\nu+4}\,,
\end{eqnarray}
i.e., that the local terms of $H$ converge uniformly on bounded sets.

Coming to \ref{eq:b_conv_omega}, we again compare the states at $\lambda=0$ and the automorphisms separately:
\begin{eqnarray}
        |\w_{(b_n,\mu,\lambda)}(A)-\w_{(b,\mu,\lambda)}(A)|
        &\leq& 
        |\w_{(b_n,\mu,0)}\left( \alpha_{(b_n,(0,\lambda))} \,A- \alpha_{(b,(0,\lambda))}\,A\right)| \nonumber\\
        &&+\;|\left( \w_{(b_n,\mu,0)}-\w_{(b,\mu,0)}\right)(\alpha_{(b,(0,\lambda))}\, A) |
        \nonumber\\
        &\leq& \| \alpha_{(b_n,(0,\lambda))}\,A -\alpha_{(b,(0,\lambda))}\,A\|\label{est4}\\
        &&+\;|\left( \w_{(b_n,\mu,0)}-\w_{(b,\mu,0)}\right)(\alpha_{(b,(0,\lambda))}\,A) |\,. \label{est3}
    \end{eqnarray}
For \eqref{est3} we use again that on finite sets $\Lambda$ 
\begin{equation}\label{magneticfermi}
\lim_{n\to\infty} \sup_{x,y\in\Lambda} \left|P_{(-\infty,\mu)}^{\mathfrak{h}^{b_n}}(x,y) - P_{(-\infty,\mu)}^{\mathfrak{h}^b}(x,y)\right| =0\,,
\end{equation}
which implies weak$^*$-convergence of $\w_{(b_n,\mu,0)}$ to $\w_{(b,\mu,0)}$ by the same arguments as in the previous subsection.
Note that \eqref{magneticfermi} can be proved by the Riesz formula for the Fermi projections and resolvent estimates provided by the methods of gauge covariant magnetic perturbation theory, see~\cite{nenciu2006smoothness, cornean2010lipschitz, cornean2019magnetic}. 

It remains to show convergence of \eqref{est4}, for which we use again Lemma~\ref{lem: convergence of cocycles}. 
We find using Lemma~\ref{lem: translating back and forth} that
\begin{align*}
    \hspace{2em}&\hspace{-2em}
    \lVert    K^{(b_n,\lambda),g_*}_0 - (T_\g^{b_n})^{-1}T^{b}_\gamma K^{(b,\lambda),g_*}_0  \rVert_\nu 
    \\
    &\leq
    \lVert    K^{(b_n,\lambda),g_*}_0 - K^{(b,\lambda),g_*}_0 \rVert_\nu + \lVert    K^{(b,\lambda),g_*}_0 - (T_\g^{b_n})^{-1}T^{b}_\gamma K^{(b,\lambda),g_*}_0  \rVert_\nu
    \\
    &\leq
    \lVert    K^{(b_n,\lambda),g_*}_0 - K^{(b,\lambda),g_*}_0 \rVert_\nu + c_\nu \, |b_n-b|\,|\gamma_2|\, \| K^{(b,\lambda) ,g_*}_0\|_{\nu+4}\,    
\end{align*}
and
\begin{align*}
    \hspace{2em}&\hspace{-2em}
    \lVert    K^{(b_n,\lambda),g_*}_0 - K^{(b,\lambda),g_*}_0 \rVert_\nu
    \\
    &\leq
    \int_{\R}\mathrm{d}s \,\abs{W_{g_*}(s)}  \left\| \e^{\i s\mL_{H_{(b_n,\mu,\lambda)}}} \, (v^{b_n} - v^b) \right\|_\nu
    \\
    &\hspace{2em}+
    \int_{\R}\mathrm{d}s \,\abs{W_{g_*}(s)}  \left\| \left(\e^{\i s\mL_{H_{(b_n,\mu,\lambda)}}} -\e^{\i s\mL_{H_{(b,\mu,\lambda)}}}\right) \, v^{b} \right\|_\nu
    \\
    &\leq
    \int_{\R}\mathrm{d}s \, \abs{W_{g_*}(s)} \,f_\nu(|s|)\, \|v^{b_n}- v^b\|_{\nu}
    \\
    &\hspace{2em}+
    \sup_{u\in\R}\frac{\left\lVert  \left( \e^{\i u\mL_{H_{(b_n,\mu,\lambda)}}} -\e^{\i u\mL_{H_{(b,\mu,\lambda)}}} \right) \,v^{b} \right\rVert_\nu}{f_\nu(|u|)\,\lVert v^{b} \rVert_{\nu+4}}
    \int_{\R}\mathrm{d}s \, \abs{W_{g_*}(s)} \,f_\nu(|s|)\, \|v^{b}\|_{\nu+4}    
    \\
    &\stackrel{n\to\infty}{\rightarrow} 0\,.
\end{align*}

\subsection{Continuity of $(b,\mu,\lambda)\mapsto \w_{(b,\mu,\lambda)}$} \label{continuity_in_lambda}
Let $b_0$, $\mu_0$, $\delta$ and $\lambda_0$ as in Section \ref{sec:incommensurate}. Continuity of the map \[B_\delta(b_0,\mu_0)\times (-\lambda_0,\lambda_0),\, (b,\mu,\lambda)\mapsto \w_{(b,\mu,\lambda)}(A)\]
for all $A\in\mA$ now follows easily: first note that since all states have norm one, it suffices to prove continuity of the map for $A$ in a dense subset of $\mA$, e.g.\ for $A\in\mA_0$.
Second, we already observed that $\w_{(b,\mu,\lambda)} = \w_{(b,\mu_0,\lambda)}$
for all $(b,\mu,\lambda)\in B_\delta(b_0,\mu_0)\times (-\lambda_0,\lambda_0)$. Finally, continuity of 
$b\mapsto \w_{(b,\mu, \lambda)}(A)$ for $A\in\mA_0$ was proved in the previous subsection.

For all $(b,\mu) \in \R^2$ and all $\tilde{\lambda}, \lambda \in \R$ 
\begin{eqnarray*}
|\w_{(b,\mu,\tilde \lambda)}(A)-  \w_{(b,\mu,\lambda)}(A)| &=&
\left|\w_{(b,\mu,0)}\left(\alpha_{(b,(0,\tilde \lambda))}\,A-\alpha_{(b,(0, \lambda))}\,A \right)\right|\\
&\leq& \left\|\alpha_{(b,(0,\tilde \lambda))}\,A-\alpha_{(b,(0,\lambda))}\,A\right\|\\
&\leq & \int_{\tilde\lambda\land \lambda}^{\tilde\lambda\lor \lambda} \D u\, \|\mL_{K^{(b,u),g_*}} A\|
\\
&\leq&
|\lambda-\tilde\lambda|\, \sup_{u\in [\tilde\lambda\land \lambda, \tilde\lambda\lor \lambda]} c\,\|K^{(b,u),g_*}\|_{3}\,\|A\|_{5}
\end{eqnarray*}
where $c$ is some constant as in Lemma \ref{lem: sum representation of generator}.
Since the map $(b,u)\mapsto \|K^{(b,u),g_*}\|_{3}$ is bounded uniformly on compact subsets of $\R\times (-\lambda_0,\lambda_0)$, this implies that $\lambda\mapsto \w_{(b,\mu_0,\lambda)}(A)= \w_{(b,\mu_0,0)}\circ \alpha_{(b,(0,\lambda))}(A)$ is uniformly Lipschitz for $b$ in compacts and we can conclude that  $B_\delta(b_0,\mu_0)\times (-\lambda_0,\lambda_0),\, (b,\mu,\lambda)\mapsto \w_{(b,\mu,\lambda)}(A)$ is continuous.

\section{Proof of Theorem \ref{cor:universality}} \label{sec:universality}
 
By definition the family of states $\big(\w_{(b,\mu,\lambda)}\big)_{(b,\mu,\lambda)\in\mathcal{P}_{M_0}}$ satisfies for $(b,\mu,\lambda),(b,\mu,\tilde{\lambda}) \in \mP_{M_0}$ that 
\[
    \w_{(b,\mu,\lambda)} = \w_{(b,\mu,\tilde{\lambda})} \circ \alpha_{(b,(\tilde{\lambda},\lambda))}.
\]
This automorphic equivalence and the Chern--Simons formula (cf.~\cite[Lemma~5.2 and Corollary~5.3]{WMMMT24}) imply that the Hall conductivity \eqref{sigmaHdef} for the state $\w_{(b,\mu,\lambda)}$ is constant in $\lambda$ and since for each $(b,\mu,\lambda)\in \mP_{M_0}$ also the line segment $[(b,\mu,0),(b,\mu,\lambda)]$ is contained in $\mP_{M_0}$, one obtains: for all $(b,\mu,\lambda)\in \mathcal{P}_{M_0}$
\[ \sigma^{\mathrm{H}}_{(b,\mu,\lambda)} = \sigma^{\mathrm{H}}_{(b,\mu,0)} \,.
\]
It remains to show that the non-interacting Hall conductivity 
\[
\sigma^{\mathrm{H}}_{(b,\mu,0)} =
\overline{\w}_{(b,\mu,0)} \left( \i \, \left[ X_2\OD[(b,\mu,0)], X_1\OD[(b,\mu,0)] \right] \right) 
\]
is constant  and  quantized on $M_0$.
This will follow immediately once we show that the many-body formula for $\sigma^\mathrm{H}$ agrees, for non-interacting systems, with the one-body formula in terms of Fermi projections.
This is the content of the following proposition, which we formulate in greater generality also for non-periodic systems. In that case  we use the  Hall conductance defined in terms of so-called switch-functions instead of position operators. Compare  \cite{TW2025} and references therein.

\begin{proposition}\label{prop:dcformulas}
 Let $h\in\mL(\ell^2(\Z^2,\C^\fd))$ be self-adjoint  and $\mu\in\R\setminus\sigma(h)$ such that $g:= \mathrm{dist}(\mu,\sigma(h))>0$.  Let $P:= \chi_{(-\infty,\mu)}(h)$. Assume that the kernel $h(x,y)$ of $h$ has uniform exponential decay, i.e.,
 \begin{equation}\label{hexp}
 \exists c,C>0\;\forall x,y\in\Z^2\quad \|h(x,y)\|\leq C\e^{-c\|x-y\|}\,.
 \end{equation}

\begin{enumerate}
    \item[(i)] 
 Then $H:=\dd\Gamma(h-\mu)\in B_{\exp}$ and the unique quasi-free state $\w$ with $\w(a^*_{x,j}a_{y,i}) = P(x,y)_{ji}$   for all $x,y\in\Z^2$ and $j,i\in\{1,\ldots,\fd\}$ is the unique ground state of $\mL_{H}$ and has a gap of size at least $g$.
\item[(ii)] 
Let $\lambda_j(x):= \chi_{x_j\geq 0}(x)$, $j\in\{1,2\}$,  denote the one-body switch-functions, which define  corresponding multiplication operators $\lambda_j$ on $\ell^2(\Z^2,\C^\fd)$, and $\Lambda_j := \dd\Gamma(\lambda_j)$ their many-body versions. Then
 \[
\i\,\w([\Lambda_2\OD,\Lambda_1\OD]) = \i\,\tr (P [\lambda_2\ODT,\lambda_1\ODT]) \in \tfrac{1}{2\pi} \Z\,,
 \]
 where $\lambda_j\ODT := P\lambda_jP^\perp +P^\perp\lambda_jP $.
\item[(iii)] 
 If $H$ is periodic with respect to some translation $T$, then also
 \[ \overline\w([X_2\OD,X_1\OD])=\w([\Lambda_2\OD,\Lambda_1\OD])\,.
 \]
 \end{enumerate}
\end{proposition}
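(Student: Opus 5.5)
We treat the three parts separately, reducing every many-body quantity to a one-body expression in $P$ and $h$ by using that $\mL_{\dd\Gamma(a)}$ maps $\dd\Gamma(b)$ to $\dd\Gamma([a,b])$ and preserves the span of monomials of fixed degree.

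For (i), the exponential decay \eqref{hexp} shows directly that the interaction $Z\mapsto H(Z)$ is supported on one- and two-point sets with exponentially decaying norms, so $H\in B_{\exp}$. For the gap, we pass to the particle–hole transformed creation operators $b^*(f) := a^*(P^\perp f) + a(\overline{Pf})$. In these variables $\mL_H$ becomes $\dd\Gamma(|h-\mu|)$ on the Fock space of the GNS representation of $\w$, which is the Fock representation of the $b$'s. Since $|h-\mu|\geq g$, every vector orthogonal to the vacuum has energy at least $g$, and this gives the gap inequality for all $A\in\mA_0$. Uniqueness is the standard fact that a gapped quasi-free vacuum is the unique ground state: any ground state must be annihilated by all $b(f)$, which fixes its two-point function and hence, by Wick's theorem, the state itself.

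For (ii), we first note that with the cutoff function $W_g$ the off-diagonal part of a second-quantized one-body operator is again second quantized. Since $\mL_{\Lambda_j}H=\dd\Gamma([\lambda_j,h])$ and the Heisenberg evolution acts as $\e^{\i sh}\cdot\e^{-\i sh}$ on the one-body level, one expects
\[
\Lambda_j\OD=\dd\Gamma(\lambda_j\ODT) + (\text{a scalar}).
\]
Here the defining property of $W_g$ is used to identify the one-body operator with $P\lambda_jP^\perp+P^\perp\lambda_jP$. This identity must be justified as an identity of densely defined derivations, or at least for the commutator $[\Lambda_2\OD,\Lambda_1\OD]$, because $\lambda_j$ is unbounded in trace sense. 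The commutator $[\lambda_2\ODT,\lambda_1\ODT]$ is trace class: its kernel decays exponentially away from the corner $\{x_1=0\}\cap\{x_2=0\}$, which follows from the decay of $P$. Then $[\dd\Gamma(a),\dd\Gamma(b)]=\dd\Gamma([a,b])$ and $\w(\dd\Gamma(c))=\tr(Pc)$ for trace-class $c$ give the equality in (ii).

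Quantization of $\i\,\tr(P[\lambda_2\ODT,\lambda_1\ODT])$ is the one-body step we expect to be the main obstacle. We would rewrite it as $\i\,\tr(P[[P,\lambda_2],[P,\lambda_1]])$ and identify it with the Kitaev/Avron–Seiler–Simon index $\frac{1}{2\pi}\operatorname{ind}(PUP)$, where $U$ is the flux-insertion unitary $\e^{\i\arg(x)}$. The identification goes through the switch-function formula established in the literature (e.g.\ Elgart–Graf–Schenker, cf.\ \cite{TW2025}), and integrality follows from Fredholm theory. Concretely, $P-UPU^*$ is in a Schatten class $\mathcal{S}_3$ by exponential locality of $P$, and the index equals $\tr(P-UPU^*)^3$, which equals $2\pi\,\i\,\tr(P[[P,\lambda_2],[P,\lambda_1]])$ by the Connes-type computation.

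For (iii), with periodicity we compare $\overline\w([X_2\OD,X_1\OD])$ with $\w([\Lambda_2\OD,\Lambda_1\OD])$ via the same one-body reduction. The left side is $\frac{1}{|\text{cell}|}\tr(\chi_{\text{cell}}P[[P,x_2],[P,x_1]])$ and the right side is $\tr(P[[P,\lambda_2],[P,\lambda_1]])$. These agree by the standard argument: expand the traces in kernels, use $T$-invariance of $P$ to resum over translates, and note that summing the switch-function differences over lattice shifts produces the position operators. This is where periodicity enters, and absolute summability is ensured by the exponential decay of $P$.
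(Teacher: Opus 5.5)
The step that fails is the one the paper actually proves: your identification $\Lambda_j\OD=\dd\Gamma(\lambda_j\ODT)+\text{scalar}$ in (ii). Your treatment of (i), of integrality, and of (iii) matches what the paper takes from the literature.

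\textbf{The identification is false.} Writing $\Lambda_j\OD$ through its local terms gives
\begin{itemize}
\item[] $(\Lambda_j\OD)_x=\dd\Gamma((\lambda_j\OD)_x)$ and $[\Lambda_2\OD,\Lambda_1\OD]=\dd\Gamma([\lambda_2\OD,\lambda_1\OD])$,
\item[] with $\lambda_j\OD=\i\int W_g(s)\,\e^{\i sh}[\lambda_j,h]\e^{-\i sh}\,\dd s$.
\end{itemize}
In the spectral representation $h=\int u\,\dd P_u$, this multiplies $\dd P_u\,\lambda_j\,\dd P_v$ by $m(u-v):=\i\sqrt{2\pi}\,\widehat W_g(u-v)\,(u-v)$. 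The defining property of $W_g$ only gives $m(k)=1$ for $|k|>g$, while $m$ is continuous with $m(0)=0$.
\begin{itemize}
\item[] Across the gap ($|u-v|\geq 2g$) this does give $P\lambda_j\OD P^\perp=P\lambda_jP^\perp$.
\item[] Below $\mu$, pairs with $0<|u-v|\leq g$ get weights strictly between the extremes. So $D_j:=P\lambda_j\OD P$, and likewise $P^\perp\lambda_j\OD P^\perp$, is in general nonzero, e.g.\ for any band of positive width below $\mu$.
\end{itemize}
In many-body terms, $W_g$ keeps the pair creation and annihilation part of $\dd\Gamma(\lambda_j)$. It also partially keeps the particle--hole-number-conserving part, whose energy transfers can be arbitrarily small.

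\textbf{What is true, and what is missing.} One has
\[
\w([\Lambda_2\OD,\Lambda_1\OD])=\tr(P[\lambda_2\OD,\lambda_1\OD])=\tr(P[\lambda_2\ODT,\lambda_1\ODT])+\tr([D_2,D_1]).
\]
The missing idea is a proof that $\tr([D_2,D_1])=0$. This is not formal: $D_j$ is localized only near the whole line $\{x_j=0\}$ and is in general not trace class, so cyclicity cannot be applied to it directly.

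The paper handles this as follows.
\begin{itemize}
\item[] It combines a one-body Lieb--Robinson bound for $\e^{\i th}$, the strip localization of $[\lambda_j,h]$, and Peetre-type and convolution estimates.
\item[] These give $\|D_j(x,y)\|\leq C_N\langle\|x-y\|\rangle^{-N}\langle|x_j|\rangle^{-N}\langle|y_j|\rangle^{-N}$.
\item[] Then $D_1D_2$ and $D_2D_1=(D_1D_2)^*$ have absolutely summable kernels concentrated near the corner. Hence both are trace class with equal traces.
\end{itemize}
The same estimates make $[\lambda_2\OD,\lambda_1\OD]$ trace class, which you need before $\w(\dd\Gamma(c))=\tr(Pc)$ can be used at all. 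They involve the time evolution, not only the decay of $P$, and give superpolynomial rather than exponential decay.

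\textbf{Two smaller points.}
\begin{itemize}
\item[] Your sketch of (iii) has the same misidentification of $X_j\OD$. There it is harmless, because $Px_j\OD P$ is bounded and periodic and the trace per unit volume is cyclic.
\item[] $P[\lambda_2\ODT,\lambda_1\ODT]P=-P[[P,\lambda_2],[P,\lambda_1]]P$, so your rewriting is off by a sign; this does not affect integrality.
\end{itemize}
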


Before proving the proposition, we conclude the proof of Theorem~\ref{cor:universality}: Proposition~\ref{prop:dcformulas} shows that for all $(b,\mu)\in M_0$  
\[
\sigma^\mathrm H_{(b,\mu,0)}
=
\mathrm i\,\operatorname{tr}
\bigl(
P^{\mathfrak{h}^b}_{(-\infty,\mu)}
[\lambda_2\ODT[(b,\mu)],
\lambda_1\ODT[(b,\mu)]]
\bigr)
\]
with $\lambda_j\ODT[(b,\mu)]:= P^{\mathfrak{h}^b}_{(-\infty,\mu)} \lambda_j (P^{\mathfrak{h}^b}_{(-\infty,\mu)})^\perp +  (P^{\mathfrak{h}^b}_{(-\infty,\mu)})^\perp \lambda_j P^{\mathfrak{h}^b}_{(-\infty,\mu)}$.
That this expression for $\sigma^\mathrm H_{(b,\mu,0)}$ is continuous in $(b,\mu)\in M_0$ can be proved with the methods of magnetic perturbation theory, in view of the uniform gap: compare \cite[Sec.~2.1.2]{cornean2021beyond}. Since it takes values in the discrete set $\frac1{2\pi}\mathbb Z$, it is locally constant. Because $M_0$ is connected, $(b,\mu)\mapsto \sigma^\mathrm H_{(b,\mu,0)}$ is constant on $M_0$.

\begin{proof}[Proof of  Proposition~\ref{prop:dcformulas}]
Most of the statements are well known or follow from well known facts about second quantization. The simple argument for (i) is given for example  in \cite{WMMMT24}. (iii) is established in \cite{TW2025}. In (ii) the inclusion $\tr (P [\lambda_2\ODT,\lambda_1\ODT]) \in \tfrac{\i}{2\pi} \Z$ is also well known, see e.g.\ \cite{avron1985quantization, rammal1990algebraic, bellissard1994noncommutative, aizenman1998localization, cornean2021beyond, cornean2025magnetic}.    

The only statement in need of a proof is the equality of the many-body and the one-body double-commutator formulas, i.e.\ that 
$\w([\Lambda_2\OD,\Lambda_1\OD]) = \tr (P [\lambda_2\ODT,\lambda_1\ODT])$.
To this end note that, for self-adjoint one-body operators $a\in\mL(\ell^2(\Z^2,\C^\fd))$, 
it holds that  $A:=\dd\Gamma(a)$ is in $\mA$ iff $a$ is trace class (see, e.g., \cite{Lundberg}), and in  that case  we have
\[
\w(A) = \tr(P a)\,.
\]
Moreover,
\begin{eqnarray*}
    (\Lambda_j\OD)_x 
    &\coloneq &
    \i \,  \int_{\R}\dd s \, W_g(s)\, \e^{\i s \mL_H} \,  \mL_{\La_j}  \, H_x 
    \\
    &=&
    \dd\Gamma\left(  \i \,  \int_{\R}\dd s \, W_g(s)\, \e^{\i s h} \,  [\lambda_j, h_x]\e^{-\i s h}\right)
    \eqcolon 
    \dd\Gamma((\lambda_j\OD)_x)    
\end{eqnarray*}
and
it was shown in \cite{TW2025} that the sums in the following computation are absolutely convergent in $\mA$:
\begin{eqnarray*}
    [\Lambda_2\OD,\La_1\OD] 
    &\coloneq &
    \sum_{x,y\in\Z^2} [(\Lambda_2\OD)_x,(\Lambda_1\OD)_y]
    \\
    &=&
    \sum_{x,y\in\Z^2}\dd\Gamma([(\lambda_2\OD)_x, (\lambda_1\OD)_y] )
    =
    \dd\Gamma([\lambda_2\OD,\lambda_1\OD])    
\end{eqnarray*}
with 
\begin{equation} \label{lambdajOD}
\lambda_j\OD :=  \i \,  \int_{\R}\dd s \, W_g(s)\, \e^{\i s h} \,  [\lambda_j, h]\e^{-\i s h}\,.
\end{equation}

Now we are left to prove that with
\[
\lambda_j^{\widetilde{\mathrm{OD}}} := P\lambda_jP^\perp + P^\perp \lambda_jP
\]
it holds that 
\[
\tr(P [\lambda_2\OD,\lambda_1\OD])
=\tr(P [\lambda_2\ODT ,\lambda_1\ODT])\,.
\]
But 
since 
\[
\tr(P [\lambda_2\ODT ,\lambda_1\ODT])= \tr(P (\lambda_2\ODT \lambda_1\ODT - \lambda_1\ODT \lambda_2\ODT) P) = \tr(P \lambda_2 P^\perp \lambda_1 P - P \lambda_1 P^\perp \lambda_2 P) \,,
\]
this follows once we can show that
\[
P\lambda_j\OD P^\perp = P\lambda_j P^\perp\,,\quad  P^\perp \lambda_j\OD P= P^\perp\lambda_j P\,,\quad\mbox{and}\quad \tr([P\lambda_2\OD P,P\lambda_1\OD P])= 0 \,.
\]
This was shown  many times for finite dimensional Hilbert spaces and finite rank projections $P$. But it is also easy to see in general: by construction and assumption on $\mu$, there exist closed sets $A,B\subset \R$ with dist$(A,B)>2g$ such that $P:= \chi_A(h)$ and $P^\perp=\chi_B(h)$.
Then,  using the spectral decomposition $h= \int_{\R} u\,\dd P_u $, we find
\begin{eqnarray*}
    P\lambda_j\OD P^\perp  &=& -\i P\int W_g(t) \e^{\i th}[h,\lambda_j] \e^{-\i th}\,\mathrm{d}t P^\perp \\
    &=&\i
    \int W_g(t) \int_\R  \int_\R \e^{\i t(u-v)}(u-v) P\dd P_u \lambda_j \dd P_v\,\mathrm{d}t P^\perp 
    \\
    &=&\i\sqrt{2\pi}
     \int_A  \int_B  \hat W_g(u-v)(u-v) \dd P_u P\lambda_j  P^\perp\dd P_v
     \\
     &=&
    \int_A  \int_B  \dd P_u P\lambda_j  P^\perp\dd P_v
     \\
     &=&    P\lambda_jP^\perp \,,
\end{eqnarray*}
and by taking the adjoint also 
$P^\perp\lambda_j\OD P =P^\perp\lambda_jP $. 

The following two lemmas  show that $P\lambda_2\OD P\lambda_1\OD P$ is trace class, which implies   that
\[
\tr([P\lambda_2\OD P,P\lambda_1\OD P])= 0 \,,
\]
and thus concludes the proof of the proposition.
\end{proof}

\begin{lemma}\label{lem:diagonal-block-strip-local}
Let $h$, $P$, and $\lambda_j$ be as in Proposition~\ref{prop:dcformulas}
and define  $\lambda_j\OD$ as in~\eqref{lambdajOD}. 
Then, for every $N\in\N$, there exists $C_N<\infty$ such that
\[
    \bigl\|(P\lambda_j\OD P)(x,y)\bigr\|
    \le
    C_N
    \langle \|x-y\|\rangle^{-N}
    \langle |x_j|\rangle^{-N}
    \langle |y_j|\rangle^{-N}
\]
for all $x,y\in\Z^2$. Here $\langle r\rangle := 1+r$, and
$\|\cdot\|$ denotes the $\ell^\infty$-norm on $\Z^2$.
\end{lemma}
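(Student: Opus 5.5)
Throughout, write $\lambda_j\OD$ via its defining integral \eqref{lambdajOD}. The plan combines three ingredients: off-diagonal decay of $P$ and of the Heisenberg-evolved commutator $[\lambda_j,h]$, localization of $[\lambda_j,h]$ near the line $\{x_j=0\}$, and a projection-splitting identity that moves this localization to both kernel variables.

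First I record the kernel bounds. By \eqref{hexp}, the operator $c_j:=[\lambda_j,h]$ has kernel $c_j(x,y)=(\lambda_j(x)-\lambda_j(y))h(x,y)$. This vanishes unless $x_j$ and $y_j$ lie on opposite sides of $0$, in which case $|x_j|,|y_j|\le\|x-y\|$. Hence
\[
\|c_j(x,y)\|\le C'\e^{-c'\|x-y\|}\e^{-c'|x_j|}\e^{-c'|y_j|}.
\]
The Fermi projection $P$ has exponentially decaying kernel by the Combes--Thomas estimate and the Riesz formula, using the gap $g$. The evolution $\e^{\i sh}$ has kernel bounded by $C\e^{-a\|x-y\|+v|s|}$, again by Combes--Thomas; this is a Lieb--Robinson-type bound with linear light cone. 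Since $W_g$ decays faster than any polynomial, but not exponentially, I would split the $s$-integral at $|s|\le \eta\,r$, where $r$ is the relevant distance scale.
\begin{itemize}
\item On $|s|\le \eta r$, the evolved operator $\e^{\i sh}c_j\e^{-\i sh}$ keeps exponential localization at scale $r$.
\item On the tail $|s|>\eta r$, one only uses $\|c_j\|\le C$ and $\int_{|s|>\eta r}|W_g|=\mathcal O(r^{-\infty})$.
\end{itemize}
This produces a polynomial bound of every order,
\[
\|(\e^{\i sh}c_j\e^{-\i sh})(x,y)\|\ \text{integrated against }W_g \;\le\; C_N\langle\|x-y\|\rangle^{-N}\langle|x_j|\rangle^{-N}\langle|y_j|\rangle^{-N}.
\]
Concretely, for a kernel term going $x\to z\to w\to y$, the factor $c_j(z,w)$ forces $z_j,w_j$ near $0$, and the distances $\|x-z\|$, $\|w-y\|$ are controlled by the evolution. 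To turn this into decay in $|x_j|$ as well as $|y_j|$, I would use the elementary inequality $\langle|x_j|\rangle\le\langle|z_j|\rangle\langle\|x-z\|\rangle$ and the analogous one for $y$. This gives the claimed bound for $\lambda_j\OD$ itself, without the projections.

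Composing with $P$ on both sides then preserves the bound. A kernel sum $\sum_{z,w}P(x,z)K(z,w)P(w,y)$, where $P$ decays exponentially and $K$ has the triple polynomial decay, again has the same form, up to an $N$-dependent loss. The key steps are Peetre-type inequalities $\langle a\rangle^{-N}\le\langle b\rangle^{-N}\langle a-b\rangle^{N}$, absorbed by the exponential decay of $P$, together with summability of $\langle\cdot\rangle^{-3}$ over $\Z^2$.

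The step I expect to be the main obstacle is the first one: the $s$-integral. The decay in $|x_j|$ and the decay in $\|x-y\|$ must be produced simultaneously, even though $W_g$ gives only superpolynomial decay and the Combes--Thomas bound degrades exponentially in $|s|$. The cutoff $|s|\le\eta\max(\|x-y\|,|x_j|,|y_j|)$ is therefore chosen relative to the largest of the three scales, with $\eta$ small enough that $va\eta<a/2$. Below the cutoff, exponential decay in all three quantities survives. Above it, the superpolynomial tail of $W_g$ together with $\|c_j\|<\infty$ gives $\mathcal O(\max(\cdot)^{-\infty})$, and this dominates the product of the three polynomial weights. If the cutoff argument proves awkward, I would use the fact that the spectral projections make the tail harmless: $\lambda_j\OD$ differs from the Fourier-multiplier representation by $\mathcal O(s^{-\infty})$ terms. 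Since only polynomial decay of arbitrary order is claimed, this loss is acceptable.
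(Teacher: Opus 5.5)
Your proposal is correct and follows essentially the paper's route: strip localization of $[\lambda_j,h]$ from the support of $\lambda_j(x)-\lambda_j(y)$, a one-body Lieb--Robinson/Combes--Thomas bound for $\e^{\i sh}$, transfer of the $j$-decay from the intermediate to the outer variables via $\langle|x_j|\rangle\le\langle|z_j|\rangle\langle\|x-z\|\rangle$, and a final Peetre/convolution step to sandwich with the exponentially localized $P$. The only difference is the $s$-integral. The paper first turns $\min\{1,\e^{c_\alpha|t|-\alpha\|x-y\|}\}$ into $Q_M\langle|t|\rangle^{M}\langle\|x-y\|\rangle^{-M}$ and absorbs $\langle|t|\rangle^{2M}$ into $W_g$. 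Your cutoff at $|s|\le\eta\max(\|x-y\|,|x_j|,|y_j|)$ works equally well, using $\|[\lambda_j,h]\|<\infty$ on the tail, provided $\eta$ is small compared to the ratio of decay rate to Lieb--Robinson velocity; your condition $va\eta<a/2$ should be corrected to that effect. The Fourier-multiplier fallback is not needed.
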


\begin{proof}
For $j\in\{1,2\}$ the kernel of 
$C_j:=[\lambda_j,h]$  is
\[
    C_j(x,y)
    =
    \bigl(\lambda_j(x)-\lambda_j(y)\bigr)h(x,y).
\]
Since $\lambda_j$ is a half-space switch, $C_j(x,y)$ can be nonzero only
if $x_j$ and $y_j$ lie on opposite sides of the switch. Hence, whenever
$C_j(x,y)\neq0$,
\[
    |x_j|+|y_j|
    \le
    \|x-y\|\, .
\]
Combining this support property with the exponential decay of $h$, we find
that, for every $L\in\N$, there is a constant $C_L<\infty$ such that for all $x,y\in\Z^2$
\begin{equation}\label{eq:Cj-strip-local}
    \|C_j(x,y)\|
    \leq
    C\e^{-c/2\|x-y\|}\e^{-c/2(|x_j|+|y_j|)}
    \leq C_L
    \langle \|x-y\|\rangle^{-L}
    \langle |x_j|\rangle^{-L}
    \langle |y_j|\rangle^{-L}\,.
\end{equation}
 Next we use the following one-body Lieb--Robinson type bound: For any $0<\alpha<c$, where $c$ is the exponent from \eqref{hexp}, set
\[
  c_\alpha
  :=
  \sup_{x\in\Z^2}\sum_{y\in\Z^{2}}
  \|h(x,y)\|\e^{\alpha\|x-y\|_\infty}.
\]
then, by   bounding each term in the exponential series separately, one finds
\begin{equation}\label{nonintLR}
\bigl\|\e^{\i t h}(x,y)\bigr\|\leq \min\left\{1,\e^{c_\alpha|t|- \alpha \|x-y\|_\infty}\right\}\,.
\end{equation}
Estimating \eqref{nonintLR} in the two regimes $\|x-y\|_\infty \leq (1+ 2c_\alpha/\alpha)\langle |t|\rangle $ and 
$\|x-y\|_\infty > (1+ 2c_\alpha/\alpha)\langle |t|\rangle $, one  finds that 
for every $M\in\N$ there exist $Q_M<\infty$ such that, for
all $t\in\R$,
\begin{equation}\label{eq:one-particle-propagator-decay}
    \bigl\|\e^{\i t h}(x,y)\bigr\|
    \le
    Q_M
    \langle |t|\rangle^{M}
    \langle \|x-y\|\rangle^{-M}\,.
\end{equation}
Let
\[
    C_j(t):=\e^{\i t h}C_j\e^{-\i t h}\,,
\]
fix $N\in\N$, choose $R>N+2$, and then choose $M\geq R+N$. Using
\eqref{eq:one-particle-propagator-decay} and
\eqref{eq:Cj-strip-local} with exponent $R$, we obtain
\begin{align*}
    \hspace{2em}&\hspace{-2em}
    \|C_j(t)(x,y)\|
    \\
    &\leq
    \sum_{u,v\in\Z^2}
    \bigl\|\e^{\i t h}(x,u)\bigr\|
    \|C_j(u,v)\|
    \bigl\|\e^{-\i t h}(v,y)\bigr\|                                      \\
    &\le
    C_RQ_M^2\langle |t|\rangle^{2 M}
    \sum_{u,v\in\Z^2}
    \langle \|x-u\|\rangle^{-M}
    \langle \|u-v\|\rangle^{-R}
    \langle \|v-y\|\rangle^{-M}      
    \langle |u_j|\rangle^{-R}
    \langle |v_j|\rangle^{-R}.
\end{align*}
By the reverse triangle inequality,
\[
    1+|x_j|
    \le
    1+|u_j|+\|x-u\|
    \le
    (1+|u_j|)(1+\|x-u\|),
\]
and therefore the Peetre type inequalities
\[
    \langle |u_j|\rangle^{-N}
    \le
    \langle \|x-u\|\rangle^N
    \langle |x_j|\rangle^{-N} 
\qquad\mbox{and}\qquad
    \langle |v_j|\rangle^{-N}
    \le
    \langle \|v-y\|\rangle^N
    \langle |y_j|\rangle^{-N} 
\]
hold.
Using these inequalities and dropping the remaining negative powers that are all smaller or equal to one, we get
\[
    \|C_j(t)(x,y)\|
     \leq
    C_RQ_M^2\langle |t|\rangle^{2 M}
    \langle |x_j|\rangle^{-N}
    \langle |y_j|\rangle^{-N}\!    
    \sum_{u,v\in\Z^2}\!
    \langle \|x-u\|\rangle^{-R}
    \langle \|u-v\|\rangle^{-R}
    \langle \|v-y\|\rangle^{-R} \,.
\]
Since  $R> 2$, the standard convolution estimate on $\Z^2$
gives
\[
    \sum_{u,v\in\Z^2}
    \langle \|x-u\|\rangle^{-R}
    \langle \|u-v\|\rangle^{-R}
    \langle \|v-y\|\rangle^{-R}
    \le
    \widetilde C_R
    \langle \|x-y\|\rangle^{-R}
\]
and thus with $R>N$ also
\begin{equation}\label{eq:Cjt-strip-local}
    \|C_j(t)(x,y)\|
    \le
    C_N
    \langle |t|\rangle^{2 M}
    \langle \|x-y\|\rangle^{-N}
    \langle |x_j|\rangle^{-N}
    \langle |y_j|\rangle^{-N}\,.
\end{equation}
Since $W_g$ is rapidly decaying, the function
$\langle |t|\rangle^{2 M}W_g(t)$ is integrable. Therefore, after
absorbing the $t$-integral into the constant, we obtain
\begin{equation}\label{eq:lambdajOD-strip-local}
    \|\lambda_j\OD(x,y)\|
    \le
    C_N
    \langle \|x-y\|\rangle^{-N}
    \langle |x_j|\rangle^{-N}
    \langle |y_j|\rangle^{-N}.
\end{equation}
Finally, by the spectral gap and exponential localization of $h$, the Fermi
projection has exponentially decaying kernel, and hence for every
$M\in\N$ there exists $C_M>0$ such that
\[
    \|P(x,y)\|
    \le
    C_M
    \langle \|x-y\|\rangle^{-M}.
\]
Using
\[
    (P\lambda_j\OD P)(x,y)
    =
    \sum_{u,v\in\Z^2}
    P(x,u)\lambda_j\OD(u,v)P(v,y),
\]
and applying the same Peetre and convolution argument as above, with
\eqref{eq:lambdajOD-strip-local} in place of
\eqref{eq:Cj-strip-local}, we obtain, after redefining $C_N$,
\[
    \bigl\|(P\lambda_j\OD P)(x,y)\bigr\|
    \le
    C_N
    \langle \|x-y\|\rangle^{-N}
    \langle |x_j|\rangle^{-N}
    \langle |y_j|\rangle^{-N}\,.\qedhere
\]
\end{proof}

\begin{lemma}\label{lem:diagonal-block-products-trace-class}
With the notation of Lemma~\ref{lem:diagonal-block-strip-local}, the operator
\[
    (P\lambda_1\OD P)(P\lambda_2\OD P)
    =
    P\lambda_1\OD P\lambda_2\OD P
\]
and thus also its adjoint
\[
    (P\lambda_2\OD P)(P\lambda_1\OD P)
    =
    P\lambda_2\OD P\lambda_1\OD P
\]
is trace class.
\end{lemma}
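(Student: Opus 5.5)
The plan is to write the product as a composition of two Hilbert--Schmidt operators. A product of two Hilbert--Schmidt operators is trace class, and the adjoint of a trace class operator is trace class, so it suffices to show that $A_1:=P\lambda_1\OD P$ and $A_2:=P\lambda_2\OD P$ are each Hilbert--Schmidt. However, $A_j$ alone is not Hilbert--Schmidt: by Lemma~\ref{lem:diagonal-block-strip-local} its kernel decays in $\|x-y\|$ and in $|x_j|$, but not in the other coordinate $x_{3-j}$. So the decomposition must use both localizations. The strip around $\{x_1=0\}$ and the strip around $\{x_2=0\}$ intersect in a neighbourhood of the origin, and the product kernel should be summable there.

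First, I would introduce the weight $w(x):=\langle |x_1|\rangle^{-2}\langle |x_2|\rangle^{-2}$ (any exponent larger than $1$ per coordinate works) and write
\[
A_1A_2 = \bigl(A_1 w^{-1/2}\bigr)\bigl(w^{1/2}A_2\bigr),
\]
where $w^{\pm1/2}$ act as multiplication operators. The identity holds on finitely supported vectors. Once both factors are shown to be bounded (indeed Hilbert--Schmidt), it extends by density.

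Second, I would check the Hilbert--Schmidt norms via the kernel bound of Lemma~\ref{lem:diagonal-block-strip-local} with $N$ large. For the first factor,
\[
\sum_{x,y}\|A_1(x,y)\|^2 w(y)^{-1}\le C_N^2\sum_{x,y}\langle\|x-y\|\rangle^{-2N}\langle|x_1|\rangle^{-2N}\langle|y_1|\rangle^{-2N+2}\langle|y_2|\rangle^{2}.
\]
To handle the growing factor $\langle|y_2|\rangle^2$, I would use $\langle|y_2|\rangle\le\langle|x_2|\rangle\langle\|x-y\|\rangle$ (Peetre, as in the proof of Lemma~\ref{lem:diagonal-block-strip-local}). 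This turns the sum into one that is summable in $y_1$, in $x_1$ and in $x-y$, but it leaves a factor $\langle|x_2|\rangle^2$ with no decay in $x_2$. So this symmetric weight fails, and this is the main obstacle: the weight has to be split asymmetrically.

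The fix is to put all the $x_2$-localization onto the second factor and all the $x_1$-localization onto the first. Concretely, I would use $A_1A_2=(A_1 m^{-1})(m A_2)$ with $m(x):=\langle|x_1|\rangle^{-2}\langle|x_2|\rangle^{-2}$ acting as a multiplier on the shared middle variable $u$ only.

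For $A_1m^{-1}$, the middle variable $u$ is the column index of $A_1$. The kernel bound gives decay $\langle|u_1|\rangle^{-N}$, which beats $\langle|u_1|\rangle^{2}$. The remaining factor $\langle|u_2|\rangle^2$ is traded, via Peetre, for $\langle\|x-u\|\rangle^2\langle|x_2|\rangle^2$, but $x_2$ is not controlled, so this again fails.

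Hence I would instead bound the trace norm directly:
\[
\|A_1A_2\|_1\le\sum_u\|A_1(\cdot,u)\|_{\ell^2}\,\|A_2(u,\cdot)\|_{\ell^2}.
\]
This holds because $A_1A_2=\sum_u A_1|e_u\rangle\langle e_u|A_2$ is a sum of rank-$r$ pieces whose trace norms are at most $r$ times the product of the column and row norms. Then
\[
\|A_1(\cdot,u)\|_{\ell^2}^2\le C\langle|u_1|\rangle^{-2N}\sum_x\langle\|x-u\|\rangle^{-2N}\le C'\langle|u_1|\rangle^{-2N},
\]
and similarly $\|A_2(u,\cdot)\|_{\ell^2}^2\le C'\langle|u_2|\rangle^{-2N}$. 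The sum over $u$ of $\langle|u_1|\rangle^{-N}\langle|u_2|\rangle^{-N}$ is finite for $N\ge2$, so $A_1A_2$ is trace class. The adjoint statement then follows because $A_j^*=A_j$, since $\lambda_j\OD$ is self-adjoint.
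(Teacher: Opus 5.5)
Your final argument is correct and is essentially the paper's. Both route the product through the middle index $u$, using the $u_1$-decay of $P\lambda_1\OD P$ and the $u_2$-decay of $P\lambda_2\OD P$ to reduce everything to $\sum_u\langle|u_1|\rangle^{-N}\langle|u_2|\rangle^{-N}<\infty$. The paper bounds $\sum_{x,y}\|(D_1D_2)(x,y)\|$ and then uses that an absolutely summable kernel with finite fiber dimension is trace class, whereas you bound the trace norm of $\sum_u A_1|e_u\rangle\langle e_u|A_2$ directly with $\ell^2$ column and row norms, which amounts to the same thing. The weighted Hilbert--Schmidt detours can simply be deleted; an asymmetric weight such as $(\langle|u_1|\rangle/\langle|u_2|\rangle)^{N/2}$ on the middle variable would in fact have worked, but it is not needed.
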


\begin{proof}
For $j=1,2$ set
    $D_j:=P\lambda_j\OD P$.
By Lemma~\ref{lem:diagonal-block-strip-local}, for every $N\in\N$ there is
$C_N<\infty$ such that
\[
    \|D_1(x,z)\|
    \le
    C_N
    \langle \|x-z\|\rangle^{-N}
    \langle |x_1|\rangle^{-N}
    \langle |z_1|\rangle^{-N}
\]
and
\[ 
    \|D_2(z,y)\|
    \le
    C_N
    \langle \|z-y\|\rangle^{-N}
    \langle |z_2|\rangle^{-N}
    \langle |y_2|\rangle^{-N}\,.
\]
Choose $N>2$. Then
\begin{align*}
    \hspace{2em}&\hspace{-2em}
    \sum_{x,y\in\Z^2}\|(D_1D_2)(x,y)\|
    \\
    &\leq
    C_N^2
    \sum_{x,y,z\in\Z^2}
    \langle \|x-z\|\rangle^{-N}
    \langle |x_1|\rangle^{-N}
    \langle |z_1|\rangle^{-N}    
    \langle \|z-y\|\rangle^{-N}
    \langle |z_2|\rangle^{-N}
    \langle |y_2|\rangle^{-N}\,.
\end{align*}
For fixed $z$, since $N>2$,
\[
    \sum_{x\in\Z^2}
    \langle \|x-z\|\rangle^{-N}
    \langle |x_1|\rangle^{-N}
    \le
    \sum_{x\in\Z^2}
    \langle \|x-z\|\rangle^{-N}
    \le C_N'\,,
\]
uniformly in $z$. Similarly,
\[
    \sum_{y\in\Z^2}
    \langle \|z-y\|\rangle^{-N}
    \langle |y_2|\rangle^{-N}
    \le C_N'\,,
\]
uniformly in $z$. Therefore
\[
    \sum_{x,y\in\Z^2}\|(D_1D_2)(x,y)\|
    \leq
    (C_N')^2
    \sum_{z\in\Z^2}
    \langle |z_1|\rangle^{-N}
    \langle |z_2|\rangle^{-N}
    <\infty\,.
\]
Thus $D_1D_2$ has an absolutely summable matrix kernel. Since the fiber
dimension is finite, this implies that $D_1D_2$ is trace class. 
\end{proof}

\appendix

\section{Basic notions on fermionic lattice systems}\label{sec:basic}

We basically use the standard formalism for describing infinitely extended systems of interacting lattice fermions  in the thermodynamic limit, as explained for example in \cite{bratteliI,bratteliII},  together with more specific results for fermionic systems from~\cite{ArakiMoriya2002}. 
However, it turns out that our analysis requires a number of  additional  technical results, mostly concerning periodic interactions and position operators, which were not available in the previous literature. 
The proofs of the corresponding lemmas and propositions  are collected in \cite[Appendix~B]{WMMMT24}.

\subsection{(Quasi-)local fermionic interactions}

 The anti-symmetric  (or fermionic) Fock space over the lattice $\Z^d$ is given by
    \begin{align*}
       \mathcal{F}(\Z^d,\C^\fd) := \bigoplus_{N=0}^{\infty}\ell^2(\Z^d,\C^\fd)^{\wedge N} ~,
    \end{align*}
where $\ell^2(\mathbb{Z}^d, \mathbb{C}^\fd)^{\wedge N}$ denotes the $N$-fold anti-symmetric tensor product of $\ell^2(\mathbb{Z}^d, \mathbb{C}^\fd)$, with the convention $\ell^2(\mathbb{Z}^d, \mathbb{C}^\fd)^{\wedge 0} := \mathbb{C}$.
    We use $a^*_{x,i}$ and $a_{x,i}$, for $x\in \Z^d$ and $i\in \{1,\dots,\fd\}$, to denote the fermionic creation and annihilation operators associated to the standard basis of $\ell^2(\Z^d,\C^\fd)$; recall that they satisfy the canonical anti-commutation relations (CAR).
    The number operator at site $x\in\Z^d$ is defined by
    \begin{align*}
        n_x:=\sum_{i=1}^\fd a^*_{x,i}a_{x,i}\,.
    \end{align*}
    The algebra of all bounded operators on $\mathcal{F}(\Z^d,\C^\fd)$ is denoted by $ \mathcal{L}(\mathcal{F}(\Z^d,\C^\fd))$. 
    For each $M\subseteq \Z^d$ let $\mA_M$ be the C*-subalgebra of $ \mathcal{L}(\mathcal{F}(\Z^d,\C^\fd))$  generated by
    \begin{align*}
        \{a^*_{x,i}~|~x\in M,~ i\in \{1,\dots,\fd\}\}~.
    \end{align*}
    The C*-algebra $\mA := \mA_{\Z^d}$ is the CAR-algebra, which we also call the quasi-local algebra. We define $P_0(\Z^d) := \{M\subseteq \Z^d~|~|M|<\infty \}$ and call
    \begin{align*}
        \mA_0 := \bigcup_{M\in P_0(\Z^d)} \mA_M \subseteq \mA
    \end{align*}
    the local algebra, which is dense in $\mA$ with respect to the norm topology. Consequently, an operator   in $ \mathcal{L}(\mathcal{F}(\Z^d,\C^\fd))$ is called quasi-local if it lies in $\mA$ and local if it lies in $\mA_0$.
    For each $\varphi \in \R$ there is a unique $*$-automorphism $g_\varphi$ of $\mA$, such that
    \begin{align*}
        g_\varphi(a^*_{x,i}) =  \e^{\i\varphi} \, a^*_{x,i}, \quad \text{for all}~~ x\in \Z^d,~ i\in \{ 1,\dots,\fd \} ~.
    \end{align*}
One defines the set
    \begin{align*}
        \mA^N := \{A\in \mA~|~ g_\varphi (A) = A \; \forall \varphi \in \R\}
    \end{align*}
    and calls $\mA^N$ the gauge-invariant sub-algebra of $\mA$. It is the closure of the set of all local observables that commute with the number operator, i.e.\  all local observables $A\in   \mA_0$ that satisfy $A\in \mA_M \Rightarrow [A,N_M]:=[A,\sum_{x\in M}n_x]=0$.  Its part in $M\subseteq\Z^d$  is denoted by $\mA^N_M:= \mA^N\cap \mA_M$.
For disjoint regions $M_1,M_2 \subseteq\Z^d$,  $M_1\cap M_2 = \emptyset$, operators  $A\in \mA_{M_1}^N$ and $B\in \mA_{M_2}$ commute:  $[A,B] = 0$.

In order to define quantitative notions of localization for quasi-local operators, one makes use of the fact that one can localize operators to given regions by means of the fermionic conditional expectation. 
To this end first note that  $\mA$ has a unique state $\w^{\tr}$ (commonly referred to as a tracial state) that satisfies
    \begin{align*}
        \w^{\tr}(AB) = \w^{\tr}(BA)
    \end{align*}
    for all  $A,B \in \mA$ (e.g.\ \cite[Definition 4.1, Remark 2]{ArakiMoriya2002}).

\medskip

\begin{proposition}[{\cite[Theorem 4.7]{ArakiMoriya2002},\cite[Proposition 2.1]{WMMMT24} }]\label{Ex+UniqueExpectation}
    For each $M\subseteq\Z^d$ there exists a unique linear map 
    \begin{align*}
        \E_M:\mA \to \mA_M\,,
    \end{align*}
    called the conditional expectation with respect to $\w^\tr$, such that
    \begin{equation}\label{eq:Conditional expectation defining property}
      \forall A\in \mA \; \;\forall B\in \mA_M \,:\quad \w^{\tr}(AB)=\w^{\tr}(\E_M(A)B) \,.
    \end{equation}
    It is unital, positive and has the properties 
    \begin{eqnarray*}
        \forall M\subseteq \Z^d\;\; \forall A,C\in \mA_M\;\; \forall B\in\mA\, : &&   \E_M \br{A\,B\,C} = A\, \E_M(B)\,C\\[1mm]
      \forall M_1,M_2 \subseteq \Z^d\,:&&   \E_{M_1} \circ \E_{M_2}  = \E_{M_1\cap M_2}\,\\[1mm]
        \forall M \subseteq \Z^d\,:&& \E_M \mA^N \subseteq \mA^N
  \\[1mm]
        \forall M \subseteq \Z^d \; \;  \forall A \in \mA\,:&&  \norm{\E_M(A)}\leq \norm{A}\,.
    \end{eqnarray*}
\end{proposition}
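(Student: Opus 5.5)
The approach is to prove uniqueness from faithfulness of $\w^\tr$, and to construct $\E_M$ first for finite $M$ on local elements, explicitly on a monomial basis. It is then extended by density and by a limit over boxes. The algebraic properties will be verified on that basis, or reduced to uniqueness.

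\emph{Uniqueness.} The CAR algebra $\mA$ is simple, so the tracial state $\w^\tr$ is faithful. Suppose two maps $\E_M,\E'_M$ both satisfy \eqref{eq:Conditional expectation defining property}. For $A\in\mA$ put $D:=\E_M(A)-\E'_M(A)\in\mA_M$. Then $\w^\tr(DB)=0$ for all $B\in\mA_M$, and choosing $B=D^*$ gives $\w^\tr(DD^*)=0$, hence $D=0$. This argument also shows that any map commuting with the defining relation is forced to equal $\E_M$. I will use this repeatedly: the gauge covariance $g_\varphi\circ\E_M=\E_M\circ g_\varphi$ follows because $\w^\tr\circ g_\varphi=\w^\tr$ and $g_\varphi(\mA_M)=\mA_M$, and this gives $\E_M\mA^N\subseteq\mA^N$. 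Likewise the tower property $\E_{M_1}\circ\E_{M_2}=\E_{M_1\cap M_2}$ and the bimodule property follow once they are checked on a dense set, together with continuity.

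\emph{Existence for local elements.} Fix a finite box $\Lambda$. Every $A\in\mA_\Lambda$ can be written uniquely as $A=\sum_k X_kY_k$ with $X_k\in\mA_{\Lambda\cap M}$. Here the $Y_k$ run over the ordered monomials in the generators $a_{x,i}-\tfrac12$-type self-adjoint Majorana combinations, i.e. Clifford words, indexed by subsets of the Majorana modes in $\Lambda\setminus M$. Define
\[
\E_M(A):=\sum_k X_k\,\w^\tr(Y_k),
\]
which keeps only the empty word. To verify the defining relation, I use the product property of the trace, $\w^\tr(CD)=\w^\tr(C)\,\w^\tr(D)$ for $C\in\mA_M$ and $D\in\mA_{M^c}$, which follows from \cite{ArakiMoriya2002}. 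I also use that every nonempty Clifford word has vanishing trace. Together these give $\w^\tr(X_kY_kB)=\pm\w^\tr(X_kB)\w^\tr(Y_k)$ for $B\in\mA_M$. The sign is harmless: it is $+1$ whenever $\w^\tr(Y_k)\neq0$, since then $Y_k=1$. On this basis, the bimodule property $\E_M(ABC)=A\,\E_M(B)\,C$ for $A,C\in\mA_M$ is immediate, as is compatibility with enlarging $\Lambda$.

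\emph{Positivity, contractivity and extension.} This is the step I expect to be the main obstacle, because the grading prevents a naive ``partial trace'' argument. I plan to realize $\E_M$ on $\mA_\Lambda$ as a double average. First, average $\int u\,A\,u^*\,\dd u$ over the Haar measure on the unitary group of the even part $\mA^{+}_{\Lambda\setminus M}$. This projects onto the relative commutant, which is generated by $\mA_{\Lambda\cap M}$ and the parity unitary $\theta$ of $\Lambda\setminus M$. Second, remove the $\theta$-component, which is the nonempty word above, by averaging $A\mapsto\frac12(A+vAv^*)$. Here $v$ is a suitable odd unitary of $\Lambda\setminus M$ anticommuting with $\theta$, namely a single Majorana operator. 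Both operations are unital completely positive maps with norm at most one, and their composition agrees with the formula above on the basis. Hence $\|\E_M A\|\le\|A\|$ and $\E_M$ is positive on $\mA_0$, so it extends by continuity to all of $\mA$. For infinite $M$, set $\E_M(A):=\lim_n\E_{M\cap\Lambda_n}(A)$. For local $A$ this sequence is eventually constant, by compatibility with enlarging $\Lambda$, so the limit exists on $\mA_0$ and extends by uniform contractivity. The defining relation passes to the limit, and the remaining identities then follow by density together with the uniqueness argument above.
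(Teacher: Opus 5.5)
The paper does not prove this proposition; it imports it from Araki--Moriya, so your argument has to stand on its own. Most of it does: uniqueness from faithfulness of $\w^\tr$, the Clifford-word formula with the product property of the trace, gauge covariance via uniqueness, and the tower and bimodule identities on local elements are all fine.

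\textbf{The gap.} It is in the step you yourself flagged as the main obstacle. Conjugation by an odd unitary $v\in\mA_{\Lambda\setminus M}$ does not act trivially on $\mA_{\Lambda\cap M}$. By the CAR, $vXv^*=(-1)^{|X|}X$ for homogeneous $X\in\mA_{\Lambda\cap M}$, so $\mathrm{Ad}(v)$ is the parity automorphism there. Consequently $\tfrac12(\mathrm{id}+\mathrm{Ad}(v))$ sends the output $X+Y\theta$ of your first average to $X_++Y_-\theta$ (the even part of $X$ and the odd part of $Y$), not to $X$.

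Concretely, for $y\in\Lambda\cap M$ your composition maps $a_{y,j}\mapsto 0$ and $a_{y,j}\theta\mapsto a_{y,j}\theta$. But $\E_M(a_{y,j})=a_{y,j}$ and $\E_M(a_{y,j}\theta)=0$. So your UCP map agrees with $\E_M$ only on the even part of $\mA_\Lambda$. Positivity and contractivity of $\E_M$ on all of $\mA_\Lambda$ therefore remain unproved, and you need them to extend $\E_M$ from $\mA_0$ to $\mA$ and to infinite $M$.

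No other choice of $v$ inside $\mA_{\Lambda\setminus M}$ can work. To commute with the odd elements of $\mA_{\Lambda\cap M}$ it must be even, and then it commutes with $\theta$.

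\textbf{Two easy repairs.} First option: dress the Majorana operator $c$ with the parity unitary $\theta_{\Lambda\cap M}$ of $\Lambda\cap M$. The unitary $v:=\theta_{\Lambda\cap M}\,c\in\mA_\Lambda$ satisfies $\mathrm{Ad}(v)=\mathrm{id}$ on $\mA_{\Lambda\cap M}$ and $v\theta v^*=-\theta$. With this $v$ your double average is a UCP map equal to $\E_M$ on $\mA_\Lambda$.

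Second option: drop the construction, since positivity follows directly from the defining relation. Take $A\in\mA_\Lambda$ with $A\ge 0$. Then $E:=\E_M(A)\in\mA_{\Lambda\cap M}$ is self-adjoint, e.g.\ by uniqueness and the trace property. Let $p\in\mA_{\Lambda\cap M}$ be its spectral projection for $(-\infty,0)$. Then $\w^\tr(pEp)=\w^\tr(Ep)=\w^\tr(Ap)=\w^\tr(pAp)\ge 0$, while $pEp\le 0$. Faithfulness gives $pEp=0$, i.e.\ $E\ge 0$. Contractivity then follows from positivity and $\E_M(1)=1$ by the Russo--Dye theorem. After that, your density and extension arguments go through unchanged.
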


With the help of $\E$ we can define subspaces of $\mA^N$ that contain operators with well-defined decay properties (cf.\ \cite{MO20} for similar definitions for quantum spin systems).

\medskip

\begin{definition} \label{dfn:Dinfty}
    For $\nu \in \N_0$ and $A \in \mA$ let
    \begin{align*}
        \norm{A}_\nu := \norm{A} + \sup_{k\in \N_0}( \norm{A-\E_{\La_k}A} (1+k)^\nu) 
    \end{align*}
    where $\La_k = \{x\in \Z^d \, | \, \norm{x}_\infty \leq k \}$ is the box with side-length $2k+1$ around $0$ ($\norm{\cdot}_\infty$ is the maximum norm on $\Z^d$).
    We denote the set of all $A\in \mA$ with finite $\norm{\cdot}_\nu$ by $\mA_\nu$ and also define $\mA_{\infty} := \bigcap_{\nu \in \N_0} \mA_\nu$. We will sometimes refer to these norms as \emph{decay norms}. We also use the notation $\mA_\nu^N := \mA_\nu \cap \mA^N$ and $\mA_\infty^N := \mA_\infty \cap \mA^N$. Finally, we say that $A\in \mA_{\exp}$ iff $A\in\mA$ and there is a $a>0$ such that $\sup_{k\in \N_0}( \norm{A-\E_{\La_k}A} \e^{ak}) <\infty$.
\end{definition}

The relevant physical dynamics on $\mA$ is generated by densely defined derivations, which in turn are constructed from so-called interactions.
    An \emph{interaction} is a map $\Phi: P_0(\Z^d) \to \mA^N$, such that $\Phi(\emptyset) = 0$ and for all $M\in P_0(\Z^d)$ it holds that $\Phi(M) \in \mA_M$, $\Phi(M)^* = \Phi(M)$, and the sum 
    \[\sum_{\substack{K\in P_0(\Z^d)\\ M\cap K \neq \emptyset}} \Phi(K)\] 
    converges unconditionally, meaning that the partial sums converge to the same value for all sequences that enumerate the index set. 
    For two interactions   $\Phi$ and $\Psi$ their commutator is given by
    \begin{align*}
        [\Phi,\Psi]: P_0(\Z^d) \to \mA^N\,,\qquad M\mapsto 
        [\Phi,\Psi](M) \;:= \sum_{\substack{M_1,M_2 \subseteq M\\ M_1 \cup M_2= M}} [\Phi(M_1),\Psi(M_2)] \, .
    \end{align*}
    The map $\i[\Phi,\Psi]$ satisfies the definition of an interaction except for the last requirement, which is not always satisfied. All the commutators of interactions appearing in the following will however be interactions (see  Lemma~\ref{lem: sum representation of generator} and Proposition~\ref{commutator of interactions}).

Interactions  define derivations on the algebra in the following way.    
    For an interaction~$\Phi$ let
    \begin{align*}
        \mL_{\Phi}^\circ: \mA_0 \to \mA,\quad A\mapsto \sum_{M\in P_0(\Z^d)} [\Phi(M),A] \, .
    \end{align*}
    It follows from \cite[Propositions~3.1.15 and~3.2.22]{bratteliI} that $\mL_{\Phi}^\circ$ is closable. We denote its closure by $\mL_{\Phi}$ and call it the Liouvillian of $\Phi$.
Some basic interactions that will appear frequently in the following are the number operator $N$ and the position operators $X_j$ for $j\in\{1,\dots, d\}$, which are non-vanishing only on one-element sets. They are defined by $N(\{x\}) = n_x$ and $X_j(\{x\})= x_j n_x$ for $x\in\Z^d$.

In the operator-algebraic framework, states are described as positive linear functionals $\w \colon \mA \to \C$ of norm $1$. Given an interaction $\Phi$, a state $\w$ is called a $\mL_{\Phi}$-ground state, or simply ground state of $\Phi$, if it holds that 
\begin{align*}
 \forall A \in D(\mL_\Phi) : \quad  \w(A^* \mL_\Phi A) \geq 0   \, .
\end{align*}
Here, for a given linear operator $ T $ acting on $\mathcal{A}$, $ D(T) $ denotes its domain.
For a state $\w$ on $\mA$ and an interaction $\Phi$ we define 
   \begin{equation} \label{eq:omegaPUV}
   \overline{\w}(\Phi) := \lim_{k\to \infty} \frac{1}{|\La_k|}\sum_{M\subseteq \La_k} \w(\Phi(M))\, ,
   \end{equation}
   whenever the limit exists  and call it the per-volume expectation value of $\Phi$ with respect to $\w$.

\medskip

\begin{definition}\label{norm interaction}
    Let $\Phi$ be an interaction and $\nu \in \N_{0}$. Let
    \begin{align*}
        \|\Phi\|_{\nu} := \sup_{x\in \Z^d} \sum_{\substack{M\in
        P_0(\Z^d)\\x\in M}}(1+\mathrm{diam}(M))^\nu  \| \Phi(M)\|  ~.
    \end{align*}
    The set of all interactions with finite $\norm{\cdot}_{\nu}$ is denoted by $B_{\nu}$. We also define $B_{\infty} := \bigcap_{\nu \in \N_0} B_{\nu}$. \\
    For $a>0$ let
    \begin{align*}
        \|\Phi\|_{\exp,a} := \sup_{x\in \Z^d} \sum_{\substack{M\in 
        P_0(\Z^d)\\x\in M}} \exp(a \, \mathrm{diam}(M))  \| \Phi(M)\| 
    \end{align*}
    and denote the set of all interactions with finite $\norm{\cdot}_{\exp,a}$ for some $a>0$ by $B_{\exp}$. By $\mathrm{diam}(M)$ we mean the maximal distance of two elements in $M\subset\Z^d$ with respect to~$\norm{\cdot}_\infty$.
\end{definition}

\medskip

\begin{lemma}\label{lem: sum representation of generator}
    For a $B_\infty$-interaction $\Phi$, $j\in \{ 1,\dots, d \}$ and $p,q\in \R$ it holds that $\mA_\infty \subseteq D(\mL_{p \Phi + q X_j})$. For $ A\in \mA_\infty$, the sums 
    \[\sum_{M\in P_0(\Z^d)} [\,\Phi(M),A\,] \quad \text{and} \quad \sum_{x\in \Z^d} [\,x_j\,n_x,A\,]\] converge absolutely and  
    \[ \mL_{p \Phi + q X_j}A = p \sum_{M\in P_0(\Z^d)} [\,\Phi(M),A\,] + q \sum_{x\in \Z^d} [\,x_j\,n_x,A\,] \in \mA_\infty \,. \]
    For each $\nu \in \N_0$ there is a constant $c_\nu$, independent of $\Phi, j, p, q, A$ such that
    \[ \norm{ \mL_{p \Phi + q X_j}A }_\nu \leq c_\nu \, (|p| \,\norm{\Phi}_{d+1+2\nu} + |q| \,\norm{n_0}) \, \norm{A}_{d+3+2\nu}\,.\]
\end{lemma}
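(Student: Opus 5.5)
The plan is to reduce everything to the single-commutator bound $\|[A,B]\|\le 2\|A\|\,\|B\|$, summed over the sets $M$ that can fail to commute with $A$. The key observation is that for $A\in\mA_\infty$ we do not need $A$ to be local. Decompose
\[
A=\E_{\La_0}A+\sum_{k\ge1}\bigl(\E_{\La_k}A-\E_{\La_{k-1}}A\bigr)=:\sum_{k\ge0}A_k .
\]
Here $A_k\in\mA_{\La_k}$, and since $\mL$ commutes with the gauge action we may restrict to gauge-invariant pieces. Moreover
\[
\|A_k\|\le 2\|A-\E_{\La_{k-1}}A\|\le 2\,k^{-\mu}\|A\|_\mu .
\]
Because $\Phi(M)\in\mA^N_M$ commutes with $\mA_{\La_k}$ whenever $M\cap\La_k=\emptyset$, we get
\[
\sum_M\|[\Phi(M),A_k]\|\le 2\|A_k\|\sum_{x\in\La_k}\ \sum_{M\ni x}\|\Phi(M)\|\le 2\|A_k\|\,(2k+1)^d\,\|\Phi\|_0 .
\]
Summing over $k$ gives absolute convergence as soon as $\mu>d+1$. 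The position term works the same way: $[x_jn_x,A_k]=0$ for $x\notin\La_k$ and $|x_j|\le k$ on $\La_k$. This gives the bound $\sum_k 2\|A_k\|(2k+1)^{d}k\|n_0\|$, which is finite for $\mu>d+2$; this is where the index $d+3$ comes from. Absolute convergence of the $A_k$-series in norm, together with closedness of $\mL_{p\Phi+qX_j}$ and $\sum_k A_k=A$ with $\mL^\circ(\sum_{k\le K}A_k)$ Cauchy, then gives $A\in D(\mL)$ and the stated formula.

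For the $\nu$-norm estimate we must control $\|B-\E_{\La_m}B\|$ for $B=\mL A$, with weight $(1+m)^\nu$. The plan is to split, for each $m$, the double sum over $(M,k)$ into three parts.
\begin{enumerate}
\item[(a)] Terms with $k\le m/2$ and $M\subset\La_m$. These lie in $\mA_{\La_m}$ and are killed by $1-\E_{\La_m}$.
\item[(b)] Terms with $k>m/2$. These are bounded by $4\|A_k\|(2k+1)^d(\|\Phi\|_0+k\|n_0\|)$, so their sum is $\lesssim m^{d+1-\mu}\|A\|_\mu$.
\item[(c)] Terms with $k\le m/2$ and $M\not\subset\La_m$ but $M\cap\La_k\neq\emptyset$. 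Then $\mathrm{diam}(M)\ge m/2$, so
\[
\sum_{M\ni x,\ \mathrm{diam}M\ge m/2}\|\Phi(M)\|\le (1+m/2)^{-r}\|\Phi\|_r ,
\]
and this part is $\lesssim \|A\|_{d+2}\,m^{-r}\|\Phi\|_r$. The position term has no such contribution, since it is supported on singletons inside $\La_k$.
\end{enumerate}
Using $\|1-\E_{\La_m}\|\le2$ and choosing $\mu=d+3+2\nu$ and $r=d+1+2\nu$ makes all three parts $O((1+m)^{-\nu})$. The generous exponents leave room to absorb the polynomial factors $(2k+1)^d$ and $k$. The zeroth-order norm $\|\mL A\|$ is handled by the same bound with $m=0$. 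This gives the constant $c_\nu$, depending only on $d$ and $\nu$, and in particular shows $\mL A\in\mA_\infty$.

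The main obstacle is bookkeeping in region (c): one must use the telescoping pieces $A_k$ rather than $A$ itself, so that localization of the commutator is inherited from both factors. I expect the exponent matching, rather than any conceptual step, to be where care is needed; the paper's indices $d+1+2\nu$ and $d+3+2\nu$ are not optimal, so any consistent choice dominating them suffices.
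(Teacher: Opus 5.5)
Your proof is correct and complete. The paper does not prove this lemma in the text; it is quoted from \cite[Appendix~B]{WMMMT24}, so there is no in-paper argument to compare yours with. Your argument is self-contained and establishes every claim: the shell decomposition $A=\sum_k A_k$ with $A_k=\E_{\La_k}A-\E_{\La_{k-1}}A$, the closure argument via $\E_{\La_K}A\in\mA_0$, and the three-region split of $(1-\E_{\La_m})\mL A$. In fact your bookkeeping gives the sharper bound $\norm{\mL_{p\Phi+qX_j}A}_\nu\le c_\nu\,(|p|\,\norm{\Phi}_\nu+|q|\,\norm{n_0})\,\norm{A}_{d+3+\nu}$, by taking $r=\nu$ in (c) and $\mu=d+3+\nu$ in (b). This implies the stated estimate, because both families of decay norms increase with the index.

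Three small points need cleaning up.

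First, delete the sentence about restricting to gauge-invariant pieces. For disjoint supports, $[\Phi(M),B]=0$ and $[x_j n_x,B]=0$ hold for every $B\in\mA_{\La_k}$, because $\Phi(M)$ and $n_x$ are themselves gauge-invariant. That is all you use.

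Second, in region (b) the position term carries the extra factor $k$. Its tail is therefore of order $(1+m)^{d+2-\mu}\norm{A}_\mu$, not $m^{d+1-\mu}\norm{A}_\mu$. With your choice $\mu=d+3+2\nu$ this is still $O((1+m)^{-1-2\nu})$, so nothing breaks.

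Third, the zeroth-order norm $\norm{\mL A}$ does not come from the case $m=0$ of the tail estimate, which only controls $\norm{\mL A-\E_{\La_0}\mL A}$. It is bounded directly by the absolutely convergent double sum $\sum_k\sum_M\norm{[\Phi(M),A_k]}$ together with its position-operator analogue.
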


A cocycle of automorphisms is a family $(\alpha_{u,v})_{(u,v) \in \R^2}$ of automorphisms of $\mA$ such that $\alpha_{u,u} = \mathrm{id}$ for all $u\in \R$ and $\alpha_{u,v}\circ \alpha_{v,w} = \alpha_{u,w} $ for all $u,v,w\in \R$.
We say a cocycle of automorphisms is generated by the family of interactions $(\Phi^{v})_{v\in \R}$ if for all $A \in \mA_0$ it holds that $ \partial_v \alpha_{u,v} A = \alpha_{u,v} \i \mL_{\Phi^{v}} A$. In the case where the generating family is a constant interaction $\Phi$, the family $(\alpha_{0,v})_{v\in \R}$ is a one-parameter group of automorphisms with generator $\i\,\mL_{\Phi}$. One can show that the family of all gauge automorphisms $(g_\varphi)_{ \varphi \in \R}$ is generated by the number operator $N$ in this sense and that, for an observable $A\in D(\mL_N)$, gauge-invariance ($A\in \mA^N$) is equivalent to $\mL_N  A = 0$.

\medskip

\begin{lemma}\label{lem:cocycles}
    If $(\Phi^{v})_{v\in \R}$ is a family of interactions such that $\sup_{v\in \R} \norm{\Phi^{v}}_\nu < \infty $ for all $\nu\in \N_0$ and for each $M \in P_0(\Z^d)$ the map $ v \mapsto \Phi^{v}(M)$ is continuous with respect to the norm topology in $\mA$, then there is a unique cocycle of automorphisms $(\alpha_{u,v})_{(u,v) \in \R^2}$ generated by $(\Phi^{v})_{v\in \R}$.
    This cocycle has the property that for all $\nu\in \N_0$ there is a constant $c_\nu$ such that 
    \begin{align*}
        \forall A\in \mA_\infty,\, u,v\in \R\colon \quad\norm{\alpha_{u,v} A }_\nu \leq c_\nu \exp(c_\nu\, \sup_{t\in\R}\norm{\Phi^t}_{2d+1+\nu} \, |u-v|) \norm{A}_\nu   \, .
    \end{align*}
    It also holds for all $u,v\in \R$ that $\alpha_{u,v} \mA_\infty = \mA_\infty$.
    If additionally $\sup_{v\in \R} \norm{\Phi^{v}}_{\exp,a} < \infty $ for some $a > 0$, then for all $\nu \in \N_0$ there exists a continuous function $b_\nu:\R \to \R$, growing at most polynomially at infinity, such that
    \begin{align*}
        \norm{\alpha_{u,v} A}_\nu \leq b_\nu(|u-v|) \norm{A}_\nu \quad \text{ for all } A\in \mA_\infty,\, u,v\in \R \,.
    \end{align*}
\end{lemma}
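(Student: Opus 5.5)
The approach is to build the cocycle first on the finite interval $[u,v]$ by Dyson series, or equivalently by a thermodynamic limit of local dynamics, and then control its action on the decay norms via Lieb--Robinson bounds. This is the standard route of \cite{bratteliII,MO20}, adapted to the CAR algebra with $\E_{\La_k}$ as in Proposition~\ref{Ex+UniqueExpectation}.

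\textbf{Existence and uniqueness.} For $k\in\N$ I would truncate the interaction to $\Phi^v_k:=\Phi^v|_{\{M\subseteq\La_k\}}$. This gives a bounded, norm-continuous family of self-adjoint gauge-invariant elements $H_k(v):=\sum_{M\subseteq\La_k}\Phi^v(M)$. The finite-volume cocycle $\alpha^k_{u,v}$ is then defined by the time-ordered exponential of $\i[H_k(\cdot),\cdot]$, solving $\partial_v\alpha^k_{u,v}A=\alpha^k_{u,v}\i[H_k(v),A]$.

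Gauge invariance of $\Phi^v(M)$ is what makes the fermionic Lieb--Robinson bound work: even elements commute with elements of disjoint support. With this, the standard Lieb--Robinson bound $\|[\alpha^k_{u,v}A,B]\|\le C\|A\|\|B\|\,|X|\,\e^{v_{LR}|u-v|}F(\mathrm{dist}(X,Y))$ holds uniformly in $k$, using $\sup_v\|\Phi^v\|_\nu<\infty$ with polynomially decaying $F$. A Cauchy argument then follows from Duhamel's formula,
\[
\alpha^k_{u,v}A-\alpha^l_{u,v}A=\int_u^v \alpha^k_{u,w}\,\i[H_k(w)-H_l(w),\alpha^l_{w,v}A]\,\D w ,
\]
which shows that $\alpha^k_{u,v}A$ converges uniformly on compact $(u,v)$ for $A\in\mA_0$. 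The limit extends by isometry to automorphisms of $\mA$ satisfying the cocycle identities. Differentiating the limit gives the generator equation on $\mA_0$, using absolute convergence of $\mL_{\Phi^v}A$ from Lemma~\ref{lem: sum representation of generator}.

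For uniqueness, suppose $\beta_{u,v}$ is another generated cocycle. I would consider $w\mapsto\beta_{u,w}\alpha_{w,v}A$ for $A\in\mA_0$. Its derivative vanishes once one knows $\alpha_{w,v}\mA_0$ lies in a core for $\mL_{\Phi^w}$, which holds since $\alpha_{w,v}A\in\mA_\infty$ by the bound below and Lemma~\ref{lem: sum representation of generator} applies.

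\textbf{Decay-norm bound.} This is the main step. I would write
\[
\alpha_{u,v}A-\E_{\La_k}\alpha_{u,v}A=(\alpha_{u,v}A-\alpha_{u,v}\E_{\La_{k/2}}A)+(1-\E_{\La_k})\alpha_{u,v}\E_{\La_{k/2}}A .
\]
The first term is at most $\|A\|_\nu(1+k/2)^{-\nu}$. For the second, I would use the standard estimate $\|(1-\E_{\La_k})B\|\le\sup\|[B,C]\|$ over $C\in\mA_{\La_k^c}$ with $\|C\|\le1$. For even $B$ this is fine, and $\alpha$ preserves parity, so A can be split into even and odd parts. The Lieb--Robinson bound with the polynomial tail then gives $|\La_{k/2}|\,\e^{v_{LR}|u-v|}(1+k/2)^{-(\text{tail exponent})}$. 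Choosing $\|\Phi\|_{2d+1+\nu}$ so that the tail exponent exceeds $\nu+d$, the volume factor $|\La_{k/2}|\sim k^d$ is absorbed. The growth factor must enter exponentially in $\sup_t\|\Phi^t\|_{2d+1+\nu}|u-v|$, which is precisely the form stated.

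The main obstacle is getting exactly the index $2d+1+\nu$ and the constant $c_\nu$ independent of $\Phi$. I expect to need the Lieb--Robinson velocity bound $v_{LR}\lesssim\|\Phi\|_{\nu'}$, with the polynomial $F$ built from the weight $(1+\mathrm{diam})^{\nu'}$ as in the quantum-spin version of \cite{MO20}, and I would follow that bookkeeping.

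\textbf{Remaining claims.} Invariance $\alpha_{u,v}\mA_\infty=\mA_\infty$ follows from the bound applied to both $\alpha_{u,v}$ and $\alpha_{v,u}=\alpha_{u,v}^{-1}$. Under exponential decay $\|\Phi^v\|_{\exp,a}$, the Lieb--Robinson bound becomes $\e^{-a'(\mathrm{dist}-v_{LR}|u-v|)}$. The light-cone estimate then handles the second term. For $k\ge 2v_{LR}|u-v|/a'$ it is exponentially small. For smaller $k$, I would bound it trivially by $2\|A\|$ and pay $(1+k)^\nu\le(1+C|u-v|)^\nu$. This yields the polynomially growing $b_\nu(|u-v|)$.
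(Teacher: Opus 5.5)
The paper does not prove this lemma; it imports it from \cite[Appendix~B]{WMMMT24}, so I am judging your argument on its own merits. Your route is the standard one: uniform Lieb--Robinson bounds for the truncated dynamics, a thermodynamic limit via Duhamel, and then the commutator characterization of $\|(1-\E_{\La_k})B\|$. The index bookkeeping you were worried about does close. Take $F(r)=(1+r)^{-(2d+1+\nu)}$. Then $\|\Phi\|_F\le\|\Phi\|_{2d+1+\nu}$, because $\|x-y\|\le\mathrm{diam}(M)$ for $x,y\in M$. The convolution constant of $F$ depends only on $d$ and $\nu$. Finally, $\sum_{x\in\La_{\lfloor k/2\rfloor}}\sum_{y\notin\La_k}F(\|x-y\|)\le C_{d,\nu}(1+k)^{-1-\nu}$. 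Together these give exactly the stated bound, with $c_\nu$ independent of $\Phi$. Your splitting identity is slightly off: the first term should be $(1-\E_{\La_k})\alpha_{u,v}(A-\E_{\La_{\lfloor k/2\rfloor}}A)$, of norm at most $2\|A-\E_{\La_{\lfloor k/2\rfloor}}A\|$. This is harmless.

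The step that fails as written is the odd part of $A$. Note that $A\in\mA_\infty$ is not assumed gauge-invariant, so odd parts really occur. You observe that $\alpha_{u,v}$ preserves parity and that $A$ can be split. However, for odd $B$ the estimate $\|(1-\E_{\La_k})B\|\le\sup\{\|[B,C]\|: C\in\mA_{\La_k^c},\ \|C\|\le 1\}$ gives no decay at all. Already for $B=a_{0,1}$ one has $\|[a_{0,1},a_{x,1}]\|=2$ for every $x\neq 0$. Correspondingly, the ordinary-commutator Lieb--Robinson bound only holds when one of the two entries is even. You need the graded version. For odd $B$, pick an odd self-adjoint unitary $W\in\mA_{\{0\}}$, e.g.\ $W=a_{0,1}+a_{0,1}^*$. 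Then $WB$ is even and $\E_{\La_k}(WB)=W\,\E_{\La_k}(B)$. Moreover, $[WB,C]=W[B,C]$ for even $C$ and $[WB,C]=W\{B,C\}$ for odd $C$. So you must control anticommutators $\|\{\alpha_{u,v}(A_{\mathrm{odd}}),C\}\|$ for odd $C\in\mA_{\La_k^c}$. That is, you need the Lieb--Robinson bound for graded commutators. Its proof goes through verbatim because every $\Phi^v(M)$ is gauge-invariant, hence even. The same fix is needed in the exponential case.

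There is also a smaller gap in the uniqueness argument. Differentiating $w\mapsto\beta_{u,w}\alpha_{w,v}A$ requires two things. First, the backward equation $\partial_w\alpha_{w,v}A=-\i\mL_{\Phi^w}\alpha_{w,v}A$, which you should derive from the finite-volume approximants. Second, an extension of $\beta$'s generator equation from $\mA_0$ to $\mA_\infty$ by $\E_{\La_k}$-approximation, uniformly in $w$; a core property alone does not give this.
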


\subsection{Periodic interactions}

From now on we will specialise to periodic interactions and periodic states for which the existence of the per-volume expectation is easy to see.  

\medskip

\begin{definition}\label{def:translation}
We denote by $\mathrm{Aut}(\mA)$ the group of automorphisms on $\mA$.
    A \emph{translation} is a map $T:\Z^d \to \mathrm{Aut}(\mA)$ that assigns to each possible shift vector $\g$ an automorphism $T_\g$ and satisfies the following properties:
    \begin{itemize}
        \item[(i)] For all $\g\in \Z^d$ and $M \subseteq \Z^d$ it holds that $T_{\g}(\mA_M) = \mA_{M+\g}$.
        \item[(ii)] For all $\g\in \Z^d$ and $x\in \Z^d$, it holds that $T_{\g}n_x = n_{x+\g}$.
    \end{itemize}
\end{definition}

\begin{remark}
    Note that the definition of a translation does \emph{not} require it be a group homomorphism.  As a consequence,
   the  magnetic translations defined in \eqref{magtrans} are translations in the above sense for \emph{all} $b\in\R$.
   Translations do however satisfy a homomorphism property when acting on certain elements like the local number operators or terms of interactions that are periodic with respect to the translation. 
\end{remark}

  The two properties of the definition guarantee that translations are compatible with the conditional expectation and gauge automorphisms respectively:

\medskip
\begin{lemma}\label{lem: compatibility of translations}
    Let $T$ be a translation and $\g \in \Z^d$. It holds that 
    \begin{align*}
        \forall M \subseteq \Z^d \colon& \quad T_\g \, \E_M   =  \E_{M + \g} \, T_\g \, ,\\
        \forall \varphi \in \R\colon& \quad g_\varphi \, T_\g = T_\g \, g_\varphi \, .
    \end{align*} 
\end{lemma}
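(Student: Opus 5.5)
The plan is to prove both identities on generators of $\mA$ and then extend. Property (i), $T_\g(\mA_M)=\mA_{M+\g}$, lets us reduce the first identity to a uniqueness statement. Property (ii), $T_\g n_x = n_{x+\g}$, lets us reduce the second to a statement about the number operator.

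For the conditional expectation, first I will show that $\w^{\tr}\circ T_\g = \w^{\tr}$. The functional $\w^{\tr}\circ T_\g$ is a state, and it is tracial because $\w^{\tr}(T_\g(AB)) = \w^{\tr}(T_\g A\,T_\g B) = \w^{\tr}(T_\g B\,T_\g A) = \w^{\tr}(T_\g(BA))$. By uniqueness of the tracial state it coincides with $\w^{\tr}$. Next, set $\Phi := T_\g^{-1}\,\E_{M+\g}\,T_\g$. By (i), $\Phi$ maps $\mA$ into $T_\g^{-1}(\mA_{M+\g}) = \mA_M$; here (i) applied to $T_\g$ gives $T_\g^{-1}\mA_{M+\g}=\mA_M$ because $T_\g$ is bijective. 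For $A\in\mA$ and $B\in\mA_M$ we have $T_\g B\in\mA_{M+\g}$, so
\[
\w^{\tr}(\Phi(A)B) = \w^{\tr}\bigl(\E_{M+\g}(T_\g A)\,T_\g B\bigr) = \w^{\tr}(T_\g A\,T_\g B) = \w^{\tr}(AB).
\]
The first equality uses $T_\g$-invariance of $\w^{\tr}$, the second uses the defining property \eqref{eq:Conditional expectation defining property} for $M+\g$, and the third uses invariance again. Since $\Phi$ is linear, the uniqueness in Proposition~\ref{Ex+UniqueExpectation} gives $\Phi=\E_M$, that is, $T_\g\E_M = \E_{M+\g}T_\g$.

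For the gauge automorphisms, consider the twisted family $\tilde g_\varphi := T_\g^{-1} g_\varphi T_\g$. It is a strongly continuous one-parameter group of automorphisms, so it suffices to identify its generator with that of $g_\varphi$ on a core, or alternatively to check the action on each $a^*_{x,i}$. I would argue as follows. Gauge transformations are characterized by $g_\varphi(A) = \e^{\i\varphi N_M}A\,\e^{-\i\varphi N_M}$ for $A\in\mA_M$ with $M$ finite, since $N_M=\sum_{x\in M}n_x\in\mA_M$. For $A\in\mA_M$, property (i) gives $T_\g A\in\mA_{M+\g}$. Hence
\[
g_\varphi T_\g A = \e^{\i\varphi N_{M+\g}}\,T_\g A\,\e^{-\i\varphi N_{M+\g}}.
\]
By (ii), $N_{M+\g} = T_\g N_M$, so $\e^{\pm\i\varphi N_{M+\g}} = T_\g\,\e^{\pm\i\varphi N_M}$ by the functional calculus, which automorphisms respect. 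The right-hand side therefore equals $T_\g(\e^{\i\varphi N_M}A\,\e^{-\i\varphi N_M}) = T_\g g_\varphi A$. This proves $g_\varphi T_\g = T_\g g_\varphi$ on $\mA_0$, and both sides are bounded, so it extends to $\mA$ by density.

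The main obstacle is justifying the local implementation $g_\varphi A = \e^{\i\varphi N_M}A\,\e^{-\i\varphi N_M}$ for $A\in\mA_M$. I will verify it on the generators $a^*_{x,i}$ with $x\in M$, using $[n_x,a^*_{x,i}]=a^*_{x,i}$ and the commutation of $n_y$ with $a^*_{x,i}$ for $y\neq x$. Both sides of the identity are automorphisms of $\mA_M$ that agree on generators, so they agree on all of $\mA_M$.
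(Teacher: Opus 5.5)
Your proof is correct and takes essentially the route the paper relies on. The paper defers the details to \cite[Appendix~B]{WMMMT24}, but the proof of Lemma~\ref{lem: convergence of cocycles} invokes the same uniqueness argument: $T_\g$-invariance of the unique tracial state together with the uniqueness of $\E_M$ in Proposition~\ref{Ex+UniqueExpectation} gives the first identity, and property (ii) with the local implementation $g_\varphi(A)=\e^{\i\varphi N_M}A\,\e^{-\i\varphi N_M}$ on $\mA_M$ gives the second.
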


    Let $T$ be a translation. We say that a state $\w$ is $T$-periodic if
\begin{align*}
\forall \, \gamma \in \Z^d, \, A\in\mathcal{A}:\quad
    \w(T_\gamma A) = \w(A)\, .
\end{align*}
    An interaction $\Phi$ is called $T$-periodic if
    \begin{align*}
\forall \, \gamma \in \Z^d, \, M\in P_0(\Z^d):\quad        T_{\g} \Phi(M) = \Phi(M+\g)\,. 
    \end{align*}

\medskip

In order to define a standard representation of a periodic interaction in terms of a quasi-local observable, we introduce the following notions. 
    We say $M_1,M_2 \in P_0(\Z^d)$ have the same shape if there is a $\gamma\in \Z^d$ such that $M_1 = M_2 +\g$. For each $M\in P_0(\Z^d)$ we define $s(M) \in \Z^d$ as the vector that shifts the center of mass of $M$ closest to $0$. More precisely it is defined as the unique element of $\Z^d$ such that
    \begin{align}\label{def: shape}
         \sum_{x\in \, M + s(M)} \frac{x}{\abs{M}} \in  {\big ( }-\tfrac{1}{2}\,,\,\tfrac{1}{2}\, {\big]}^d \, .
    \end{align}
    We call $M+s(M)$ the standard representative of $M$ at $0$ and denote the set of all standard representatives at $0$ by $R_0(\Z^d)$.
We call an interaction $\Phi$ absolutely summable at $0$ if 
    \begin{align*}
        \sum\limits_{\substack{M\in R_0(\Z^d)}}  \Phi(M)
    \end{align*}
    converges absolutely. In that case   we denote the limit by $\Phi_0$.

The following construction shows that, under an additional compatibility condition, adding up all the translates of a quasi-local observable yields a $T$-invariant interaction.

\medskip

\begin{definition}
    Let $T$ be a translation. Then $A\in \mA_\infty$ is called \emph{$T$-compatible} if $T_\g\, T_{\tilde\g} \,A = T_{\g+\tilde\g}\, A $ for all $\g, \tilde\g \in \Z^d $.
\end{definition}

\medskip

\begin{proposition}\label{prop: interaction associated to observable}
    Let $T$ be a translation and $A\in \mA_\infty^N$ be self-adjoint and $T$-compatible. Then the map $\Phi_{A}^T\colon P_0(\Z^d) \to \mA^N  ,\; M\mapsto \Phi_{A}^T(M)$ with
    \begin{align*}
        \Phi_{A}^T(M)\coloneq
        \left\{\begin{array}{cl}
        T_{\g}\E_{\La_{0}}A & \mbox{if  $M=\La_0 +\g$ for some $\g\in\Z^d$}\\
        T_{\g}(\E_{\La_{k}}A - \E_{\La_{k-1}}A)&
        \mbox{if  $M=\La_k +\g$ for some $\g\in\Z^d$ and $k\in\N$}\\
        0 & \mbox{otherwise}
        \end{array}
        \right
        .
    \end{align*}
   defines a  $T$-periodic $B_\infty$-interaction with $(\Phi_A^T)_0 = A$. Conversely, if $\Phi$ is a $T$-periodic $B_\infty$-interaction, then it is absolutely summable at $0$ and $\Phi_0$ lies in $\mA_\infty$, is self-adjoint, and $T$-compatible.
\end{proposition}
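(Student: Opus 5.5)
The proof splits into two statements. First, for each $\gamma\in\Z^2$, $\Phi^T_A$ is a $T$-periodic $B_\infty$-interaction with $(\Phi_A^T)_0=A$. Second, the converse: a $T$-periodic $B_\infty$-interaction is summable at $0$, and its sum is a self-adjoint, $T$-compatible element of $\mA_\infty$.

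\emph{Forward direction.} I would first show that the map is well defined. If $M=\La_k+\gamma=\La_{k'}+\gamma'$, then $k=k'$ and $\gamma=\gamma'$, because a box determines its centre and radius. Each value lies in $\mA^N$, since $\E_{\La_k}$ preserves $\mA^N$ and $T_\gamma$ commutes with gauge automorphisms (Lemma~\ref{lem: compatibility of translations}). Each value lies in $\mA_{\La_k+\gamma}$, since $T_\gamma\mA_{\La_k}=\mA_{\La_k+\gamma}$. Self-adjointness holds because $\E_M$ is positive, hence $*$-preserving.

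\emph{$T$-periodicity.} We have $T_{\tilde\gamma}\Phi(\La_k+\gamma)=T_{\tilde\gamma}T_\gamma(\E_{\La_k}A-\E_{\La_{k-1}}A)$, and we must identify this with $T_{\tilde\gamma+\gamma}(\cdots)$. This is where $T$-compatibility of $A$ enters. Applying Lemma~\ref{lem: compatibility of translations} repeatedly gives
\[
T_{\tilde\gamma}T_\gamma\E_{\La_k}A=\E_{\La_k+\gamma+\tilde\gamma}T_{\tilde\gamma}T_\gamma A=\E_{\La_k+\gamma+\tilde\gamma}T_{\tilde\gamma+\gamma}A=T_{\tilde\gamma+\gamma}\E_{\La_k}A .
\]

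\emph{Interaction norm bound.} I would use $\|\E_{\La_k}A-\E_{\La_{k-1}}A\|\le 2\|A-\E_{\La_{k-1}}A\|\le 2\|A\|_{\mu}k^{-\mu}$ and the fact that $T_\gamma$ is isometric. A given site $x$ lies in $|\La_k|=(2k+1)^2$ translates of $\La_k$, and $\mathrm{diam}(\La_k+\gamma)=2k$. Hence
\[
\|\Phi\|_\nu\le \|A\|+\sum_{k\ge1}(2k+1)^{2}(1+2k)^\nu\,2\|A\|_{\nu+5}\,k^{-\nu-5}<\infty .
\]
The same estimate gives absolute (hence unconditional) convergence of the sums required in the definition of an interaction.

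\emph{Value at $0$.} The standard representatives at $0$ of the boxes are exactly the centred boxes $\La_k$, since their centre of mass is $0$. The sum therefore telescopes to $\lim_k\E_{\La_k}A=A$, the limit existing in norm because $A\in\mA_\infty$.

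\emph{Converse.} Absolute summability at $0$ follows from
\[
\sum_{M\in R_0}\|\Phi(M)\|\le \sum_{M\ni 0\text{-near}}\cdots\le \sum_{x\in\La_1}\sum_{M\ni x}\|\Phi(M)\|\le |\La_1|\,\|\Phi\|_0 ,
\]
using that the standard representative's centre of mass is near $0$. More precisely, each $M\in R_0$ contains a point at distance at most $\mathrm{diam}(M)$ from $0$. So I would group by diameter and bound with the weight $(1+\mathrm{diam} M)^{d}$, which yields $\sum_{M\in R_0}\|\Phi(M)\|\lesssim\|\Phi\|_{d}$.

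For the decay of $\Phi_0$: the sum of the terms with $M\subset\La_k$ lies in $\mA_{\La_k}$. The remaining terms belong to $R_0$ and are not contained in $\La_k$, so they have diameter $\gtrsim k$, because their centre of mass lies near $0$. Since $\E_{\La_k}$ is a contraction fixing the first part,
\[
\|\Phi_0-\E_{\La_k}\Phi_0\|\le 2\sum_{M\in R_0,\ \mathrm{diam}M\gtrsim k}\|\Phi(M)\|\lesssim k^{-\nu}\|\Phi\|_{\nu+d},
\]
so $\Phi_0\in\mA_\infty$. Self-adjointness is termwise.

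\emph{$T$-compatibility of $\Phi_0$.} This is the step I expect to be the main obstacle: translations are not homomorphisms, so $T_{\tilde\gamma}T_\gamma\Phi_0=T_{\gamma+\tilde\gamma}\Phi_0$ must be derived from periodicity. Periodicity gives
\[
T_{\tilde\gamma}T_\gamma\Phi(M)=T_{\tilde\gamma}\Phi(M+\gamma)=\Phi(M+\gamma+\tilde\gamma)=T_{\gamma+\tilde\gamma}\Phi(M)
\]
for every $M$. Summing over $M\in R_0$ and using continuity of automorphisms on the absolutely convergent series gives compatibility.
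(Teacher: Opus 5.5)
Your proposal is correct and follows the standard argument that the paper relies on. The paper does not reprove this proposition but defers to \cite[Appendix~B]{WMMMT24}, and its ``straightforward computation'' $\|\Phi^{t,\mathrm{cg}}\|_\nu\le c_\nu\|\Phi_0^t\|_{\nu+4}$ in Appendix~B is exactly your telescoping estimate, with exponent $\nu+d+2$ rather than your $\nu+5$. Two points need tidying up:
\begin{itemize}
\item Your first chain bounding $\sum_{M\in R_0}\|\Phi(M)\|$ by $|\La_1|\,\|\Phi\|_0$ is false as written, since for $r\ge 2$ the set $\{-re_1,re_1\}\in R_0$ misses $\La_1$. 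Your corrected diameter grouping is fine, but it needs $\|\Phi(M+\g)\|=\|\Phi(M)\|$ (from periodicity and isometry of $T_\g$) to turn $\sum_r\sup_x$ into $\sup_x\sum_r$. Alternatively, use the weight $(1+\mathrm{diam}M)^{d+1}$.
\item The $d=2$ specializations, such as $(2k+1)^2$ and $\|A\|_{\nu+5}$, should be written for general $d$.
\end{itemize}
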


\medskip

\begin{proposition}\label{commutator of interactions}
    Let $\Phi$ and $\Psi$ be $B_\infty$-interactions and $\omega$ a state such that $\Phi$, $\Psi$, and $\w$ are all  $T$-periodic with respect to a translation $T$. Moreover, let $j \in \{1,\dots,d\}$, $p,q \in \R$.
    \begin{itemize}
        \item[\rm(i)] $\i[p\Phi + qX_j, \Psi]$ is a $T$-periodic $B_\infty$-interaction.
        \item[\rm(ii)] $\i[X_j,\Psi]_0  = \i\mathcal{L}_{X_j}(\Psi_0)$.
        \item[\rm(iii)] $\mL_{\i[\Psi,p\Phi+qX_j]}A = \sum_{\mu \in \Z^d} [\i \, \mL_\Psi \, (p\, T_\mu\, \Phi_0 + q \, \mu_j \,  n_\mu) ,\, A\, ]$ for all $A\in \mA_\infty$.
        \item[\rm(iv)] $\omega(\i[\Phi,\Psi]_0)  =\omega(\i\mathcal{L}_\Phi(\Psi_0)) =-\omega(\i\mathcal{L}_\Psi(\Phi_0))$.
    \end{itemize}
\end{proposition}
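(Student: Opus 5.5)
The plan is to prove the four items in order, always working at the level of the local terms $\Phi(M)$, $\Psi(M)$. Two tools do most of the work. The first is gauge invariance: for $A\in\mA^N_{M_1}$ and $B\in\mA_{M_2}$ with $M_1\cap M_2=\emptyset$ one has $[A,B]=0$, and $\sum_{x\in M}[n_x,A]=0$ for $A\in\mA^N_M$. The second is the $B_\infty$ bounds, which give absolute convergence, so that all rearrangements of sums are legitimate.

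\textbf{Item (i).} Fix $x\in\Z^d$ and consider a pair $(M_1,M_2)$ with $M_1\cup M_2=M\ni x$ and $[\Phi(M_1),\Psi(M_2)]\neq0$. Then $M_1\cap M_2\neq\emptyset$, so $\mathrm{diam}(M)\le \mathrm{diam}(M_1)+\mathrm{diam}(M_2)$, and one of $M_1,M_2$ contains $x$. Say $x\in M_1$; the other case is symmetric. Summing over $y\in M_1$ and then over $M_2\ni y$ gives
\[
\|[\Phi,\Psi]\|_\nu\le 2^{\nu+1}\big(\|\Phi\|_{\nu+d}\|\Psi\|_{\nu}+\|\Phi\|_{\nu}\|\Psi\|_{\nu+d}\big)\,,
\]
where the extra $d$ powers absorb $|M_1|\le(1+\mathrm{diam} M_1)^d$. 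In particular the sums defining an interaction converge unconditionally.

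For the $X_j$ part, $[X_j,\Psi](M)=\sum_{x\in M}x_j[n_x,\Psi(M)]$. Since $\Psi(M)$ is gauge invariant, I will subtract $(x_0)_j\sum_{x\in M}[n_x,\Psi(M)]=0$ for a reference point $x_0\in M$. This gives $\|[X_j,\Psi](M)\|\le 2|M|\,\mathrm{diam}(M)\|\Psi(M)\|$, hence a $B_\infty$ bound.

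Periodicity of $[\Phi,\Psi]$ is immediate from $T_\gamma$ being a homomorphism. For the $X_j$ part, $T_\gamma(x_jn_x)=x_jn_{x+\gamma}=(x+\gamma)_jn_{x+\gamma}-\gamma_jn_{x+\gamma}$. The $\gamma_j$ term then drops out by the same gauge identity, now applied to $T_\gamma\Psi(M)=\Psi(M+\gamma)$, which lies in $\mA^N_{M+\gamma}$.

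\textbf{Item (ii).} This is a direct interchange of sums. One has $\i[X_j,\Psi]_0=\sum_{M\in R_0(\Z^d)}\sum_{x\in M}\i x_j[n_x,\Psi(M)]$. By Lemma~\ref{lem: sum representation of generator}, $\i\mL_{X_j}\Psi_0=\sum_x\sum_{M\in R_0}\i x_j[n_x,\Psi(M)]$, and here only $x\in M$ contributes. Absolute convergence of the double sum follows from the bound in (i) via $\sum_{M\in R_0}|M|\,\mathrm{diam}(M)\|\Psi(M)\|<\infty$, so Fubini applies.

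\textbf{Item (iii).} For $A\in\mA_\infty$ I write
\[
\mL_{\i[\Psi,\Phi]}A=\sum_M\sum_{M_1\cup M_2=M}\big[\i[\Psi(M_1),\Phi(M_2)],A\big]
\]
and reorder the sum by fixing $M_2$ first. The inner sum over $M_1$ is then $\sum_{M_1\cap M_2\ne\emptyset}[\Psi(M_1),\Phi(M_2)]=\mL_\Psi\Phi(M_2)$. Next I group the sets $M_2$ by shape, $M_2=M'+\mu$ with $M'\in R_0(\Z^d)$. Periodicity gives $\sum_{M'\in R_0}\Phi(M'+\mu)=T_\mu\Phi_0$, which yields $\sum_\mu[\i\mL_\Psi T_\mu\Phi_0,A]$. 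The $X_j$ term is identical with $\Phi(M_2)$ replaced by $\mu_jn_\mu$.

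The justification is the delicate part. I need joint absolute convergence of $\sum_{\mu}\sum_{M'}\sum_{M_1}\|[[\Psi(M_1),\Phi(M'+\mu)],A]\|$. For this I will use the decay of $A$ through $\|A-\E_{\La_k}A\|\le\|A\|_\nu(1+k)^{-\nu}$: a term can only be non-negligible if $M'+\mu$ reaches within distance $k$ of the origin. This gives a factor decaying polynomially in $|\mu|$, controlled by $\|\Psi\|_{\nu'}\|\Phi\|_{\nu'}\|A\|_{\nu'}$. In the $X_j$ case the growth $|\mu_j|$ is absorbed by taking one extra power of decay.

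\textbf{Item (iv).} I will show $\omega(\i[\Phi,\Psi]_0)=\sum_{M\in R_0}\sum_{M_1\cup M_2=M}\omega(\i[\Phi(M_1),\Psi(M_2)])$ is a rearrangement of $\omega(\i\mL_\Phi\Psi_0)=\sum_{M_2\in R_0}\sum_{M_1}\omega(\i[\Phi(M_1),\Psi(M_2)])$. Consider the set $\mathcal{S}$ of pairs $(M_1,M_2)$ with $M_1\cap M_2\ne\emptyset$, and its quotient by simultaneous translation. Each orbit has exactly one representative with $M_1\cup M_2\in R_0$ and exactly one with $M_2\in R_0$. By $T$-periodicity of $\omega$, $\Phi$ and $\Psi$, the summand is constant along orbits. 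Hence both sums equal the sum over orbits, provided it converges absolutely, which holds by the estimate of (i) at $\nu=0$. The second equality follows by antisymmetry of the commutator after exchanging the roles of $\Phi$ and $\Psi$.

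I expect the main obstacle to be the convergence bookkeeping in (iii). The order of summation over $\mu$, $M'$ and $M_1$ must be exchanged while the commutator with $A$ still controls the unbounded $\mu$-sum, and in the $X_j$ case the coefficients $\mu_j$ grow linearly.
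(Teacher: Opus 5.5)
Your proof is correct. The paper does not prove this proposition itself; it defers to \cite[Appendix~B]{WMMMT24}. Your argument works directly with the local terms $\Phi(M),\Psi(M)$: gauge invariance discards commutators of disjointly supported terms and recentres the weights $x_j$, the $B_\infty$ norms give absolute convergence, and in (iv) both sides are identified with a single sum over orbits of intersecting pairs $(M_1,M_2)$ under simultaneous translation. This is a complete and self-contained route.

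Two places deserve an explicit line.

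In (ii), absolute convergence of $\sum_{M\in R_0}\sum_{x\in M}|x_j|\,\|[n_x,\Psi(M)]\|$ does not follow from the recentred bound of (i). It holds for three reasons. First, $|x_j|\le \mathrm{diam}(M)+\tfrac12$ for $x\in M\in R_0$, because the centre of mass of a standard representative lies in $(-\tfrac12,\tfrac12]^d$. Second, $\sum_{M\in R_0}|M|\,g(M)=\sum_{M\ni 0}g(M)$ for translation-invariant weights $g$. Third, $\|\Psi(M+\gamma)\|=\|\Psi(M)\|$.

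In (iii), the localisation of $[[\Psi(M_1),\Phi(M_2)],A]$ is governed by $\mathrm{dist}(0,M_1\cup M_2)$, not by the position of $M_2=M'+\mu$ alone. You then recover decay in $\|\mu\|$ from $\|\mu\|\le \mathrm{dist}(0,M_1\cup M_2)+\mathrm{diam}(M_1)+\mathrm{diam}(M_2)+\tfrac12$. The cost is powers of $(1+\mathrm{diam}(M_1))(1+\mathrm{diam}(M_2))$, which the $B_\infty$ norms absorb.
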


\subsection{The off-diagonal map}\label{sec:offdiag}

We need to introduce a further technical ingredient, namely the map that selects the off-diagonal part of an operator or of an interaction with respect to a ground state~$\w_0$, as well as the quasi-local inverse of the Liouvillian associated to the Hamiltonian. The former  map was  introduced in \cite{hastings2005quasiadiabatic,BMNS12} for finite gapped systems and extended in \cite{MO20} to infinite systems with a gap in the bulk. Here we further generalize it by composing it with automorphisms of the algebra $\mA$.
 The definition of both objects involves the choice of a function 
$W_g: \R\to \R$  having the following properties for some $g>0$:
    \begin{itemize}
        \item[(i)] $W_g$ is odd;
        \item[(ii)] the Fourier transform $\widehat W_g$ of $W_g$ satisfies $\widehat{W_g}(k) = \frac{- \i}{\sqrt{2  \pi}\, k}$ for  $k \in \R \setminus [-g,g]$;
        \item[(iii)] $\sup_{s\in \R} \bigl( \lvert s \rvert ^n \lvert W_g(s) \rvert \bigr) < \infty$ for all $n\in\N_0$. 
    \end{itemize}
 It is easy to see that such a function exists for any $g>0$, and an explicit example with additional properties is given in \cite[Lemma 2.6]{BMNS12} (see also \cite[Lemma 2.3 and Equation 2.12]{BMNS12}).

\medskip

\begin{lemma}\label{lem: almost local obs. for OD and inv. liou.}
    Let $T$ be a translation, $H\in B_\infty$ a $T$-periodic interaction, $\alpha\in \mathrm{Aut}(\mA)$ be such that $\alpha \mA_\infty = \mA_\infty $ and $\alpha T_\g = T_\g \alpha$ for all $\g \in \Z^d$, and $\Psi$ an interaction of the form $\Psi= p\, Y + q\, X_j$, where $p,q \in \R$, $j\in\{1,\dots,d\}$ and $Y \in B_\infty$ is   $T$-periodic. Then the operators 
    \begin{align*}
        (\Psi\OD[\alpha])_* :=  \alpha^{-1} \int_{\R}\mathrm{d}s \, W_g(s) \, \e^{\i s\mL_{H}} \,  \alpha \,  \i \,\mL_\Psi \, \alpha^{-1} \, H_0
    \end{align*}
    and
    \begin{align*}
        \mathcal{I}(\Psi)_* \coloneq  - \int_{\R}\mathrm{d}s \,W_g(s) \int_{0}^s \mathrm{d}u  \, \e^{\i u\mL_{H}} \, \i \, \mL_\Psi \, H_0 \, ,
    \end{align*}
    defined  via Bochner integrals in  $\mathcal{A}$,
    are in $\mA_\infty$, self-adjoint, and $T$-compatible.
\end{lemma}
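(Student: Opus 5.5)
The plan is to follow the integrands through each operation and to check, one at a time, the three properties: membership in $\mA_\infty$, self-adjointness, and $T$-compatibility. The starting point is $H_0$. By Proposition~\ref{prop: interaction associated to observable}, $H_0\in\mA_\infty^N$, and it is self-adjoint and $T$-compatible. By assumption $\alpha^{-1}\mA_\infty=\mA_\infty$, so also $\alpha^{-1}H_0\in\mA_\infty$. Lemma~\ref{lem: sum representation of generator} then gives $\i\mL_\Psi\alpha^{-1}H_0\in\mA_\infty$ with a $\|\cdot\|_\nu$ bound in terms of finitely many decay norms. This uses that $\Psi=pY+qX_j$ is exactly of the admissible form $p\Phi+qX_j$. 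Applying $\alpha$ again keeps us in $\mA_\infty$. For $\mathcal I(\Psi)_*$ the same reasoning applies with $\alpha=\mathrm{id}$.

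The main step, and the one I expect to be the real obstacle, is to control $s\mapsto \e^{\i s\mL_H}B$ for $B\in\mA_\infty$. I need a bound $\|\e^{\i s\mL_H}B\|_\nu\le b_\nu(|s|)\|B\|_{\nu'}$ with $b_\nu$ of at most polynomial growth, because $W_g$ (and $s\mapsto \int_0^s$) only has rapid, not exponential, decay. Lemma~\ref{lem:cocycles} gives polynomial growth only for $B_{\exp}$-interactions; for a general $H\in B_\infty$ it gives only $c_\nu\e^{c_\nu|s|}$.

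My first attempt is a Lieb--Robinson bound for $B_\infty$-interactions with a polynomially growing light cone, in the style of Nachtergaele--Sims--Young. I would decompose $B=\sum_k(\E_{\La_k}-\E_{\La_{k-1}})B$ and estimate $\|\e^{\i s\mL_H}A_k-\E_{\La_{k+r}}\e^{\i s\mL_H}A_k\|$ for the local pieces $A_k$. Choosing $r\sim k+|s|^{\kappa}$, the decay of $H$ converts into polynomial-in-$|s|$ losses in the $\nu$-norm, at the price of shifting $\nu$ to a larger $\nu'$. Should this fail in the stated generality, I would fall back on the corresponding estimate from \cite[Appendix~B]{WMMMT24}.

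Given this bound, I would argue that $s\mapsto \alpha^{-1}\e^{\i s\mL_H}\alpha B$ is norm-continuous and, in each decay norm, dominated by $|W_g(s)|\,b_\nu(|s|)$. For the inner integral in $\mathcal I(\Psi)_*$ the dominating function is $|W_g(s)|\,|s|\,b_\nu(|s|)$. Both are integrable by property~(iii) of $W_g$. Hence the Bochner integrals converge in every $\|\cdot\|_\nu$, so both operators lie in $\mA_\infty$; to see this, pull the conditional expectations inside the integral.

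For self-adjointness I would use the following facts. $W_g$ is real. The automorphisms $\alpha^{\pm1}$ and $\e^{\i s\mL_H}$ are $*$-preserving. $\i\mL_\Psi$ maps self-adjoint elements to self-adjoint elements, since $\Psi$ has self-adjoint terms and $(\i[\Psi(M),A])^*=\i[\Psi(M),A^*]$. So each integrand is self-adjoint, and therefore so is the integral.

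For $T$-compatibility I would first note that $T$-periodicity of $H$ gives $T_\g\e^{\i s\mL_H}T_\g^{-1}=\e^{\i s\mL_{H}}$, and that $\alpha$ commutes with $T_\g$ by hypothesis. Next, $T_\g\mL_Y=\mL_YT_\g$ because $Y$ is $T$-periodic. For the position part, property~(ii) of a translation gives $T_\g\mL_{X_j}A=\mL_{X_j}T_\g A-\g_j\mL_NT_\g A$. The last term vanishes on gauge-invariant $A$, and all our arguments, such as $\alpha^{-1}H_0$, are gauge-invariant by Lemma~\ref{lem: compatibility of translations}. Putting these together, $T_\g$ commutes with the whole map $A\mapsto\alpha^{-1}\int W_g\,\e^{\i s\mL_H}\alpha\,\i\mL_\Psi\alpha^{-1}A$ on $\mA^N_\infty$, and likewise with the map defining $\mathcal I$. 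Since $H_0$ is $T$-compatible, $T_\g T_{\g'}$ and $T_{\g+\g'}$ agree on the resulting operators, which is the required $T$-compatibility.
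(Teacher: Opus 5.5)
The paper does not reprove this lemma; it defers to \cite[Appendix~B]{WMMMT24}. Your overall structure is the natural one: Lemma~\ref{lem: sum representation of generator} for $\i\mL_\Psi\alpha^{-1}H_0$, a polynomially bounded estimate on $\e^{\i s\mL_H}$ integrated against the rapidly decaying $W_g$, reality of $W_g$ for self-adjointness, and commuting $T_\g$ through each building block.

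\textbf{The gap.} One step of your $T$-compatibility argument is wrong. You assert that $\alpha^{-1}H_0$ is gauge-invariant ``by Lemma~\ref{lem: compatibility of translations}''. That lemma concerns $T_\g$, not $\alpha$, and the hypotheses do not make $\alpha$ commute with $g_\varphi$. It need not do so. Take $\fd=2$, ordinary translations $T_\g a_{y,i}=a_{y+\g,i}$, and the on-site Bogoliubov rotation $a_{x,1}\mapsto\cos\theta\,a_{x,1}+\sin\theta\,a_{x,2}^*$, $a_{x,2}\mapsto\cos\theta\,a_{x,2}-\sin\theta\,a_{x,1}^*$. It commutes with all $T_\g$. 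It maps each $\mA_M$ onto itself and preserves $\w^{\tr}$, so it commutes with every $\E_M$ and preserves $\mA_\infty$. But for $\theta\notin\frac{\pi}{2}\Z$ it does not preserve $\mA^N$. Hence the term $q\g_j\mL_N T_\g\alpha^{-1}H_0$ need not vanish, and $T_\g$ does not commute with your map.

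\textbf{The repair.} The conclusion still holds because this extra term is additive in $\g$. For $c\in\R$ and $A\in\mA_\infty$ set
\[
\mathcal M_c(A):=\alpha^{-1}\int_\R\D s\,W_g(s)\,\e^{\i s\mL_H}\,\alpha\,\i\bigl(\mL_\Psi-c\,\mL_N\bigr)\alpha^{-1}A\,.
\]
Your identity $T_\g\mL_{X_j}=(\mL_{X_j}-\g_j\mL_N)T_\g$ holds. So do $T_\g\mL_N=\mL_NT_\g$ and the commutation of $T_\g$ with $\alpha^{\pm1}$, $\e^{\i s\mL_H}$ and $\mL_Y$. Together they give $T_\g\mathcal M_c(A)=\mathcal M_{c+q\g_j}(T_\g A)$. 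Using $T$-compatibility of $H_0$, therefore
\[
T_{\g'}T_\g\mathcal M_0(H_0)=\mathcal M_{q(\g+\g')_j}(T_{\g'}T_\g H_0)=\mathcal M_{q(\g+\g')_j}(T_{\g+\g'}H_0)=T_{\g+\g'}\mathcal M_0(H_0).
\]
Alternatively, add the hypothesis $\alpha\,g_\varphi=g_\varphi\,\alpha$. It holds for the automorphisms generated by gauge-invariant interactions used in the paper. It is needed anyway to get $(\Psi\OD[\alpha])_*\in\mA^N_\infty$ when Proposition~\ref{prop: interaction associated to observable} is applied in Definition~\ref{def:off-diagonal}. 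For $\mathcal I(\Psi)_*$, where $\alpha=\mathrm{id}$, your gauge argument is correct as written.

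\textbf{The time-growth estimate.} You correctly saw that Lemma~\ref{lem:cocycles} gives only exponential time growth for $H\in B_\infty$. Your remedy would not work as sketched, though. A Nachtergaele--Sims--Young bound for a polynomially decaying interaction carries a factor $\e^{v|s|}$. Taking $r\sim k+|s|^\kappa$ then leaves something like $\e^{v|s|}|s|^{-\kappa n}$, which is not polynomial. A polynomial light cone needs an extra idea, e.g.\ splitting $H$ into a finite-range part and a small tail with the range adapted to $|s|$. You do not need to do this yourself. The estimate $\|\e^{\i s\mL_H}A\|_\nu\le f_\nu(|s|)\|A\|_\nu$ with polynomially growing $f_\nu$ for $B_\infty$-interactions is \cite[Lemma~2.10]{BTW26automorphic}, which the paper invokes in Appendix~\ref{sec:convergences}. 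For $H\in B_{\exp}$, Lemma~\ref{lem:cocycles} already suffices.

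\textbf{A smaller point.} No $\|\cdot\|_\nu$-bound on $\alpha^{-1}$ is assumed, so do not dominate $\alpha^{-1}\e^{\i s\mL_H}\alpha B$ pointwise in the decay norms. Integrate first and apply $\alpha^{-1}$ afterwards. Or note that $\alpha^{\pm1}$ is continuous on the Fr\'echet space $\mA_\infty$ by the closed graph theorem.

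With these repairs, your proof is complete.
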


\begin{definition}\label{def:off-diagonal}
Given the assumptions of Lemma \ref{lem: almost local obs. for OD and inv. liou.}, we define the $T$-periodic $B_\infty$-interactions
    \begin{equation}\label{dfn:XjOD}
    \Psi\OD[\alpha] \coloneq \Phi_{(\Psi\OD[\alpha])_*}^T\quad\mbox{and}\quad \mathcal{I}(\Psi) \coloneq \Phi_{\mathcal{I}(\Psi)_*}^T\,
    \end{equation}
using Proposition \ref{prop: interaction associated to observable}.
For $\alpha=\mathrm{id}$,we denote $\Psi\OD[\alpha]$ simply by $\Psi\OD$.
Note that the interactions $\Psi\OD[\alpha]$ and $\mathcal{I}(\Psi)$ are defined in such a way that ${(\Psi\OD[\alpha])_0} = {(\Psi\OD[\alpha])_*}$ and $\mathcal{I}(\Psi)_0 = \mathcal{I}(\Psi)_*$. From now on, we will always refer to these quasi-local operators as ${(\Psi\OD[\alpha])_0} $ and $\mathcal{I}(\Psi)_0$.
\end{definition}

\medskip

We call the map $\Psi\mapsto  \Psi\OD[\alpha]$ the off-diagonal map, and  refer to $\Psi\OD[\alpha]$ as the off-diagonal part of $\Psi$ (the dependence on $\alpha$ is left implicit if no ambiguity can arise).

\section{Some technical tools} \label{appendix_technical}

We collect here a number of results that are instrumental to the proofs of the main results in the paper.

\subsection{Basics on non-interacting periodic magnetic systems}\label{app:BF}

We review here the Bloch--Floquet decomposition of operators commuting with a unitary representation of a lattice of (magnetic) translations, as well as their finite-volume approximations with periodic boundary conditions (the so-called supercell restrictions: cf.\ e.g.~\cite[Sec.~4]{gontier2016convergence}).

The non-interacting  Hamiltonians $\mathfrak{h}^b$ acting on the one-body space $ \ell^2(\Z^2,\C^\fd)$ were defined in Section~\ref{sec:models}. 
We check \eqref{eqn:hb_tb_inv}:
\begin{eqnarray*}
    ((t^b_\gamma)^{-1}\mathfrak{h}^bt^b_\gamma  f)(x) &=& \sum_{y'} \e^{\i b (x_1-\gamma_1)\gamma_2}\mathfrak{h}^b(x-\gamma,y') \e^{-\i b y_1'\gamma_2} f(y'+\gamma)\\
    &=& \sum_{y=y'+\gamma} \e^{\i b (x_1-\gamma_1)\gamma_2} \mathfrak{h}^b(x-\gamma,y- \gamma) \e^{-\i b (y_1-\gamma_1)\gamma_2} f(y )\\
    &=&\sum_y \e^{\i b x_1\gamma_2} \mathrm{e}^{ \mathrm{i}\frac{x_2+y_2-2\gamma_2}{2} b (x_1-y_1)}  h(x-y) \e^{-\i b  y_1 \gamma_2} f(y )\\
    &=&\sum_y   \mathrm{e}^{ \mathrm{i}\frac{x_2+y_2}{2} b (x_1-y_1)}  h(x-y)   f(y )\\
    &=& (\mathfrak{h}^bf)(x)\,.
\end{eqnarray*}

For $b=2\pi \frac{p}{q} \in Q$ 
magnetic translations by vectors in    the sublattice $\Z^2_q:= \Z\times (q \Z)\subset \Z^2$ are actually   standard translations:
\[
(t^b_\gamma f)(x)=  \e^{-\i b x_1\gamma_2} f(x+\gamma) = f(x+\gamma) = (t^0_\gamma f)(x) \quad \forall\gamma\in \Z^2_q\,,
\]
as the argument of the magnetic phase is an integer multiple of $2\pi$. This allows us to use the Bloch--Floquet transformation with respect to this sublattice in order to diagonalize $\Z^2_q$-periodic operators.

To define it,  we decompose $x\in\Z^2$ as $x=x_q + [x]_q$ with $x_q\in\Z^2_q$ and $[x]_{q}\in \{(0,0),\ldots,(0,q-1)\}$, and define $I_q:= \{0,\ldots,q-1\}$.
This allows us to  identify 
\[
\iota_q: \ell^2(\Z^2,\C^\fd)\to \ell^2(\Z^2_q \times I_q, \C^{\fd})\,,\quad f\mapsto (\iota_q f)(x,j) := f(x+(0,j))
\]
and map this space to functions on the magnetic Brillouin zone $\T^2_q:= [0,2\pi)\times [0,2\pi/q)$ equipped with normalized Lebesgue measure $\mathrm{d}\mu_q := \frac{q\,\mathrm{d}\lambda^2}{(2\pi)^2}$ :
\[
\mathcal{F}_q: \ell^2(\Z^2;\C^{\fd}) \to L^2(\T^2_q \times I_q,\mathrm{d}\mu_q \times \nu;\C^{\fd})\,,\quad f\mapsto \hat f_j(k) := \sum_{x\in\Z^2_q}\e^{\i k x} \e^{\i k_2 j} f(x+(0,j))\,,
\]
with inverse
\[
\mathcal{F}^{-1}_q:     L^2(\T^2_q \times I_q,\mathrm{d}\mu_q \times \nu;\C^{\fd})\to \ell^2(\Z^2 ;\C^{\fd})\,,\quad f \mapsto \check f(x) := \int_{\T^2_q} \e^{-\i k x} f_{[x]_q}(k)\, \mathrm{d}\mu_q(k)\,.
\]
Let $A\in\mL(\ell^2(\Z^2,\C^\fd))$ be any bounded operator that is translation invariant with respect to translations in $\Z^2_q$, i.e., such that its kernel satisfies 
\[
A(x+\gamma,y+\gamma) = A(x,y) \quad \forall \gamma\in\Z^2_q\,,\;x,y\in \Z^2\,,
\]
and is thus of the form 
\[
A(x_q+[x]_q, y_q+[y]_q) = A(x_q-y_q + [x]_q,[y]_q) =: a_{[x]_q,[y]_q} (x_q-y_q)\,.
\]
Such   operators are matrix-valued multiplication operators in the Bloch--Floquet representation:
\begin{align*}
    \hspace{2em}&\hspace{-2em}
    (\mathcal{F}_q\,A\,\mathcal{F}^{-1}_q \hat f)_j(k)
    \\
    &=
    \sum_{x\in\Z^2_q}\e^{\i k x} \e^{\i k_2 j} \sum_{y\in\Z^2} A(x+(0,j),y) \int_{\T^2_q} \e^{-\i t y} \hat f_{[y]_q}(t) \,\mathrm{d}\mu_q(t)
    \\
    &=
    \sum_{x\in\Z^2_q}\e^{\i k x} \e^{\i k_2 j} \sum_{y\in\Z^2_q} \sum_{l\in I_q} \underbrace{A(x+(0,j),y+(0,l))}_{\eqcolon  a_{jl}(x-y)} \int_{\T^2_q} \e^{-\i t (y+(0,l)) } \hat f_l(t) \,\mathrm{d}\mu_q(t)
    \\
    &=
    \int_{\T^2_q} \sum_{z\in\Z^2_q} 
    \underbrace{\sum_{x\in\Z^2_q}\e^{\i  x(k-t)}}_{=\delta(k-t)} \e^{\i (k_2 j-t_2l)} \sum_{l\in I_q}   a_{jl}(z)\e^{ \i t  z } \hat f_l(t) \,\mathrm{d}\mu_q(t)
    \\
    &=
    \sum_{l\in I_q}  \sum_{z\in\Z^2_q}    a_{jl}(z)\e^{ \i k  (z+(0,j-l)) } \hat f_l(k) \\
    &\eqcolon
    \sum_{l\in I_q}  \hat a_{jl}(k)\,\hat f_l(k)\,.    
\end{align*}
In particular, this applies to our Hamiltonian, for which we have
\[ \mathfrak{h}^b(x+(0,j),y+(0,l)) = \mathrm{e}^{ \mathrm{i}\frac{x_2+y_2+ j+l}{2} b (x_1-y_1)} h(x-y+(0,j-l)) =: h^b_{jl}(x-y)\]
and thus 
\[
(\mathcal{F}_q\mathfrak{h}^b\mathcal{F}^{-1}_q \hat f)(k)  = \hat h^b(k) \,\hat f (k)\quad\mbox{with}\quad 
\hat h^b_{jl}(k) := 
    \sum_{z\in\Z^2_q}    h^b_{jl}(z)\e^{ \i k  (z+(0,j-l)) }  \,.
\]
Exponential decay of $h(z)$ implies that the above Bloch--Floquet fibers $\hat h^b_{jl}(k)$ depend analytically on $k \in \T^2_q$, i.e.\ they can be extended to a complex-analytic function defined on a strip around the ``real axis'' in the space of complex quasi-momenta. This analyticity is inherited from the fibers of the Fermi projection $P^{\mathfrak{h}^b}_{(-\infty,\mu)}$  via the Riesz formula, provided $\mu$ lies in a spectral gap of $\mathfrak{h}^b$; in turn, this yields that the integral kernel $P^{\mathfrak{h}^b}_{(-\infty,\mu)}(x,y)$ decays exponentially away from the diagonal, compare \eqref{eqn:decay_of_kernel}.

Next we discuss the finite volume restrictions
of $\Z^2_q$-periodic operators to boxes $\Lambda_{k_n}$ that are commensurate with the $\Z^2_q$-periodicity: let 
$k_n:= \frac{1}{2}((2n+1)q-1)\in\N$ and $L:= (2n+1)q$.
We decompose $x\in\Z^2$ as $x=x^n +[x]^n$ with $x^n\in(L\Z)^2\subset \Z^2_q$ and $[x]^n\in\Lambda_{k_n}$ and define the restriction of  a $\Z^2_q$-periodic operator $A$   on $\ell^2(\Z^2,\C^\fd)$   to $\ell^2(\Lambda_{k_n},\C^\fd)$ to be the operator $A_n$ with kernel
\[
A_n(x,y) := \sum_{z\in (L\Z)^2} A(x,y+z)\,.
\]
  The (dual) magnetic translations on $\ell^2(\Lambda_{k_n},\C^\fd)$ are   defined by
\[
t^b_{n,\gamma}:\ell^2(\Lambda_{k_n},\C^\fd)\to \ell^2(\Lambda_{k_n},\C^\fd)\,,\quad f \mapsto (t^b_{n,\gamma} f)(x) := \e^{-\i b x_1\gamma_2}f([x+\gamma]^{n})
\]
with inverse 
\[
(t^b_{n,\gamma})^{-1}:\ell^2(\Lambda_{k_n},\C^\fd)\to \ell^2(\Lambda_{k_n},\C^\fd)\,,\quad f \mapsto ((t^b_{n,\gamma})^{-1} f)(x)  = \e^{\i b (x_1-\gamma_1)\gamma_2}f([x-\gamma]^n)\,.
\]
Now the magnetic Bloch--Floquet transform maps to the   Brillouin zone $\T^2_{q,n}:=  \left(\frac{2\pi}{L}I_L\right)\times \left(\frac{2\pi}{ L} I_{L/q} \right) $, where, as above,  $I_m:=\{0,\ldots,m-1\}$:
\[
\mathcal{F}_{q,n}: \ell^2(\Lambda_{k_n};\C^{\fd}) \to \ell^2(\T^2_{q,n} \times I_q;\C^{\fd})\,,\quad f\mapsto \hat f_j(k) := \sum_{x\in\Z^2_q\cap \Lambda_{k_n}}\e^{\i k x} \e^{\i k_2 j} f([x+(0,j)]^n)\,,
\]
with inverse
\[
\mathcal{F}^{-1}_{q,n}:     \ell^2(\T^2_{q,n} \times I_q;\C^{\fd})\to \ell^2(\Lambda_{k_n};\C^{\fd})\,,\quad f \mapsto \check f(x) := \frac{1}{|\T^2_{q,n}|}\sum_{k\in\T^2_{q,n}} \e^{-\i k x} f_{[x]_q}(k)\,.
\]
For $x \in\Z^2_q$ and $y\in\Z^2$ we find that
\begin{eqnarray*}
A_n([x+(0,j)]^n,y) 
    &=& \sum_{z\in (L\Z)^2} A([x+(0,j)]^n,y+z)
    \\
    &=& \sum_{z\in (L\Z)^2} A([x+(0,j)]^n + (x+(0,j))^n,y+z+ (x+(0,j))^n)
     \\
    &=& \sum_{z\in (L\Z)^2} A(x+(0,j),y+z+ (x+(0,j))^n)
     \\
    &=& \sum_{w\in (L\Z)^2} A(x+(0,j),y+w)
    \\
     &=& \sum_{w\in (L\Z)^2} A(x-y_q-w+(0,j),[y]_q)
      \\
     &=& \sum_{w\in (L\Z)^2} a_{j[y]_q}(x-y_q-w)
\end{eqnarray*}
and thus
\begin{align*}
    \hspace{0em}&\hspace{-0em} 
    (\mathcal{F}_{q,n}A_n\mathcal{F}^{-1}_{q,n} \hat f)(k,j)
    \\
    &=
    \tfrac{1}{|\T^2_{q,n}|}\sum_{x\in\Z^2_q\cap\Lambda_{k_n}}\!\!
    \e^{\i k x} \e^{\i k_2 j} \sum_{y\in\Lambda_{k_n}} A_n([x+(0,j)]^n,y) \sum_{t\in\T^2_{q,n}} \e^{-\i t y} \hat f_{[y]_q}(t)
    \\
    &=
    \tfrac{1}{|\T^2_{q,n}|}\sum_{x\in\Z^2_q\cap\Lambda_{k_n}}\!\!
    \e^{\i k x}\e^{\i k_2 j} \sum_{y\in\Lambda_{k_n}} \sum_{w\in(L\Z)^2} a_{j[y]_q}(x-y_q-w)\sum_{t\in\T^2_{q,n}} \e^{-\i t y} \hat f_{[y]_q}(t)
    \\
    &=
    \tfrac{1}{|\T^2_{q,n}|}\sum_{x\in\Z^2_q\cap\Lambda_{k_n}}\!\!
    \e^{\i k x} \e^{\i k_2 j}\!\! \sum_{y\in\Z^2_q\cap \Lambda_{k_n}  }\sum_{l\in I_q} \sum_{w\in(L\Z)^2} a_{jl}(x-y-w)\sum_{t\in\T^2_{q,n}} \e^{-\i t (y+(0,l))} \hat f_{l}(t)
    \\
    &=
    \tfrac{1}{|\T^2_{q,n}|}\sum_{x\in\Z^2_q\cap\Lambda_{k_n}}\!\!
    \e^{\i k x} \e^{\i k_2 j}\!\! \sum_{y\in\Z^2_q\cap\Lambda_{k_n}} \sum_{l\in I_q}   \sum_{w\in(L\Z)^2}a_{jl}(x-(y+w))\sum_{t\in\T^2_{q,n}} \e^{-\i t (y+w+(0,l)) } \hat f_l(t)
    \\ 
    &=
    \tfrac{1}{|\T^2_{q,n}|}\sum_{x\in\Z^2_q\cap\Lambda_{k_n}}\!\!
    \e^{\i k x} \e^{\i k_2 j} \sum_{y\in\Z^2_q} \sum_{l\in I_q}   a_{jl}(x-y)\sum_{t\in\T^2_{q,n}} \e^{-\i t (y+(0,l)) } \hat f_l(t)
    \\
    &=
    \tfrac{1}{|\T^2_{q,n}|}\sum_{t\in\T^2_{q,n}} \sum_{z\in\Z^2_q} 
    \underbrace{\sum_{x\in\Z^2_q\cap\Lambda_{k_n}}\e^{\i  x(k-t)}}_{=|\T^2_{q,n}|\delta(k-t)}
    \e^{\i (k_2 j-t_2l)} \sum_{l\in I_q}   a_{jl}(z)\e^{ \i t  z } \hat f_l(t)
    \\
    &=
    \sum_{l\in I_q} \sum_{z\in\Z^2_q}   a_{jl}(z)\e^{ \i k  (z+(0,j-l)) } \hat f_l(k)
    \\
    &=
    \sum_{l\in I_q} \hat a_{jl}(k)\,\hat f_l(k)\,.
\end{align*}
So the restriction acts fiber-wise exactly as the infinite volume operator.

For self-adjoint $\Z^2_q$-periodic operators $A$ that implies, in particular, that the restriction $(f(A))_n$ of any bounded Borel function $f$ of $A$ to $\Lambda_{k_n}$ equals $f(A_n)$. In particular,
we have \eqref{eq:fermi_restriction}. 

\subsection{Convergence of states and of cocycles} \label{sec:convergences}

\subsubsection{Proof of Lemma \ref{lem:finite_vol_approx}} \label{proof:finite_vol_approx}
    We define the non-periodic finite volume approximation of $\Phi^t$ as the interaction given by
    \begin{align*}
        \Phi^{t,k,\mathrm{np}}(M) \coloneq
        \left\{\begin{array}{cl}
        T_{\g}\E_{\La_{0}}\Phi^{t,k}_0 & \mbox{if  $M=\La_0 +\g$ for some $\g\in \Lambda_k$}\\
        T_{\g}(\E_{\La_{l}}\Phi^{t,k}_0 - \E_{\La_{l-1}}\Phi^{t,k}_0)&
        \mbox{if  $M=\La_l +\g$ for some $\g\in\Lambda_k$ and $l\in\N$}\\
        0 & \mbox{otherwise}
        \end{array}
        \right
        .
    \end{align*}
    and denote its generated cocycle of automorphisms by $(\alpha_{s,t}^{k,\mathrm{np}})_{s,t\in I}$. Similarly we define a coarse grained version of $\Phi^t$ by
    \begin{align*}
        \Phi^{t,\mathrm{cg}}(M) \coloneq
        \left\{\begin{array}{cl}
        T_{\g}\E_{\La_{0}}\Phi^{t}_0 & 
        \mbox{if  $M=\La_0 +\g$ for some $\g\in \Z^2$}\\
        T_{\g}(\E_{\La_{l}}\Phi^{t}_0 - \E_{\La_{l-1}}\Phi^{t}_0)&
        \mbox{if  $M=\La_l +\g$ for some $\g\in \Z^2$ and $l\in\N$}\\
        0 & \mbox{otherwise}
        \end{array}
        \right
        .
    \end{align*}
    and note that it induces the same derivation on $\mA$ as $\Phi^t$ and therefore generates the same cocycle of automorphisms $(\alpha_{s,t})_{s,t\in I}$. A straightforward computation shows that for  $\nu \in \N_0$ there is a constant $c_\nu$ such that for all $k\in \N_0$ we have $\lVert \Phi^{t,k,\mathrm{np}} \rVert_\nu \leq c_\nu\,\lVert \Phi_0^{t,k} \rVert_{\nu+4}$ and $\lVert \Phi^{t,\mathrm{cg}} \rVert_\nu \leq c_\nu\,\lVert \Phi_0^{t} \rVert_{\nu+4}$, which gives us a bound for the interaction norms that is uniform in $k$.
    Now by  \cite[Lemma 2.10]{BTW26automorphic}, we have that for each $\nu \in \N_0$ there exists an increasing function $f_\nu \colon [0,\infty) \to (0,\infty)$ that grows at most polynomially at infinity and satisfies
    \begin{align*}
        \norm{\alpha_{s,t} \, A}_\nu \leq f_\nu(\abs{t-s}) \, \norm{A}_\nu
    \end{align*}
    and
    \begin{align*}
        \norm{\alpha^{k,\mathrm{np}}_{s,t} \, A}_\nu \leq f_\nu(\abs{t-s}) \, \norm{A}_\nu
    \end{align*}
    for all $k\in \N_0$, $s,t\in I$ and $A\in \mA_\infty$.
    
    Let $k\in \N_0$, $s,t\in I$ and $A\in \mA_\infty$. By the locality of $(\alpha_{u,v})_{u,v\in I}$ we find 
    \[
        \norm{\alpha_{s,t}(A- \E_{\Lambda_k}A)}_\nu
        \leq f_\nu(|s-t|) \norm{A- \E_{\Lambda_k}A}_\nu 
        \,\leq\, 
        2\,f_\nu(|s-t|)  \frac{\norm{A}_{\nu+1}}{(1+k)}
    \]
    which shows that we can assume   $A=\E_{\Lambda_k}A$ for the rest of the proof.
     
    We now proceed by adding and subtracting the non-periodic approximation in the expression and bounding the two resulting terms separately:
    \begin{align*}
        \norm{\alpha_{s,t}\, A - \alpha_{s,t}^{k} \, A}_\nu
        \leq
        \norm{\alpha_{s,t}\, A - \alpha_{s,t}^{k,\mathrm{np}} \, A}_\nu 
        +
        \norm{\alpha_{s,t}^{k,\mathrm{np}}\, A - \alpha_{s,t}^{k} \, A}_\nu
        .
    \end{align*}
    In the second term $\alpha_{s,t}^{k,\mathrm{np}}\, A - \alpha_{s,t}^{k} \, A$ lies in $\mA_{\Lambda_{2k}}$ and therefore 
    \begin{align*}
        \norm{\alpha_{s,t}^{k,\mathrm{np}}\, A - \alpha_{s,t}^{k} \, A}_\nu
        \leq
        3\,(1+2k)^\nu \, \norm{\alpha_{s,t}^{k,\mathrm{np}}\, A - \alpha_{s,t}^{k} \, A}
        .
    \end{align*}
    Now the fundamental theorem of calculus implies
    \begin{align*}
        \hspace{2em}&\hspace{-2em}
        \norm{\alpha_{s,t}^{k,\mathrm{np}}\, A - \alpha_{s,t}^{k} \, A}_\nu
        \\
        &\leq
        3\,(1+2k)^\nu \int_{s\land t}^{s\lor t} \dd u \, \sum_{\gamma \in \Lambda_k} \norm{[(T_\gamma - T_\gamma|_{\Lambda_k})\,  \Phi_0^{u,k}, \, \alpha_{u,t}^{k,\mathrm{np}} \, A] }
        \\
        &\leq 
        3\,(1+2k)^\nu \int_{s\land t}^{s\lor t} \dd u \, \sum_{\gamma \in \Lambda_k} \norm{[(T_\gamma - T_\gamma|_{\Lambda_k})\, \E_{\Lambda_{k/3}} \, \Phi_0^{u,k}, \, \alpha_{u,t}^{k,\mathrm{np}} \, A] }
        \\
        &\hspace{2em}+ 
        3\,(1+2k)^\nu \int_{s\land t}^{s\lor t} \dd u \, \sum_{\gamma \in \Lambda_k} \norm{[(T_\gamma - T_\gamma|_{\Lambda_k})\, (1 - \E_{\Lambda_{k/3}}) \, \Phi_0^{u,k}, \, \alpha_{u,t}^{k,\mathrm{np}} \, A] }
        \\
        & \eqcolon S_1 + S_2
    \end{align*}
    For $S_1$ we find
    \begin{align*}
        S_1
        &\leq 
        3\,(1+2k)^\nu \int_{s\land t}^{s\lor t} \dd u \, \sum_{\gamma \in \Lambda_{2k/3}} \norm{[(T_\gamma - T_\gamma|_{\Lambda_k})\, \E_{\Lambda_{k/3}} \, \Phi_0^{u,k}, \, \alpha_{u,t}^{k,\mathrm{np}} \, A] }
        \\
        &\hspace{1em}+
        3\,(1+2k)^\nu \int_{s\land t}^{s\lor t} \dd u \, \sum_{\gamma \in \Lambda_k \setminus \Lambda_{2k/3}} \norm{[(T_\gamma - T_\gamma|_{\Lambda_k})\, \E_{\Lambda_{k/3}} \, \Phi_0^{u,k}, \, \E_{\Lambda_{k/3}} \, \alpha_{u,t}^{k,\mathrm{np}} \, A] }
        \\
        &\hspace{1em}+
        3\,(1+2k)^\nu \int_{s\land t}^{s\lor t} \dd u \, \sum_{\gamma \in \Lambda_k \setminus \Lambda_{2k/3}} \norm{[(T_\gamma - T_\gamma|_{\Lambda_k})\, \E_{\Lambda_{k/3}} \, \Phi_0^{u,k}, \, (1-\E_{\Lambda_{k/3}}) \, \alpha_{u,t}^{k,\mathrm{np}} \, A] }
        ,
    \end{align*}
    where the first term vanishes, since the two translations act identically and the second term vanishes, since the operators in the commutator are supported on disjoint sets with $\Phi_0^{u,k}\in \mA^N$. For the remaining part we find
    \begin{align*}
        S_1
        &\leq
        3\,(1+2k)^\nu \int_{s\land t}^{s\lor t} \dd u \, \sum_{\gamma \in \Lambda_k \setminus \Lambda_{2k/3}} 4\,\norm{\Phi_0^{u,k} } \, \norm{ (1-\E_{\Lambda_{k/3}}) \, \alpha_{u,t}^{k,\mathrm{np}} \, A }
        \\
        &\leq
        3\,(1+2k)^\nu \int_{s\land t}^{s\lor t} \dd u \, \sum_{\gamma \in \Lambda_k \setminus \Lambda_{2k/3}} 4\,\norm{\Phi_0^{u,k} } \, \frac{\norm{\alpha_{u,t}^{k,\mathrm{np}} \, A}_{\nu+3}}{(1+k/3)^{\nu+3}}
        \\
        &\leq
         \frac{12\,(1+2k)^{\nu+2}}{(1+k/3)^{\nu+3}} \, \abs{t-s}\, \sup_{u\in [s \land t, s\lor t]} \norm{\Phi_0^{u,k} } \, f_{\nu+3}(\abs{t-u}) \,\norm{ A}_{\nu+3} , 
    \end{align*}
    which is a bound of the required form. Similarly, $S_2$ can be bounded as follows
    \begin{align*}
        S_2 
        &\leq
        12\,(1+2k)^{\nu+2} \int_{s\land t}^{s\lor t} \dd u \,   \norm{(1 - \E_{\Lambda_{k/3}}) \, \Phi_0^{u,k}} \, \norm{A }
        \\
        &\leq
        \frac{12\,(1+2k)^{\nu+2}}{(1+k/3)^{\nu + 3}} \, \abs{t-s}\,\sup_{u\in [s \land t, s\lor t]} \norm{\Phi_0^{u,k}}_{\nu+3} \, \norm{A}.
    \end{align*}
    It remains to show that $\norm{\alpha_{s,t}\, A - \alpha_{s,t}^{k,\mathrm{np}} \, A}_\nu$ admits the same kind of bound. Again, the fundamental theorem of calculus  yields
    \begin{align*}
    \norm{\alpha_{s,t} \, A - \alpha_{s,t}^{k, \mathrm{np}} \, A }_{\nu} 
        &\leq 
        \int_{s\land t}^{s\lor t} \dd u \, \norm{\alpha_{s,u} \,  (\i\mL_{\Phi^u} - \i\mL_{\Phi^{u,k,\mathrm{np}}}) \, \alpha_{u,t}^{k,\mathrm{np}} \, A }_\nu
        \\
        &\leq
        \int_{s\land t}^{s\lor t} \dd u \, f_\nu(\abs{u-s}) \, \norm{  (\i \mL_{\Phi^u} - \i \mL_{\Phi^{u,k,\mathrm{np}}}) \, \alpha_{u,t}^{k,\mathrm{np}} \, A }_\nu \numberthis \label{eq:comparison-nonperiodic}
        .
    \end{align*}
    Next we write the difference of the generators as a sum, which we then split into two parts:
    \begin{align*}
        \hspace{2em}&\hspace{-2em}
        \norm{  (\i \mL_{\Phi^u} - \i \mL_{\Phi^{u,k,\mathrm{np}}}) \, \alpha_{u,t}^{k,\mathrm{np}} \, A }_\nu
        \\
        &\leq
        \sum_{\gamma \in \Z^2\setminus \Lambda_k} \norm{[T_\gamma \, \Phi_0^u, \, \alpha^{k,\mathrm{np}}_{u,t}\, A ]}_\nu 
        + \sum_{\gamma \in \Lambda_k} 
        \norm{
            [
                T_\gamma \, (\Phi_0^u - \Phi^{u,k}_0)
            ,\, 
                \alpha^{k,\mathrm{np}}_{u,t}\, A
            ]
        }_\nu
        \numberthis \label{eqn:splitSj}
        \\
        &\eqcolon
        S_3 + S_4 .
    \end{align*}
    We again estimate the two terms separately, beginning with
    \begin{align*}
        S_3 
        &\leq 
        \sum_{\gamma \in \Z^2\setminus \Lambda_k} 4^{\nu+7} \, \frac{\norm{\Phi^u_0}_{\nu+4} \, \norm{\alpha^{k,\mathrm{np}}_{u,t}\, A}_{\nu+4}}{(1+\norm{\gamma})^{4}}
        \\
        &\leq 
        4^{\nu+7}\, \frac{f_{\nu+4}(\abs{t-u}) \,   \norm{A}_{\nu+4}}{1+k} \sum_{\gamma \in \Z^2} \frac{\sup_{v\in I} \norm{\Phi^v_0}_{\nu+4}}{(1+\norm{\gamma})^3} 
        .
    \end{align*}
    Next
    \begin{align*}
        S_4 
        &\leq
        \sum_{\gamma \in  \Lambda_k} 4^{\nu+3} \, \frac{\norm{\Phi^u_0- \Phi^{u,k}_0}_{\nu+3} \, \norm{\alpha^{k,\mathrm{np}}_{u,t}\, A}_{\nu+3}}{(1+\norm{\gamma})^{3}}
        \\
        &\leq
        f_{\nu+3}(\abs{t-u}) \,\norm{A}_{\nu+3} \sum_{\gamma \in  \Z^2} 4^{\nu+3} \, \frac{\sup_{v\in I}\norm{\Phi^v_0- \Phi^{v,k}_0}_{\nu+3}  }{(1+\norm{\gamma})^{3}}
        ,
    \end{align*}
    where the sum goes to $0$ as $k \to \infty$ due to dominated convergence. After estimating the integral in \eqref{eq:comparison-nonperiodic} with a supremum, this gives us the desired bound on $\norm{\alpha_{s,t} \, A - \alpha_{s,t}^{k} \, A }_{\nu}$.

    To show the bound for $\norm{(\i\mL_{\Phi^t} - \i\mL_{\Phi^{t,k}})\, A}_\nu$ we also insert the non-periodic approximation 
    \begin{align*}
        \norm{(\i\mL_{\Phi^t} - \i\mL_{\Phi^{t,k}})\, A}_\nu
        &\leq
        \norm{(\i\mL_{\Phi^t} - \i\mL_{\Phi^{t,k,\mathrm{np}}})\, A}_\nu + \norm{(\i\mL_{\Phi^{t,k,\mathrm{np}}} - \i\mL_{\Phi^{t,k}})\, A}_\nu
    \end{align*}
    and follow the same steps as above with $s=t$.
    \qed

\subsubsection{Proof of Lemma \ref{lem: convergence of cocycles}}\label{proof: convergence of cocycles}
    As in the proof of Lemma \ref{lem:finite_vol_approx}, for each $n\in \N_0\cup \{\infty\}$ we can define the coarse grained interaction 
    \begin{align*}
        \Phi^{t,n,\mathrm{cg}}(M) \coloneq
        \left\{\begin{array}{cl}
        T^n_{\g}\E_{\La_{0}}\Phi^{t,n}_0 & 
        \mbox{if  $M=\La_0 +\g$ for some $\g\in \Z^2$}\\
        T^n_{\g}(\E_{\La_{l}}\Phi^{t,n}_0 - \E_{\La_{l-1}}\Phi^{t,n}_0)&
        \mbox{if  $M=\La_l +\g$ for some $\g\in \Z^2$ and $l\in\N$}\\
        0 & \mbox{otherwise}
        \end{array}
        \right
        .
    \end{align*}
    which generates the same cocycle of automorphisms as $(\Phi^{t,n})_{t\in I}$ and satisfies $\norm{\Phi^{t,n,\mathrm{cg}}}_\nu \leq c\, \norm{\Phi^{t,n}_0}_{\nu+4}$ for all $\nu\in \N_0$ with some constant $c_\nu>0$ that is independent of $n$.
    Due to \cite[Lemma 2.10]{BTW26automorphic}, this implies that for each $\nu \in \N_0$ there exists an increasing function $f_\nu \colon [0,\infty) \to (0,\infty)$ that grows at most polynomially at infinity and satisfies
    \begin{align*}
        \norm{ \alpha_{(s,t),n} \, A }_\nu \leq f_\nu(\abs{t-s})\, \norm{ A }_\nu 
    \end{align*}
    for all $s,t\in I$, $A\in \mA_\infty$ and $n \in \N_0 \cup \{\infty\}$. Now let $n,\nu\in \N_0$, $s,t\in I$ and $A\in \mA_\infty$. By the fundamental theorem of calculus we find
    \begin{align*}
        \hspace{1em}&\hspace{-1em}
        \lVert \alpha_{(s,t),n} \, A - \alpha_{(s,t),\infty} \, A \rVert_\nu 
        \\
        &\leq 
        \lVert \int_{s }^{t} \dd u \,  \alpha_{(s,u),n} \, (\i\mL_{\Phi^{u,n}} - \i\mL_{\Phi^{u,\infty}}) \, \alpha_{(u,t),\infty} \, A \rVert_\nu 
        \\
        &\leq
        \lvert t- s \rvert \, f_\nu(\abs{t-s}) \, \sup_{u\in [s \land t, s\lor t]} \lVert (\mL_{\Phi^{u,n}} - \mL_{\Phi^{u,\infty}}) \, \alpha_{(u,t),\infty} \, A \rVert_\nu
        \\
        &\leq
        \abs{t- s} \, f_\nu(\abs{t-s})   \sum_{\g \in \Z^2} \sup_{u\in [s \land t, s\lor t]} \lVert[\, T^n_\gamma \, \Phi^{u,n}_0 - T_\g^\infty \,  \Phi^{u,\infty}_0 , \, \alpha_{(u,t),\infty} \, A\, ] \rVert_\nu
        \\
        &\leq
        4^{\nu+6}\, \abs{t- s} \, f_\nu(\abs{t-s}) \sum_{\g \in \Z^2} \sup_{u\in [s \land t, s\lor t]}\!\! \frac{\lVert \Phi^{u,n}_0 - (T_\g^n)^{-1}\,T_\g^\infty \,  \Phi^{u,\infty}_0 \rVert_{\nu+3} \, \lVert \alpha_{(u,t),\infty} \, A \rVert_{\nu+3}}{(1+\lVert \g \rVert)^3}
        \\
        &\leq
        4^{\nu+6}\, \abs{t- s} \, f_\nu(\abs{t-s}) \, f_{\nu+3}(\abs{t-s}) \,  \lVert  \, A \rVert_{\nu+3} \!\sum_{\g \in \Z^2}  \sup_{u\in I}\frac{ \lVert \Phi^{u,n}_0 - (T_\g^n)^{-1}\,T_\g^\infty \,  \Phi^{u,\infty}_0 \rVert_{\nu+3} }{(1+\lVert \g \rVert)^3}
        .
    \end{align*}
    The sum converges to zero as $n\to \infty$ due to dominated convergence, where we obtain a summable upper bound from 
    \begin{align*}
        \hspace{2em}&\hspace{-2em}
       \sup_{n\in\N}  \sup_{\gamma\in\Z^2}\sup_{u\in I}\,\norm{\Phi^{u,n}_0 -(T_\g^n)^{-1}\,T_\g^\infty\, \Phi^{u,\infty}_0}_{\nu+3}
       \\
       &\leq
       \sup_{n\in\N} \sup_{u\in I}\,\norm{\Phi^{u,n}_0}_{\nu+3} + \sup_{u\in I}\,\norm{\Phi^{u,\infty}_0}_{\nu+3} <\infty\,.
    \end{align*}
    Here we use that $\Phi_0^{u,n}\stackrel{n\to\infty}{\rightarrow}\Phi_0^{u,\infty}$ in $\mA_{\nu+3}$
     and the fact that $\norm{(T_\g^n)^{-1}\,T_\g^\infty \, \Phi^{u,\infty}_0}_\nu = \norm{\Phi^{u,\infty}_0}_\nu$. The latter statement follows from the fact that the automorphism $(T_\g^n)^{-1}\,T_\g^\infty$ does not change the support of operators, which allows us to conclude $ \E_M \, (T_\g^n)^{-1}\,T_\g^\infty = (T_\g^n)^{-1}\,T_\g^\infty \, \E_M$ for all $M\in P_0(\Z^2)\,$  from the uniqueness in Proposition \ref{Ex+UniqueExpectation}.
    
    The bound 
    \begin{align*}
        \sup_{t\in I} \sup_{A\in \mA_\infty\setminus\{0\}} \frac{\norm{\mL_{\Phi^{t,n}} A - \mL_{\Phi^t} A }_{\nu}}{\norm{A}_{\nu+3}} 
        \leq
        4^{\nu+6}\,  \sum_{\g \in \Z^2}  \frac{\sup_{u\in I} \lVert \Phi^{u,n}_0 - (T_\g^n)^{-1}\,T_\g^\infty \,  \Phi^{u,\infty}_0 \rVert_{\nu+3} }{(1+\lVert \g \rVert)^3}
        ,
    \end{align*}
    which can be obtained by following the same steps and together the two bounds imply the claimed statements.
\qed

\subsubsection{Proof of Lemma \ref{lem: translating back and forth}}\label{proof:translating back and forth}
    Since automorphisms on the CAR algebra are determined by their action on the annihilation operators it suffices to show the identity only for them. Let $y\in \Z^2$ and $j\in \{1,\dots, \fd\}$. For the left-hand side we find
    \begin{align*}
        (T^b_\g)^{-1}\, T_\g^{b_0} \, a_{y,j}
        &= (T^b_\g)^{-1}\,\e^{-\i b_0 y_1 \g_2}\, a_{y+\g,j}
        \\
        &= \e^{-\i b_0 y_1 \g_2} \, \e^{\i b (y+\g-\g)_1 \g_2}\, a_{y,j} 
        = \e^{\i (b-b_0) y_1 \g_2} \, a_{y,j}
        .
    \end{align*}
    For the right-hand side one can now compare with the above value
    \begin{align*}
        \hspace{2em}&\hspace{-2em}
        \left\| \left( \e^{-\i (b-b_0) \g_2 \mL_{X_1}}  - \e^{\i (b-b_0) y_1 \g_2} \right)  a_{y,j}\right\|
        \\
        &\leq
        \int_{b \land b_0}^{ b \lor b_0} \dd s   \, \left\| \e^{-\i (s-b_0) \g_2 \mL_{X_1}} \, (-\i\, \g_2 \, \mL_{X_1} - \i  \, y_1 \, \g_2 ) \, \e^{\i (b-s) y_1 \g_2} \, a_{y,j}\right\|
        \\
        &\leq
        \abs{b-b_0}   \, \abs{\g_2} \, \norm{  \mL_{X_1}\, a_{y,j} +  y_1 \,  a_{y,j}} = 0
        .
    \end{align*}
    To show the bound, we first note that, since $X_1$ is an on-site interaction, its generated group of automorphisms $s \mapsto \e^{\i s  \mL_{X_1}}$ does not change the support of operators. Hence, for all $M\in P_0(\Z^2)$ and $s\in \R$ it holds that $\E_M \, \e^{\i s  \mL_{X_1}} = \e^{\i s  \mL_{X_1}} \, \E_M$, and therefore, for all $\nu \in \N_0$ and $A \in  \mA_\infty$, one finds $\norm{\e^{\i s  \mL_{X_1}}\, A}_\nu = \norm{A}_\nu$. Now a simple Duhamel argument yields
    \begin{align*}
        \norm{\e^{-\i (b-b_0) \gamma_2 \mL_{X_1}} \, A -A}_\nu 
        &\leq
        \int_{b\land b_0}^{b \lor b_0} \dd s \,  \norm{\e^{-\i (s-b_0) \gamma_2\mL_{X_1}} \, (-\i) \, \g_2\, \mL_{X_1}\, A }_\nu
        \\
        &\leq
        \abs{b-b_0}\, \abs{\g_2}\,  \norm{\mL_{X_1}\, A }_\nu
        \\
        &\leq \abs{b-b_0}\, \abs{\g_2}\, \sum_{x\in \Z^2} 4^{7+\nu} \, \frac{\abs{x_1} \, \norm{n_0}_{\nu+4} \,\norm{A}_{\nu+4}}{(1+\norm{x})^4}
        ,
    \end{align*}
    where we used \cite[Lemma A.1]{WMMMT24} in the last step.
\qed

\printbibliography

\end{document}